\title{Subgame Optimal and Prior-Independent Online Algorithms}
\author{Jason Hartline \thanks{Department of Computer Science, Northwestern University. Email:hartline@northwestern.edu.} \and Aleck Johnsen \thanks{Geminus Research. Email:aleck@geminusresearch.com.} \and Anant Shah \thanks{Department of Computer Science, Northwestern University. Email:anantshah2026@u.northwestern.edu.}}
\date{}
\def\1{\mathbbm{1}}
\DeclareMathOperator*{\argmax}{arg\,max}
\DeclareMathOperator*{\argmin}{arg\,min}
\newtheorem{theorem}{Theorem}[section]
\newtheorem{definition}{Definition}
\newtheorem{lemma}[theorem]{Lemma}
\newtheorem{fact}[theorem]{Fact}
\newtheorem{corollary}[theorem]{Corollary}
\newtheorem{property}{Property}
\newcommand{\twopartdef}[4]
{
	\left\{
		\begin{array}{ll}
			#1 & \mbox{if } #2 \\
			#3 & \mbox{if } #4
		\end{array}
	\right.
}
\newcommand{\threepartdef}[6]
{
	\left\{
		\begin{array}{lll}
			#1 & \quad & \mbox{if } #2 \\
			#3 & \quad & \mbox{if } #4 \\
                #5 & \quad & \mbox{if } #6
		\end{array}
	\right.
}
\newcommand*{\encircled}[1]{\relax\ifmmode\mathpalette\@encircled@math{#1}\else\@encircled{#1}\fi}
\newcommand*{\@encircled@math}[2]{\@encircled{$\m@th#1#2$}}
\newcommand*{\@encircled}[1]{%
  \tikz[baseline,anchor=base]{\node[draw,circle,outer sep=0pt,inner sep=.2ex] {#1};}}
\newcommand{\floor}[1] {\left \lfloor #1 \right \rfloor}
\newcommand{\ceil}[1]{\left \lceil #1 \right \rceil}
\renewcommand{\qedhere}{}
\newcommand{\pidistclass}{\mathcal{F}}
\newcommand{\pialgclass}{\mathcal{A}}
\newcommand{\skidistclass}{\mathcal{P}}
\newcommand{\piinputclass}{\mathcal{X}}
\newcommand{\pidistclassepsgrid}{\pidistclass_{\epsilon}}
\newcommand{\skidistclassepsgrid}{\skidistclass_{\epsilon}}
\newcommand{\OPT}{\text{OPT}}
\newcommand{\pidist}{F}
\newcommand{\pidisteps}{\pidist_{\epsilon}}
\newcommand{\skidist}{p}
\newcommand{\pidistoverdist}{\pidist'}
\newcommand{\pidistoverdisteps}{\pidist_{\epsilon}'}
\newcommand{\pidistclassrem}[1]{\mathcal{F}_{#1}}
\newcommand{\pidistepsindex}{k}
\newcommand{\pidistoverdistvar}{F^{*}}
\newcommand{\pialg}{A}
\newcommand{\pirandalg}{\pialg'}
\newcommand{\pirandalgepseq}{\pialg^{*}_{\epsilon}}
\newcommand{\pidistoverdistepseq}{\pidist^{*}_{\epsilon}}
\newcommand{\pialgbestresponse}{\pialg_{\pidistoverdist}}
\newcommand{\pirandalgvar}{\pialg^{*}}
\newcommand{\eqrandalg}{\pialg''}
\newcommand{\bayopt}{\OPT_{\pidist}}
\newcommand{\bayopteps}{\OPT_{\pidisteps}}
\newcommand{\piopt}{\OPT_{\pidistclass}}
\newcommand{\skibayopt}{\OPT_{\skidist}}
\newcommand{\skibayoptdistoverdistcomp}[1]{(\OPT_{\skidistoverdist})_{#1}}
\newcommand{\skibayoptspec}[1]{\OPT_{\skidist_{#1}}}
\newcommand{\piinput}{\textbf{X}}
\newcommand{\piinputmod}{\textbf{X}'}
\newcommand{\piinputsubseq}[1]{\textbf{X}^{#1}}
\newcommand{\inplen}{T}
\newcommand{\inpelement}[1]{X_{#1}}
\newcommand{\inpelementmod}[1]{X_{#1}'}
\newcommand{\timeindex}{i}
\newcommand{\subtimeindex}{j}
\newcommand{\piratio}{\beta}
\newcommand{\skidisteps}{\skidist_{\epsilon}}
\newcommand{\skigrid}[1]{G_{#1}}
\newcommand{\seqkrmod}{\mathbf{X'}^{k}_{r}}
\newcommand{\util}{u}
\newcommand{\maxutil}{\Bar{u}}
\newcommand{\epscoverscaling}{\zeta}
\newcommand{\numactions}{m}
\newcommand{\timestep}{t}
\newcommand{\timehorizon}{T}
\newcommand{\subtimestep}{t'}
\newcommand{\regret}{R}
\newcommand{\probdist}{P}
\newcommand{\avgprobdist}{\Bar{\pidistoverdist}}
\newcommand{\avgalgdist}{\Bar{\pialg}}
\DeclareMathOperator{\support}{supp}
\newcommand{\supp}[1]{\support(#1)}
\newcommand{\alg}[1]{\pialg_{#1}}
\newcommand{\stopcost}{B}
\newcommand{\skidistoverdist}{P}
\newcommand{\stp}{S}
\newcommand{\continue}{C}
\newcommand{\pdf}{f}
\newcommand{\contcost}[2]{c^{\skidist}_{#1}(#2)} 
\newcommand{\contcostp}[3]{c^{\skidist}(#1,#2,#3)} 
\newcommand{\suppsize}[1]{|\supp{#1}|}
\newcommand{\decompfunc}{g_{k}}
\newcommand{\pimodalg}{A'}
\DeclareMathOperator{\poly}{poly}
\begin{document}

\maketitle

\begin{abstract}

This paper takes a game theoretic approach to the design and analysis of online algorithms and illustrates the approach on the finite-horizon ski-rental problem. This approach allows beyond worst-case analysis of online algorithms. First, we define “subgame optimality” which is stronger than worst case optimality in that it requires the algorithm to take advantage of an adversary not playing a worst case input. Algorithms only focusing on the worst case can be far from subgame optimal. Second, we consider prior-independent design and analysis of online algorithms, where rather than choosing a worst case input, the adversary chooses a worst case independent and identical distribution over inputs. Prior-independent online algorithms are generally analytically intractable; instead we give a fully polynomial time approximation scheme to compute them. Highlighting the potential improvement from these paradigms for the finite-horizon ski-rental problem, we empirically compare worst-case, subgame optimal, and prior-independent algorithms in the prior-independent framework. 
\end{abstract}

\section{Introduction}
\label{s:intro}

We consider online algorithm design, an area where traditional analysis has looked at the worst case competitive ratio between the performance of the online algorithm (which processes components of the input one at a time and does not know the full input in advance) and the performance of the optimal offline algorithm (which knows the input sequence in advance). We develop two notions of beyond worst-case analysis that give better than worst-case guarantees on natural inputs \citep[cf.][]{Roughgarden20}.  On non-worst-case inputs there is slack between the worst-case bound and the best performance of an algorithm -- {\em subgame optimality} requires the algorithm to perform optimally with regard to this slack.  This is a stronger notion than the usual worst case notion. Another way to model inputs is to assume that their components are independent and identically distributed -- {\em prior-independent optimality} looks at worst-case algorithms over such distributions. This is weaker than the usual worst-case notion.  The comparison of these three approaches (including worst-case) gives more perspective on online algorithms beyond the worst case.

We define {\em subgame optimality} for online algorithms as a stronger requirement than that of traditional competitive analysis. Like competitive analysis, the quantity of interest is the ratio of the algorithm's cost to the optimal offline cost. However, we additionally require that the algorithm optimize this ratio in the subgame induced by any history of its past decisions and adversary inputs. In our setting, the adversary is \textit{non-adaptive}. \footnote{A non-adaptive adversary must choose the input sequence in advance without any knowledge of the online decisions taken by the algorithm.} For any subgame that is not worst-case, the algorithm achieves a better ratio. Every subgame optimal algorithm is also worst-case optimal, but the opposite is not generally true and in fact there can be a very large gap between a subgame optimal algorithm and (a non-subgame optimal) worst-case optimal algorithm for a fixed non-worst-case input (see \Cref{s:sub-gameoptalgs}). 

We define {\em prior-independent optimality} for online algorithms as a competitive analysis with stronger informational assumptions. Prior-independent analysis is a robust framework that gives guarantees in worst case over a class of distributions of inputs. In our analysis, as is common in the literature, the distributions are such that distinct coordinates of the input are generated independently and identically. The objective is to minimize the ratio of costs between an online algorithm that does not know from which distribution in the class the input is drawn to the optimal online algorithm that does know the input distribution. \citet{HR-08} initiated the prior-independent setting within mechanism design. Recent papers have adapted the framework to other domains like online
learning (discussed with related work).  

Analysis of optimal prior-independent algorithms, generally, has not led to closed form characterizations. The way forward considered by this paper is to identify conditions on the online algorithms problem which admit the existence of efficient algorithms for computing a near-optimal prior-independent algorithm. Properties of robust algorithms could be identified using these computational methods. Moreover, it focuses attention on understanding which prior-independent algorithm design problems are computationally tractable.

Our analysis views a robust online algorithm design problem generally as a two-player zero-sum game between an {\em algorithm designer} and an {\em adversary}. This perspective enables the application of the following known tools from game theory in the context of online algorithms:
\begin{itemize}
\item Online optimization is a sequential game and the sequential equilibrium concept of {\em subgame optimality} requires the algorithm to play perfectly even on non-worst-case inputs.\footnote{In the literature ``subgame perfect equilibria'' are ones where all players play subgame optimal strategies.}

\item Equilibria in two-player zero-sum games are efficiently computable via online learning or the ellipsoid method even when one player has a very large action space (\citealp{HLP-19}).

\item All equilibria of a two-player zero-sum game have the same value (payoff to one player), and this value is guaranteed for even sub-optimal strategies of the opposing player. In mixed equilibria, the payoff of any action taken with positive probability is the same. Equilibria under restricted action spaces can be lifted to equilibria with unrestricted action spaces when general actions offer no benefit to either player.
\end{itemize}

We apply the concepts of subgame optimality and prior-independence to the finite-horizon ski-rental problem. In this problem, the algorithm observes a sequence of days, each having good or bad weather. Every day after observing the weather, it makes a decision of whether to continue or stop (a.k.a., rent or buy). If it stops, it suffers a final cost of $\stopcost$, whereas if it continues, it suffers a cost of one if the weather was good, while it suffers a cost of zero if the weather was bad. The optimal offline algorithm either pays $B$ on the first day, or always rents and pays $1$ for each good-weather day, whichever is cheaper. The
challenge to the online algorithm comes from the fact that it does not know the number of good weather days in advance.

\subsection{Results and Technical Contributions}
\label{ss:resultsintro}

\paragraph{Subgame Optimal Online Algorithms.}

First, we characterize an optimal worst-case algorithm in the finite-horizon ski-rental problem. The time horizon plays a crucial role in the decisions taken. For a large time horizon, specifically $T \gtrsim 2B$, the optimal strategy for the infinite-horizon ski-rental problem (\citealp{KMMO-94}) is optimal. The optimal worst-case algorithm is obtained by solving for equilibrium in a game that ignores the bad weather days. Solving for equilibrium in this restricted game is tractable. Next, we characterize the subgame optimal algorithm via a reduction to the worst-case algorithm problem. Finally, we compare the performance of each of these algorithms, showing that the difference in their performances on non-worst-case inputs can be as bad as the worst-case approximation ratio. Further discussion is in \Cref{s:sub-gameoptalgs}.

\paragraph{Prior-Independent Online Algorithms.}

The paper provides a general framework to compute near-optimal
prior-independent online algorithms for settings that satisfy certain properties (listed below). Observe that the class of algorithms and distributions have infinite cardinality. We reduce the problem of computing such an algorithm
to the problem of computing an approximate Nash equilibrium
in a finite zero-sum game (\citealp{Khachiyan-79} and \citealp{FS-99}). Our technique takes the point of view of the adversary. The framework computes the prior-independent optimal algorithm by learning a worst-case distribution over inputs. The
four properties that the technique requires are as
follow:

\begin{itemize}
 \item (Small-Cover) There exists an $\epsilon$-cover of the class of distributions with small size such that for every fixed algorithm, the elements of the cover approximate the adversary's objective in an $\epsilon$-ball around them. 
\item (Efficient Best Response) For any distribution over
  the cover of distributions that the adversary plays, the optimization problem of minimizing the ratio between an algorithm's cost and the
  Bayesian optimal cost can be solved efficiently.
    \item (Efficient Utility Computation) For the computed optimal algorithm; for each distribution in the cover of distributions, the ratio of its expected performance to that of the optimal algorithm which knows the distribution can be computed efficiently.
    \item (Bounded Best Response Utility) For the computed optimal algorithm; for each distribution in the cover of distributions, the ratio of the algorithms expected performance to that of the optimal algorithm which knows the distribution is bounded and is polynomial in the input size.
\end{itemize}

\noindent
Our prior-independent result states that if these properties are satisfied, then a near-optimal prior-independent algorithm can be computed
efficiently.

We apply the above computational framework to the finite-horizon ski-rental problem. In this problem, each day is good weather with probability $p \in [0,1]$, independent over days. We show that the four properties of our framework are satisfied. Further discussion is in \Cref{s:framework}.

In follow up work, \cite{guo2024algorithmic} utilize the framework we propose in this paper to efficiently compute a near-optimal prior-independent information aggregator for a binary state with conditionally independent signals. The main technical challenge solved in their work is to identify a small-cover of the class of information structures.

\paragraph{Comparisons Across Models.} 

We empirically compare the algorithms in the prior-independent framework. \Cref{fig:comp_approximationratio_intro} plots the approximation ratio of each algorithm against the worst-case distribution over inputs in the prior-independent framework. In the prior-independent framework, the subgame optimal algorithm outperforms the worst-case optimal algorithm. The comparison of the three algorithms is against the worst-case distribution over distributions. Another comparison would have been across their performance in the prior-independent framework. Such a comparison does not differentiate the worst-case and sub-game optimal algorithm. We provide sufficient conditions on the prior-independent environment for when this is the case. Further discussion is in \Cref{s:simulations}. 

\begin{figure}
    \centering
    \includegraphics[scale=0.45]{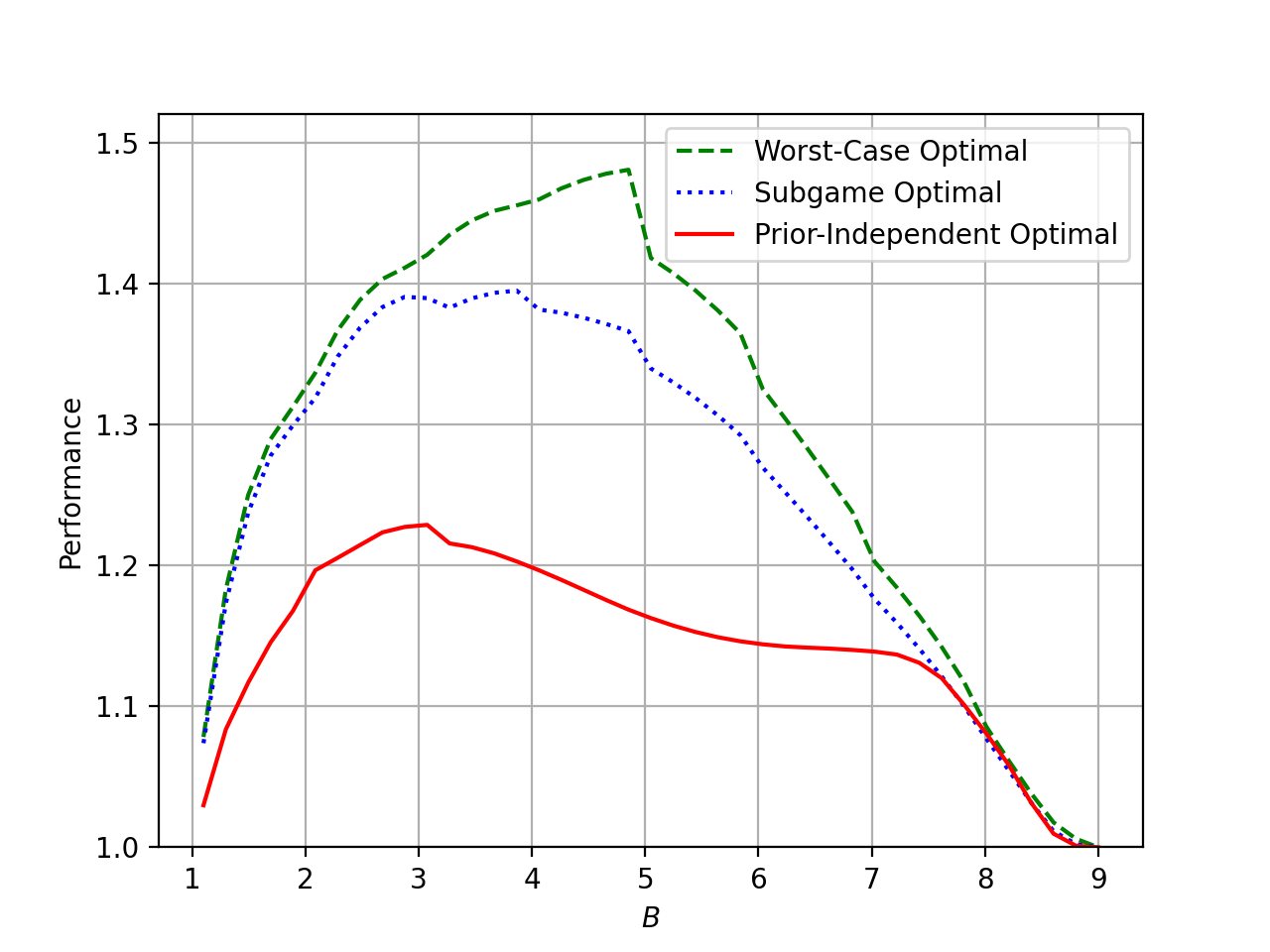}
    \caption{The time horizon is fixed at $T=9$. The \textit{stopping} cost is varied continuously from $[1,T]$. The figure plots the approximation ratio of each algorithm against the worst-case distribution of the adversary in the prior-independent framework.}
    \label{fig:comp_approximationratio_intro}
\end{figure}

\subsection{Related Work}
\label{ss:relatedwork}

Two recent works give prior-independent results that apply within online algorithm design.  \citet{HJ-21} give a general method for proving lower bounds on prior-independent algorithms, albeit their strongest results apply for the 2-input case, which corresponds to just 2 rounds in an online learning problem.
\citet{HJL-20} connect prior-independent design generally to a worst-case {\em benchmark design problem} and further, as part of their application to the problem of no-regret expert learning, prove that the follow-the-leader algorithm is prior-independent-optimal.

The prominent theme of beyond worst-case analysis considers the problem of online algorithms with
distributional information. \citet{KP-94} consider a diffuse adversary
model in which the performance of an algorithm is the worst-case
ratio, over distributions, of the expected performance of the
algorithm to the expected performance of the optimal offline
algorithm. In contrast to their setting, the benchmark in the prior-independent framework is an optimal online algorithm with full knowledge of the distribution over inputs. \citet{DEHLS-17} consider the $k$-server problem with
distributional advice. The requests at each time instant arrive from
some known sequence of independent distributions. The objective they consider is to come up with an algorithm whose
expected performance approximates the performance of the optimal
algorithm who knows the distribution. While they compete against the Bayesian optimal algorithm, their setting is not prior-independent as the approximation algorithm depends on the distribution of requests. \citet{MNS-12} consider a stochastic
setting for online algorithms. They provide a framework for general
resource allocation problems which incorporate guarantees from two
algorithms:  one that works well for specified distributional assumptions and the
other being a classical worst-case online algorithm, giving the designer a
best of both worlds guarantee.

The ski-rental problem has been looked at from a distributional viewpoint as well. \citet{MP-11} consider a two-stage game for the ski-rental problem with distributions. \citet{DKTVZ-21} consider a setting similar to ours, in which they assume that the good weather days are distributed according to a log-concave distribution. Crucially, however, the algorithm they propose depends on samples from the distributions. The problem we consider can be thought of as a zero-sample setting of \citet{DKTVZ-21}. 

Our analysis of prior-independent online algorithms can be viewed as giving a reduction from computing a near-optimal
prior-independent algorithm to computing an approximate Nash
equilibrium in an infinite-dimensional two-player zero-sum games.
Following the result of \citet{Khachiyan-79},
finding an equilibrium in zero-sum games is tractable in the size of
the payoff matrix. There is a lot of work in the literature which looks at
computing an approximate Nash equilibrium using online-learning
techniques. If both the players are employing no-regret learning
strategies (e.g., \citealp{FS-97}), then their average play converges
to a Nash equilibrium at rate $O(\frac{1}{\sqrt{T}})$. \citet{DDK-11}
give a no-regret learning algorithm which when used by both players
simultaneously, their average play converges to an approximate Nash
equilibrium at rate $O(\frac{\log T}{T})$ and in the adversarial case
achieves the $O(\frac{1}{\sqrt{T}})$ rate. 

In general, the space of distributions and the space of algorithms could potentially be infinite. Thus, the prior-independent setting is generally an infinite action two-player zero-sum game. Concurrent work of \citet{AADDF-23} computes an equilibrium in a two-player zero-sum game where each player has infinite actions with general payoffs under the assumption that each player has access to a best-response oracle. \citet{AADDF-23} motivate computing an equilibrium in infinite games from applications in machine learning. Their bounds, however, are exponential in the desired approximation; thus not directly applicable to the setting of online algorithms. 

This manuscript is presented as follows. We give a quick preliminary on robust algorithm design in \Cref{s:prelims}. Subgame optimal algorithms are defined and analyzed in \Cref{s:sub-gameoptalgs}. A general framework to compute a near-optimal prior-independent online algorithm is provided in \Cref{s:framework}. The algorithms studied in this paper are empirically compared in the prior-independent framework, shown in \Cref{s:simulations}. All proofs are deferred to the appendix. 
\section{Robust Algorithm Design}
\label{s:prelims}

Let the number of inputs to the online algorithm design be $T$. The space of inputs is $\piinputclass^{T}$ and a particular input is denoted by $\piinput$. The input $\piinput$ is a tuple of $\inplen$ elements $(X_{1},\dots,X_{\inplen})$, where each element $X_{i} \in \mathcal{X}$. Let the class of distributions, over the space of inputs, be denoted by $\pidistclass$, where $\pidistclass \subset \Delta(\piinputclass^{\inplen})$. A distribution in this class is denoted by $\pidist$. The class of feasible deterministic algorithms is denoted as $\pialgclass$ and a particular algorithm in this class as $\pialg$. The expected performance of an algorithm $\pialg$ for a particular distribution $\pidist$ is denoted as $\pialg(\pidist) = \mathbf{E}_{\piinput \sim \pidist}[\pialg(\piinput)]$. A randomized algorithm $\pirandalg \in \Delta(\pialgclass)$ is a convex combination over deterministic algorithms and its expected performance is $\pirandalg(\pidist) = \mathbf{E}_{\pialg \sim \pirandalg}[\pialg(\pidist)]$. We now describe the Bayesian optimal analysis framework and the robust algorithm design framework.


\begin{definition}
The Bayesian optimal algorithm design problem is given by a distribution $\pidist$ and a family of algorithms $\pialgclass$ and asks for the algorithm $\bayopt$ with the minimum expected cost
$$\text{OPT}_{F}(F) = \argmin_{\pialg \in \pialgclass} \pialg(\pidist).$$
\end{definition}
It is without loss to restrict the Bayesian optimal algorithm design problem to deterministic algorithms. For any randomized algorithm, there exists a deterministic algorithm in its support that obtains a weakly smaller cost. We now define the robust algorithm design framework.  

\begin{definition}
The robust algorithm design problem is given by a family of distributions $\pidistclass$, a family of algorithms $\pialgclass$ and solves the minimization problem
$$\piratio = \min_{\pirandalg \in \Delta(\pialgclass)} \max_{\pidist \in \pidistclass}\frac{A'(F)}{\text{OPT}_{F}(F)}.$$
\noindent
The algorithm which solves it is denoted as $\piopt$. 
\end{definition}

Solving this optimization problem would guarantee a multiplicative approximation factor $\piratio$ against the Bayesian optimal performance for any distribution from the feasible class of distributions. 

The robust algorithm design problem can be viewed as a two-player zero-sum game between an algorithm player, whose strategy set is $\pialgclass$, and an adversary, whose strategy set is $\pidistclass$. 
The robust algorithm optimization problem solves for a mixed Minmax strategy profile in this game. Playing the robust optimal algorithm guarantees that no matter what distribution the adversary plays, the agent will be a $\beta$ multiplicative factor close to the expected performance of the optimal algorithm which knows the distribution.

Our focus will be on two special cases of the robust algorithm design problem. First, the worst-case algorithm design problem and second, the prior-independent algorithm design problem. Each of them having different distributional assumption over the space of inputs. 

\paragraph{Worst-Case Algorithm Design.} The class of distributions are degenerate distributions over the space of inputs, i.e., $$\mathcal{F} := \{\text{Pr}[\mathbf{X}]=1 : \mathbf{X} \in \mathcal{X}^{\inplen}\}.$$

\paragraph{Prior-Independent Algorithm Design.} A distribution in $\mathcal{F}$ is such that, for $D \in \Delta(\mathcal{X})$, $X_{i} \sim D$, i.i.d, across all $i \in [T]$, i.e., $$\mathcal{F} := \{X_{i} \sim D, \text{ i.i.d } \forall i \in [T] : D \in \Delta(\mathcal{X})\}.$$

We utilize equilibrium properties in game theory to arrive at solutions to the robust algorithm design problem. These concepts are fairly standard and provided in \Cref{s:appendixprelims} for completeness. 
\section{Subgame Optimal Algorithms}
\label{s:sub-gameoptalgs}

In this section, we focus on the worst-case algorithm design paradigm. We introduce the notion of {\em subgame optimal} online algorithms. Subgame optimal algorithms, while optimal in worst-case, take advantage if the adversary makes a mistake. In \Cref{ss:stratandextensive}, we formally introduce the notion of a subgame optimal algorithm, showing how it differs from an optimal worst-case algorithm. In \Cref{ss:finitehorizonskirental}, we apply these notions to the finite-horizon ski-rental problem.

\subsection{Strategic Form and Extensive Form Representation}
\label{ss:stratandextensive}

The worst-case algorithm design paradigm solves for the value $$\piratio = \min_{\pirandalg \in \Delta(\pialgclass)} \max_{\mathbf{X}}\frac{A'(\mathbf{X})}{\text{OPT}_{\mathbf{X}}(\mathbf{X})},$$ where $\text{OPT}_{\mathbf{X}}(\mathbf{X})$ is the cost of the optimal algorithm which knows the complete sequence $\mathbf{X}$.

Solving for an optimal worst-case algorithm reduces to solving for a mixed Nash equilibrium in a zero-sum game. The game is between an algorithm player and an adversary. The strategy space of the algorithm player is space of all deterministic algorithms, while the strategy space of the adversary is the space of all input sequences. Our focus will be on a non-adaptive adversary. A non-adaptive adversary chooses the input sequence in advance, without knowledge of the decisions taken by the algorithm. A worst-case algorithm can be found by solving for equilibrium in the strategic form representation of the game. The framework assumes that the {\em adversary} gives a worst-case distribution over inputs to this algorithm. Such an input is said to be \textit{on-path}. Any other input that performs strictly worse than the worst-case is said to be \textit{off-path}. A consequence of the assumptions of this framework is that the worst-case algorithm need not perform optimally off-path. A desirable worst-case algorithm should be able to take advantage of mistakes made by the {\em adversary}. This leads to the question of considering stronger equilibrium concepts.

A natural representation of an online problem is through its extensive form representation (see Chapter 6 in the textbook by \citealp{BE-98} for an introduction to the different representations of a game). An algorithm obtained through the strategic form game need not take optimal decisions at each decision node in the extensive form game. A \textit{subgame optimal} strategy in the extensive form representation captures the notion of optimality off-path. Such a strategy takes an optimal decision at every node in the decision tree.

\begin{definition}[Subgame Optimal Algorithm]
    Consider an online algorithm design problem and its extensive form representation. An algorithm is said to be subgame optimal if it is a subgame optimal strategy in this game. Such an algorithm is denoted by $\text{SG-OPT}$. 
\end{definition} With this notion, we can define a new benchmark for the algorithm design problem. It may be computationally intractable to identify SG-OPT. The benchmark allows for approximations to the performance of the sub-game optimal algorithm. \begin{definition} 
[Subgame Optimal Benchmark]
\label{d:sgoptbenchmark}
    The performance of an algorithm $A$, against the subgame optimal benchmark is $$\max_{\mathbf{X}}\frac{A(\mathbf{X})}{\text{SG-OPT}(\mathbf{X})}.$$
\end{definition}

We apply these concepts to the finite-horizon ski-rental problem in \Cref{ss:finitehorizonskirental}.






\subsection{Finite-Horizon Ski-Rental Problem}
\label{ss:finitehorizonskirental}

In this section we describe the finite horizon ski-rental problem. An agent observes an online sequence of days, each having either good or bad weather. The agent has knowledge about the length of the sequence, denoted by $\inplen$, but observes the weather of each day in an online fashion. We represent the sequence as $\piinput \in \{0,1\}^{\inplen}$ where the $i^{th}$ component of $\piinput$, represented by $\inpelement{i}$, is the weather on the $i^{th}$ day. For any $i \in [\inplen]$, $\inpelement{i}=1$ corresponds to the $i^{th}$ day having good weather while $\inpelement{i}=0$ corresponds to the $i^{th}$ day having bad weather. Upon observing the weather, the agent needs to decide whether to \textit{stop} for a fixed, known cost $\stopcost \in \mathbb{R}^{+}$, or \textit{continue}, suffering a normalized cost of $1$ on a good weather day while suffering a cost of $0$ on a bad weather day. This decision needs to be made without knowledge of the future sequence of days. Once the agent decides to \textit{stop}, they do not incur any further costs. The goal of the agent is to come up with an algorithm be competitive against the optimal hindsight cost, i.e., the strategy with the minimum cost had the sequence of days been known in advance. 

A deterministic algorithm is a tuple of functions, which map the observed sequence of days to a decision of \textit{stop} or \textit{continue}. Formally, \begin{definition}
\label{d:bralgskirent}
A deterministic algorithm is a tuple of functions $(\alg{1},\alg{2},\dots,\alg{\inplen})$ where

$$\alg{i} : \left(\inpelement{1},\dots,\inpelement{\timeindex}\right) \to \{\stp,\continue\} \quad \forall \left(\inpelement{1},\dots,\inpelement{\timeindex}\right) \in \{0,1\}^{\timeindex} \quad \forall \timeindex \in [\inplen]$$

\noindent
where $\stp$ is a decision to stop and $\continue$ is a decision to continue.
\end{definition}
A deterministic algorithm defines a deterministic action taken by the agent for each prefix of histories. For a fixed $\inplen$, the number of deterministic algorithms that an agent can play is finite. We denote the set of deterministic algorithms by $\mathcal{A}$. A randomized algorithm is a randomization over deterministic algorithms. The cost that a particular deterministic algorithm $\pialg$ suffers along a sequence of days $\mathbf{X}$ is denoted by $A(\mathbf{X})$. It is the sum of the continuation costs and the cost of stopping, if the algorithm were to stop on that particular sequence. 

%
The agent wants to find the randomized algorithm which minimizes the competitive ratio. Formally, it wants to solve the optimization problem $$\min_{\pirandalg \in \Delta(\mathcal{A})} \max_{\piinput \in \{0,1\}^{\inplen}}\frac{\mathbf{E}_{\pialg \sim \pirandalg}[\pialg(\piinput)]}{\OPT_{\piinput}(\piinput)},$$ where $\OPT_{\piinput}(\piinput)$ is the optimal hindsight cost for the sequence of days $\piinput$. If there are $k = \sum_{i=1}^{\inplen} X_{i}$ good weather days in a sequence, then $\OPT_{\piinput}(\piinput) = \min\{k,B\}$. 

\subsubsection{Optimal Randomized Algorithm}
\label{ss:optrand}

In this section, we solve for an optimal randomized algorithm, consequently solving for the competitive ratio of the finite horizon ski-rental problem. As described before, such an algorithm can be obtained by solving for equilibrium in the strategic form representation of the zero-sum game between algorithm player and adversary. We denote this game by \textit{SRP}. A challenge that arises upon viewing it through this lens is that the strategy spaces of the adversary and the algorithm player are exponential in $\inplen$. In equilibrium, the adversary player could potentially mix good weather days with bad weather days and the algorithm player would need to account for such input sequences. 

A key result, which we prove, is that it is without loss for the adversary to provide a contiguous sequence of good weather days followed by a sequence of bad weather days.

To that end, we define a \textit{reduced} two-player zero-sum game, denoted as \textit{SRP-R}. In \textit{SRP-R}, the pure strategy space of the adversary can be indexed by the contiguous number of good days followed by bad days. Formally, let the pure strategy space of the adversary be denoted by $\mathcal{F}_{g}$. Define this class of strategies as $$\mathcal{F}_{g} := \{F^{k} : k \in \{1,2,\dots,T\}\},$$ where $F^{k}$ denotes the input which gives a contiguous sequence of $k$ good weather days followed by bad weather days. The pure strategy space of the algorithm player is indexed by the number of good weather days for which the player will \textit{continue} after which they will stop. Denoting the set of pure strategies of the algorithm player as $\mathcal{A}_{g}$, we define this set as $$\mathcal{A}_{g} := \{A^{l} : l \in \{0,1,\dots,\inplen\}\},$$ where $A^{l}$ corresponds to the algorithm which \textit{continues} for the first $l$ good weather days after which it \textit{stops} on the $(l+1)^{\text{st}}$ good weather day. A deterministic algorithm in $\mathcal{A}_{g}$ implicitly defines the decision it takes on all possible input sequences; the algorithm \textit{ignores} any bad weather day by \textit{continuing} and only keeps a track of the number of good weather days. The pure strategy $A_{T}$ will \textit{continue} for all days.

Let $u\left(A^{l},F^{k}\right)$ be the utility to the adversary player for algorithm strategy $A^{l}$ and adversary strategy $F^{k}$. We have that $$u(A^{l},F^{k}) := \twopartdef{\frac{l+B}{\min\{k,B\}}}{k > l}{\frac{k}{\min\{k,B\}}}{k \leq l}.$$ Note that algorithm $A^{\inplen}$ always \textit{continues}; while the input sequence $F^{\inplen}$ corresponds to all good weather days. 

To show that it is without loss for the adversary to consider input sequences from $\mathcal{F}_{g}$, it suffices to prove that any mixed strategy Nash equilibrium in \textit{SRP-R} is a mixed strategy Nash equilibrium in \textit{SRP}. It is straightforward that the algorithm player would not deviate from it's strategy given the strategy of the adversary. The \textit{off-path} decisions of deterministic algorithms in $\mathcal{A}_{g}$ lead to a scenario in which the adversary player does not deviate from it's strategy in the equilibrium of the reduced game.

\begin{lemma}
    \label{l:truerandworstcase}
   Any mixed Nash equilibrium in \textit{SRP-R} is a mixed Nash equilibrium in \textit{SRP}.
\end{lemma}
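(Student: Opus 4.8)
The plan is to verify directly that a mixed Nash equilibrium $(\sigma^*,\tau^*)$ of \textit{SRP-R}, with $\sigma^*\in\Delta(\mathcal{A}_g)$ and $\tau^*\in\Delta(\mathcal{F}_g)$, remains a Nash equilibrium of \textit{SRP} once we embed $\mathcal{A}_g\subseteq\mathcal{A}$ and $\mathcal{F}_g\subseteq\{0,1\}^\inplen$. Write $v$ for the equilibrium payoff $\mathbf{E}_{\pialg\sim\sigma^*,\,\piinput\sim\tau^*}[\pialg(\piinput)/\OPT_{\piinput}(\piinput)]$, which is computed from the same pure-strategy payoffs $u(A^l,F^k)$ whether we view the profile in \textit{SRP} or in \textit{SRP-R}. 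Since \textit{SRP} is zero-sum with the algorithm minimizing and the adversary maximizing, and since at an equilibrium of \textit{SRP-R} no pure strategy of the algorithm player earns the adversary less than $v$ and no pure strategy of the adversary earns more than $v$, it suffices to show: (i) every deterministic $\pialg\in\mathcal{A}$ gives the adversary payoff $\ge v$ against $\tau^*$; and (ii) every $\piinput\in\{0,1\}^\inplen$ gives the adversary payoff $\le v$ against $\sigma^*$. The engine of both steps is the structural fact that, because every $A^l\in\mathcal{A}_g$ continues through all bad days, $A^l(\piinput)$ depends on $\piinput$ only through its good-day count $k=\sum_i X_i$ (it equals $l+B$ if $k>l$ and $k$ if $k\le l$), and likewise $\OPT_{\piinput}(\piinput)=\min\{k,B\}$ depends only on $k$.

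For (ii), fix $\piinput$ with $k$ good days in arbitrary positions. By the structural fact, $A^l(\piinput)=A^l(F^k)$ and $\OPT_{\piinput}(\piinput)=\OPT_{F^k}(F^k)$ for every $A^l\in\mathcal{A}_g$, so the adversary's payoff from $\piinput$ against $\sigma^*$ equals its payoff from the reduced-game strategy $F^k$ against $\sigma^*$, which is $\le v$ because $\tau^*$ is a best response in \textit{SRP-R}. This is exactly the ``off-path'' point in the informal discussion: the algorithms in $\mathcal{A}_g$ treat an interspersed bad day the same as a trailing one, so a non-contiguous input is no more profitable to the adversary than a contiguous one.

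For (i), fix a deterministic $\pialg\in\mathcal{A}$ and let $l_\pialg$ be the number of good days $\pialg$ continues through on the all-good input $F^\inplen$ before first stopping (set $l_\pialg=\inplen$ if it never stops there). I claim $\pialg(F^k)\ge A^{l_\pialg}(F^k)$ for every $k\in\{1,\dots,\inplen\}$. On the first $k$ days of $F^k$ the algorithm observes the all-ones history, exactly as on $F^\inplen$, so its decisions on days $1,\dots,k$ coincide on the two inputs. If $\pialg$ stops on some day $d\le k$ on $F^\inplen$, then $l_\pialg=d-1$ and on $F^k$ it also stops on day $d$ for cost $(d-1)+B=A^{l_\pialg}(F^k)$. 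Otherwise $\pialg$ has not stopped by day $k$ on $F^\inplen$, so $l_\pialg\ge k$, and on $F^k$ it continues through all $k$ good days for a cost of $k$ or $k+B$, while $A^{l_\pialg}(F^k)=k$; either way $\pialg(F^k)\ge A^{l_\pialg}(F^k)$. Dividing by $\OPT_{F^k}(F^k)=\min\{k,B\}>0$ and taking expectations over $F^k\sim\tau^*$, the adversary's payoff against $\pialg$ is at least its payoff against $A^{l_\pialg}\in\mathcal{A}_g$, which is $\ge v$ at the equilibrium of \textit{SRP-R}. Combining (i) and (ii) shows $(\sigma^*,\tau^*)$ is a Nash equilibrium of \textit{SRP}.

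I expect the main obstacle to be making the two reductions to the good-day count airtight rather than any hard computation: in (i) one must rule out that a cleverly history-dependent deterministic algorithm exploits the \emph{position} of bad days, which is exactly what the observation ``$F^k$ has the same good-day prefix as $F^\inplen$'' buys, and in (ii) one must establish the symmetric statement for arbitrary non-contiguous inputs using that algorithms in $\mathcal{A}_g$ ignore bad days entirely. Once these structural equivalences are in place, the equilibrium inequalities of \textit{SRP-R} transfer verbatim and nothing else is needed.
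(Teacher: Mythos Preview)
Your proof is correct and follows essentially the same approach as the paper's: you verify that neither player has a profitable deviation in \textit{SRP}, using for the adversary's side the fact that algorithms in $\mathcal{A}_g$ depend only on the good-day count, and for the algorithm's side that any deterministic $\pialg$ is weakly dominated against $\tau^*$ by some $A^{l_\pialg}\in\mathcal{A}_g$. The paper's version is a two-paragraph sketch that leaves the construction of the dominating $A^{l_\pialg}$ implicit (``the equilibrium strategy of the algorithm player in \textit{SRP-R} does that optimally''), whereas you spell out this reduction explicitly via the shared all-ones prefix of $F^k$ and $F^\inplen$; your argument is the more careful of the two.
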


To identify the optimal randomized algorithm, it thus suffices to solve for the mixed Nash equilibrium strategy of the algorithm player in \textit{SRP-R}.

First, observe that the set of pure strategies $\{F^{k} : k \in \{\ceil{B},\dots,\inplen - 1\}\}$, is weakly dominated by $F_{\inplen}$. For any strategy $F^{k}$ where $k \in \{\ceil{B},\dots,\inplen\}\}$, the optimal hindsight cost is $B$. Among these strategies, $F^{\inplen}$ makes any strategy of the algorithm suffer a weakly larger cost with more good days. As far as the algorithm player is concerned, the set of pure strategies $\{A^{l} : l \in \{\inplen - \floor{B}, \dots, \inplen - 1\}\}$ are weakly dominated by $A^{\inplen}$. For any input sequence in $\mathcal{F}_{g}$, the algorithm player will never want to \textit{stop} at any of the last $\floor{B}$ time indices. It will suffer a smaller cost by \textit{continuing}. 

Thus, to find an equilibrium in \textit{SRP-R}, it is without loss to consider the zero-sum game where the space of pure strategies for the algorithm is $$\mathcal{A}'_{g} = \{A^{l} : l \in \{0,\dots,T-\floor{B}-1\} \cup \{\inplen\}\},$$ and the space of pure strategies for the adversary player is $$\mathcal{F}'_{g} = \{F^{k} : k \in \{1,\dots,\ceil{B}-1\} \cup \{\inplen\}\}.$$ 

\autoref{t:randworstcase} finds the strategy of the algorithm player in a mixed Nash equilibrium of \textit{SRP-R}. The proof of \autoref{t:randworstcase} is provided in \autoref{a:sub-game-proofs}. A sketch of the computation is as follows. A property of game theory is that in any mixed equilibria, the expected payoff of any pure strategy that has positive probability is the same (this payoff may differ across players). Depending on the value of $\inplen$ relative to $B$, we eliminate a set of strategies that are dominated in both $\mathcal{A}_{g}'$ and $\mathcal{F}_{g}'$. The result follows by solving the system of equations obtained by equating the expected payoffs for each player in the resulting game. 

By \Cref{l:truerandworstcase}, it follows that such a strategy is the optimal worst-case randomized algorithm for the finite horizon ski-rental problem.

\begin{lemma}
\label{t:randworstcase}
    A mixed strategy for the algorithm player in \textit{SRP-R}:
    \begin{itemize}
        \item If $\ceil{\stopcost} + \floor{\stopcost} \geq \inplen + 1$, is,  \begin{align*}
            \text{Pr}^{T}[A^{l}] &= \left(\frac{\stopcost}{\stopcost-1}\right)^{l} \times Pr[A^{0}] \quad \forall l \in \{0,1,\dots,\inplen - \floor{\stopcost}-1\}, \\
            \text{Pr}^{T}[A^{\inplen}] &= \left(\frac{\stopcost}{\inplen - \stopcost}\right) \left(\frac{\stopcost}{\stopcost-1}\right)^{\inplen - \floor{\stopcost} -1}\left(\stopcost + \floor{\stopcost}-T\right) \times Pr[A^{0}].
        \end{align*}

        \item If $\ceil{\stopcost} + \floor{\stopcost} < \inplen + 1$, is the same as the infinite-horizon ski-rental problem. 
    \end{itemize}
\end{lemma}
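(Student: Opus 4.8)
The plan is to solve directly for a mixed Nash equilibrium of the reduced game \textit{SRP-R} restricted to the strategy sets $\mathcal{A}'_g$ and $\mathcal{F}'_g$; by \Cref{l:truerandworstcase} and the weak-dominance reductions established just before the statement, this is exactly the object the lemma asks for. Throughout I assume $B<T$ (if $B\ge T$ the strategy $A^{T}$ that always continues is offline-optimal on every input and the problem is degenerate). The tool is the standard fact, already invoked in the text, that in a mixed equilibrium of a zero-sum game every pure strategy in a player's support earns the same expected payoff against the opponent's equilibrium strategy; I use the adversary's indifference conditions to turn the equilibrium requirement into a linear recursion for the algorithm's mixing weights $q_l=\Pr^T[A^l]$.

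First write the adversary's expected utility: for an adversary strategy $F^{k}$ with $k<B$ the offline cost is $k$, so $U(F^{k})=\sum_{l\ge k}q_l+\tfrac1k\sum_{l<k}q_l(l+B)$, a piecewise expression with a kink at $l=k$. Subtracting the indifference conditions $U(F^{k})=U(F^{k-1})$ for consecutive $k$ collapses to $\sum_{l<k}q_l(l+B)=kB\,q_{k-1}$, and differencing this once more yields $q_{l+1}=\tfrac{B}{B-1}q_l$, which is the claimed geometric form $\Pr^T[A^l]=\bigl(\tfrac{B}{B-1}\bigr)^{l}\Pr[A^0]$ over the contiguous block of small thresholds in the support. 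The game value is then $U(F^{1})=1+(B-1)q_0$. Two conditions close the system: normalization $\sum_l q_l=1$ fixes $q_0=\Pr[A^0]$, and the adversary's indifference to the all-good-weather input $F^{T}$ (offline cost $B$, against which every finite threshold $A^l$ pays $l+B$ while $A^{T}$ pays $T$) determines the residual mass $\Pr^T[A^{T}]$, which after simplification is exactly $\tfrac{B}{T-B}\bigl(\tfrac{B}{B-1}\bigr)^{T-\floor{B}-1}(B+\floor{B}-T)\,\Pr[A^0]$.

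The case split is governed by the sign of the residual mass on $A^{T}$. When $\ceil{B}+\floor{B}\ge T+1$ we have $B+\floor{B}-T\ge 0$; the small-threshold block is precisely $\{0,1,\dots,T-\floor{B}-1\}$ (since $\ceil{B}-1\ge T-\floor{B}$ there are enough adversary strategies $F^{1},\dots,F^{\ceil{B}-1}$ to pin down all of $q_0,\dots,q_{T-\floor{B}-1}$), and $A^{T}$ carries the remaining probability, giving the first bullet. When $\ceil{B}+\floor{B}<T+1$ the same formula would force non-positive mass on $A^{T}$, so $A^{T}$ and every threshold $A^l$ with $\ceil{B}\le l\le T-\floor{B}-1$ drop out of the support; the surviving indifference system is exactly the one solved by the infinite-horizon randomized ski-rental algorithm of \citet{KMMO-94}, whose support $\{A^{0},\dots,A^{\ceil{B}-1}\}$ is feasible here because $\ceil{B}-1\le T-\floor{B}-1$. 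One then checks this strategy stays an equilibrium in the finite game: any $F^{k}$ with $k\ge\ceil{B}$ has offline cost $B$ and, because the algorithm never continues past $\ceil{B}-1<k$ good days, yields the $k$-independent payoff $(\mathbf{E}[l]+B)/B$, which equals the game value by the infinite-horizon indifference, so the extra (weakly dominated) adversary inputs and $F^{T}$ are never profitable deviations.

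Finally, to certify that the displayed strategy is a genuine equilibrium strategy rather than merely a solution of the indifference equations, I exhibit the matching adversary mixed strategy and verify the minimax pair: the algorithm's strategy guarantees payoff at most the claimed value against every pure $F^{k}$ (including those removed by dominance, handled via the dominance facts already established), and the adversary's strategy guarantees at least that value against every pure $A^l$ in $\mathcal{A}_g$ (using that $A^l$ for $l\in\{T-\floor{B},\dots,T-1\}$ is dominated by $A^{T}$). I expect the main obstacle to be the bookkeeping around $\ceil{B}$ versus $\floor{B}$ — the recursion range, the top of the small-threshold block, and the exact point at which $A^{T}$ leaves the support all depend on whether $B$ is integral — together with tracking the kink in $u(A^l,F^{k})$ at $l=k$ so that the consecutive-indifference subtractions are applied on valid ranges; the one step that is genuinely game-theoretic rather than algebraic is the Case-2 verification that no additional large-$k$ adversary input can exceed the infinite-horizon value.
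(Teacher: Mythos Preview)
Your proposal is correct and follows essentially the same approach as the paper: both use the adversary's consecutive indifference conditions $U(F^{k})=U(F^{k-1})$ to derive the geometric recursion $q_{l+1}=\tfrac{B}{B-1}q_l$, then close the system via normalization together with the indifference to $F^{T}$, and split into the two regimes according to whether the resulting mass on $A^{T}$ is nonnegative. The only notable differences are presentational: the paper performs the Case-2 dominance reduction ($A^{\lceil B\rceil-1}$ dominates $A^{l}$ for $l\ge\lceil B\rceil$) \emph{before} solving the indifference system rather than inferring it from the sign of the residual mass, and it does not explicitly exhibit the adversary's equilibrium strategy or verify the minimax pair as you plan to do---so your proposed verification step is in fact more careful than the paper's argument.
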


We now identify the competitive ratio of the finite horizon ski-rental problem. The competitive ratio corresponds to the value of \textit{SRP}. The value of the game is identified by solving for the expected utility of the adversary player, when the algorithm player uses the mixed strategy in \autoref{t:randworstcase}. The adversary is indifferent between any pure strategy in its support at equilibrium. The expected utility to the adversary player at equilibrium is thus equal to the expected utility it obtains when the adversary player gives one good weather day. 

If $\ceil{B}+\floor{B} \geq \inplen + 1$, the competitive ratio of the finite horizon ski-rental problem is $$B\text{Pr}[A^{0}] + \sum_{l=1}^{\inplen-\floor{B}-1}\text{Pr}[A^{l}] + \text{Pr}[A^{\inplen}] = \frac{1}{1-\frac{(\inplen-B)(B-1)}{B\floor{B}\left(\frac{B}{B-1}\right)^{\inplen-\floor{B}-1}}}.$$ If $\ceil{B}+\floor{B} < \inplen + 1$, the competitive ratio of the finite horizon ski-rental problem is $$B\text{Pr}[A^{0}] + \sum_{l=1}^{\ceil{B}-1}\text{Pr}[A^{l}] = \frac{1}{1-\frac{(\ceil{B}-1)}{B}\left(\frac{B-1}{B}\right)^{\ceil{B}-1}}.$$ An immediate observation is that for a fixed $B$, the competitive ratio is weakly increasing in $\inplen$. 

\begin{corollary}
\label{c:montonevalue}
    For a fixed $B$, the competitive ratio, and thus the value of \textit{SPR}, is weakly increasing in $\inplen$. 
\end{corollary}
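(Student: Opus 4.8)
The plan is to prove the stronger statement that the value $\piratio(\inplen+1)$ of \textit{SRP} at horizon $\inplen+1$ is at least its value $\piratio(\inplen)$ at horizon $\inplen$, for every integer $\inplen\ge 1$, and then conclude by induction. I would deliberately avoid manipulating the two closed-form expressions for the competitive ratio: they are piecewise, so that route would require separately checking monotonicity inside the ``small $\inplen$'' regime (where the ratio is $\bigl(1-\tfrac{(\inplen-\stopcost)(\stopcost-1)}{\stopcost\floor{\stopcost}}(\tfrac{\stopcost-1}{\stopcost})^{\inplen-\floor{\stopcost}-1}\bigr)^{-1}$, a product of an increasing and a decreasing factor in $\inplen$, hence not obviously monotone at a glance) and across the transition to the ``large $\inplen$'' regime (where the ratio is constant in $\inplen$). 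Instead I would use a direct embedding argument on inputs.

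The core observation is that padding an input with one extra bad-weather day is ``free'' for the algorithm and invisible to the benchmark. Given any input $\mathbf X\in\{0,1\}^{\inplen}$, let $\mathbf X\circ 0\in\{0,1\}^{\inplen+1}$ be $\mathbf X$ with a bad-weather day appended. Then $\OPT_{\mathbf X\circ 0}(\mathbf X\circ 0)=\OPT_{\mathbf X}(\mathbf X)$, since the number of good days is unchanged. Moreover, any (possibly randomized) horizon-$(\inplen+1)$ algorithm $\pirandalg$ induces a horizon-$\inplen$ algorithm $\bar{\pirandalg}$ by running $\pirandalg$ and, if it has not stopped by the end of day $\inplen$, discarding its day-$(\inplen+1)$ action; this is well defined because $\pirandalg$'s decisions on days $1,\dots,\inplen$ depend only on $X_1,\dots,X_t$ with $t\le\inplen$. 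On the input $\mathbf X\circ 0$ the algorithm $\pirandalg$ pays exactly what $\bar{\pirandalg}$ pays on $\mathbf X$, plus (only if still running after day $\inplen$) either $0$ for continuing on the bad day or $\stopcost$ for stopping; hence $\pirandalg(\mathbf X\circ 0)\ge\bar{\pirandalg}(\mathbf X)$. Therefore, if $\mu^{*}$ is an optimal (maximin) adversary mixed strategy at horizon $\inplen$, its pushforward under $\mathbf X\mapsto\mathbf X\circ 0$ is an adversary strategy at horizon $\inplen+1$ that, against every horizon-$(\inplen+1)$ algorithm $\pirandalg$, yields expected ratio at least $\mathbf E_{\mathbf X\sim\mu^{*}}\bigl[\bar{\pirandalg}(\mathbf X)/\OPT_{\mathbf X}(\mathbf X)\bigr]\ge\piratio(\inplen)$, the last inequality because $\mu^{*}$ guarantees the value against every horizon-$\inplen$ algorithm and $\bar{\pirandalg}$ is one. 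By the minimax theorem this gives $\piratio(\inplen+1)\ge\piratio(\inplen)$, and induction finishes the proof.

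Equivalently, and perhaps more cleanly since by \Cref{l:truerandworstcase} we may argue in \textit{SRP-R}, one notes that the payoff entries $u(A^{l},F^{k})$ do not depend on $\inplen$ at all, so \textit{SRP-R} at horizon $\inplen+1$ is obtained from \textit{SRP-R} at horizon $\inplen$ by adjoining the pure strategies $F^{\inplen+1}$ and $A^{\inplen+1}$; adjoining $F^{\inplen+1}$ can only help the adversary, and the only point to check is that $A^{\inplen+1}$ is payoff-equivalent to $A^{\inplen}$ against every $F^{k}$ with $k\le\inplen$ (both equal $k/\min\{k,\stopcost\}$, since neither algorithm ever stops on such an input), so the adversary's old optimal strategy cannot be exploited any better by the new algorithm action than by one already available — hence the value does not drop. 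I do not expect a genuine obstacle here; the only step I would write out carefully is the bookkeeping in the second paragraph, namely defining the truncated algorithm $\bar{\pirandalg}$ and verifying the cost inequality in the case where $\pirandalg$ never stops, after which everything reduces to the minimax theorem and induction on $\inplen$.
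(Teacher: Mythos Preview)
Your proof is correct, and both routes you sketch work. The paper, by contrast, does not give a proof at all: it simply states that the monotonicity is ``an immediate observation'' once the two closed-form expressions for the competitive ratio have been derived, and records the corollary.

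Your approach is genuinely different and, I would say, cleaner. The paper's implicit argument requires verifying that the expression $\bigl(1-\tfrac{(\inplen-\stopcost)(\stopcost-1)}{\stopcost\floor{\stopcost}}(\tfrac{\stopcost-1}{\stopcost})^{\inplen-\floor{\stopcost}-1}\bigr)^{-1}$ is nondecreasing in $\inplen$ throughout the small-$\inplen$ regime and that the handoff to the constant large-$\inplen$ value is monotone; as you note, this is a product of an increasing and a decreasing factor, so it is not a one-line computation. Your embedding argument (pad with a bad day, truncate the algorithm) and your \textit{SRP-R} argument (the payoff $u(A^{l},F^{k})$ is $\inplen$-free, and the new algorithm action $A^{\inplen+1}$ is payoff-equivalent to $A^{\inplen}$ against every $F^{k}$ with $k\le\inplen$) both sidestep this entirely and would go through even without the explicit formulas of \autoref{t:randworstcase}. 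The second of these is the more economical: once you observe that the adversary's horizon-$\inplen$ maxmin strategy still guarantees $\piratio(\inplen)$ against every $A^{l}$ with $l\le\inplen+1$ (the only new case being $A^{\inplen+1}$, which coincides with $A^{\inplen}$ on her support), the minimax theorem finishes immediately. Nothing in your write-up needs more than the bookkeeping you already flag.
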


The randomized algorithm obtained in \autoref{t:randworstcase} is an optimal worst-case algorithm. However, it performs sub-optimally if the adversary does not provide the worst-case input sequence. In the next section, we solve for a subgame optimal algorithm. Such an algorithm takes an optimal decision given an observed sub-sequence of the input. We reduce the problem of characterizing the subgame optimal algorithm to that of the optimal worst-case algorithm.

\subsubsection{Subgame Optimal Algorithm}

Consider an instance of the finite-horizon ski-rental problem with time horizon $\inplen$. It will be helpful to write the worst-case algorithm as a probability of \textit{stopping}, conditional on observing and \textit{continuing} on a sequence of contiguous good days.  On observing $k$ contiguous good weather days, let $\eta_{k}^{\inplen}$ denote the probability of \textit{stopping}, conditional on \textit{continuing} for these days. An immediate application of \autoref{t:randworstcase} gives us the following probabilities.

\begin{corollary}
\label{c:conditionprobworstcase}
    Consider an instance of the finite-horizon ski-rental problem with time horizon $\inplen$. Conditional on \textit{continuing} for the first $(k-1)$ contiguous good weather days, the probability of \textit{stopping} on the $k^{th}$ contiguous good weather day for an optimal randomized algorithm is:
\begin{itemize}
    \item If $\ceil{\stopcost} + \floor{\stopcost} \geq \inplen + 1$, for $k \in \{1,\dots,\inplen-\floor{B}\}$, $\eta_{k}^{\inplen}$, is, $$\eta^{\inplen}_{k} = \frac{1}{\frac{B\floor{B}}{\inplen-B}\left(\frac{B}{B-1}\right)^{\inplen-\floor{B}-k} -(B-1)}.$$
    \item If $\ceil{\stopcost} + \floor{\stopcost} < \inplen + 1$, for $k \in \{1,\dots,\ceil{B}-1\}$, $$\eta^{\inplen}_{k} = \frac{1}{\left(B-1\right)\left(\frac{B}{B-1}\right)^{\ceil{B}-k}\frac{B}{\ceil{B}-1}-(B-1)}.$$ Additionally, $\eta^{\inplen}_{\ceil{B}}=1$.
\end{itemize}
\end{corollary}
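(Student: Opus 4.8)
The plan is to derive the conditional stopping probabilities $\eta_k^{\inplen}$ directly from the closed-form mixed strategy $\{\text{Pr}^{\inplen}[A^l]\}$ given in \autoref{t:randworstcase}, using the correspondence between pure strategies of the algorithm player in \textit{SRP-R} and stopping rules on contiguous good-weather sequences. Recall $A^l$ continues for the first $l$ good days and stops on the $(l+1)^{\text{st}}$. On a contiguous good-day input, the event ``continue through the first $k-1$ good days'' is exactly the event $\{l \geq k-1\}$, i.e., the algorithm drew some $A^l$ with $l \in \{k-1, k, \dots\}$; within that event, ``stop on the $k^{\text{th}}$ good day'' is the event $l = k-1$. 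Hence by the definition of conditional probability,
\begin{align*}
\eta_k^{\inplen} = \frac{\text{Pr}^{\inplen}[A^{k-1}]}{\sum_{l \geq k-1} \text{Pr}^{\inplen}[A^{l}]}.
\end{align*}
So the whole statement reduces to evaluating this ratio in each of the two cases of \autoref{t:randworstcase}.

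In \textbf{Case 1} ($\ceil{\stopcost} + \floor{\stopcost} \geq \inplen + 1$), the support of the algorithm's mixed strategy is $\{A^0, \dots, A^{\inplen - \floor{B} - 1}\} \cup \{A^{\inplen}\}$, so for $k \in \{1, \dots, \inplen - \floor{B}\}$ the denominator is $\sum_{l=k-1}^{\inplen - \floor{B} - 1} \text{Pr}^{\inplen}[A^l] + \text{Pr}^{\inplen}[A^{\inplen}]$. I would substitute $\text{Pr}^{\inplen}[A^l] = (\frac{B}{B-1})^l \text{Pr}[A^0]$ and the given expression for $\text{Pr}^{\inplen}[A^{\inplen}]$, factor out $\text{Pr}[A^0]$ (it cancels), and sum the finite geometric series $\sum_{l=k-1}^{\inplen-\floor{B}-1} (\frac{B}{B-1})^l = (B-1)[(\frac{B}{B-1})^{\inplen-\floor{B}} - (\frac{B}{B-1})^{k-1}]$. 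The numerator contributes $(\frac{B}{B-1})^{k-1}$. After combining terms and dividing numerator and denominator by $(\frac{B}{B-1})^{k-1}$, the target form $\eta_k^{\inplen} = 1/\big(\frac{B\floor{B}}{\inplen - B}(\frac{B}{B-1})^{\inplen - \floor{B} - k} - (B-1)\big)$ should drop out; one should double-check that the $\text{Pr}^{\inplen}[A^{\inplen}]$ term, which carries the factor $(\frac{B}{\inplen - B})(B + \floor{B} - \inplen)$, combines correctly with the geometric-sum remainder — this is the one spot where a sign or an off-by-one in the exponent is easy to get wrong, and it is the main obstacle. A useful internal consistency check: at $k = \inplen - \floor{B}$ the formula should match $\text{Pr}^{\inplen}[A^{\inplen-\floor B - 1}] / (\text{Pr}^{\inplen}[A^{\inplen - \floor B - 1}] + \text{Pr}^{\inplen}[A^{\inplen}])$ computed directly.

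In \textbf{Case 2} ($\ceil{\stopcost} + \floor{\stopcost} < \inplen + 1$), after the domination argument the support is $\{A^0, \dots, A^{\ceil{B}-1}\}$ (with $A^{\ceil B - 1}$ playing the role of the ``never stop within the relevant window'' strategy, so $\eta^{\inplen}_{\ceil B} = 1$ since conditional on continuing through $\ceil B - 1$ good days the algorithm must have played $A^{\ceil B - 1}$ and stops next). For $k \in \{1, \dots, \ceil B - 1\}$ I would again write the denominator $\sum_{l = k-1}^{\ceil B - 1} \text{Pr}^{\inplen}[A^l]$, use $\text{Pr}^{\inplen}[A^l] = (\frac{B}{B-1})^l \text{Pr}[A^0]$ for $l \leq \ceil B - 2$ together with the given value of $\text{Pr}[A^{\ceil B - 1}]$ (which equals $(\frac{B}{B-1})^{\ceil B - 2}(B + 1 - \ceil B)\frac{B}{\ceil B - 1}\text{Pr}[A^0]$ from \autoref{eq:simulfour} in the proof of \autoref{t:randworstcase}), cancel $\text{Pr}[A^0]$, sum the geometric series, divide through by $(\frac{B}{B-1})^{k-1}$, and simplify to the stated $\eta^{\inplen}_k = 1/\big((B-1)(\frac{B}{B-1})^{\ceil B - k}\frac{B}{\ceil B - 1} - (B-1)\big)$. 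Since both cases are purely mechanical once the conditional-probability reduction is in place, the only real work is the bookkeeping in the geometric sums, and the $A^{\inplen}$ cross-term in Case 1 is where I would spend the most care.
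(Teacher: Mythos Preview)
Your proposal is correct and follows essentially the same approach as the paper: both express $\eta_k^{\inplen}$ as the ratio $\text{Pr}^{\inplen}[A^{k-1}]/\big(1-\sum_{l=0}^{k-2}\text{Pr}^{\inplen}[A^{l}]\big)$ (equivalently your $\text{Pr}^{\inplen}[A^{k-1}]/\sum_{l\ge k-1}\text{Pr}^{\inplen}[A^{l}]$), substitute the geometric form from \autoref{t:randworstcase}, cancel $\text{Pr}[A^0]$, and simplify. The only cosmetic difference is that the paper writes the denominator via the complement rather than the tail sum, which slightly shortens the algebra since one can plug in the closed form for $\text{Pr}[A^0]$ directly without separately handling the $A^{\inplen}$ (resp.\ $A^{\ceil B -1}$) term.
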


We now solve for a subgame optimal algorithm. To obtain a characterization for such an algorithm, an optimal choice must be made at each decision node. Let us first observe the optimal strategy of the algorithm player on observing an initial contiguous sequence of bad weather days. 

\begin{lemma}
\label{l:reducetimehorizon}
    If the adversary initially provides a contiguous sequence of $l$ bad weather days, the optimal strategy of the algorithm player implements the optimal randomized algorithm for time horizon $(\inplen-l)$. 
\end{lemma}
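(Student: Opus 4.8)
The plan is to prove \Cref{l:reducetimehorizon} by a \emph{subgame isomorphism} argument: after the adversary reveals $l$ bad weather days up front, the residual decision problem faced by the algorithm is \emph{identical} to the original finite-horizon ski-rental problem with time horizon $\inplen - l$, so subgame optimality in the original game forces the algorithm to play the (worst-case optimal, hence subgame optimal at the root) randomized algorithm for horizon $\inplen - l$ on the remaining days. The key observation is that bad weather days are payoff-irrelevant: continuing on a bad day costs $0$, and the optimal hindsight cost $\OPT_{\piinput}(\piinput) = \min\{k, B\}$ depends only on the number $k$ of good days, not on their positions or on the number of bad days. So a prefix of $l$ bad days changes neither the algorithm's accumulated cost (still $0$ if it has not stopped) nor the benchmark, for \emph{any} continuation of the input.

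First I would set up the subgame explicitly. Fix the history $h = (0, \dots, 0)$ of length $l$ (all bad), and suppose the algorithm has not yet stopped (if it had stopped, there is nothing to optimize). A continuation input is any string in $\{0,1\}^{\inplen - l}$ appended to $h$; since bad days in the continuation are also payoff-irrelevant, by \Cref{l:truerandworstcase} (applied at the level of the subgame) it is without loss to let the adversary's continuation be $F^k$ for $k \in \{1, \dots, \inplen - l\}$ or all-bad, exactly the strategy set of the adversary in an $(\inplen-l)$-horizon instance. The algorithm's remaining choices are when (if ever) to stop among the remaining days, again exactly the strategy set $\mathcal{A}_g$ for horizon $\inplen - l$. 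Second, I would check that the payoff matrix matches: for a continuation with $k$ good days, the algorithm's total cost equals its continuation cost on the remaining days (the length-$l$ bad prefix contributed $0$), and the benchmark is $\OPT = \min\{k, B\}$; these are precisely the entries of $u(A^{\cdot}, F^{\cdot})$ for the $(\inplen - l)$-horizon game. Hence the subgame rooted at $h$ is strategically isomorphic to the root of the horizon-$(\inplen - l)$ game.

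Third, I would invoke the definition of subgame optimality: SG-OPT must play an optimal strategy at every decision node, in particular at node $h$; an optimal strategy in a subgame isomorphic to the horizon-$(\inplen-l)$ game is, by \Cref{l:truerandworstcase} and \Cref{t:randworstcase}, the optimal randomized worst-case algorithm for horizon $\inplen - l$. (One should note the root of an online-algorithm game \emph{is} its own subgame, and the worst-case optimal randomized algorithm of \Cref{t:randworstcase} is by construction optimal there, so there is no conflict between ``worst-case optimal'' and ``subgame optimal at the root.'') This yields the claim.

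The main obstacle I anticipate is making the subgame-isomorphism rigorous given that the paper's formal model phrases strategies over full-length histories rather than as an explicit game tree: I need to argue carefully that ``the optimal strategy of the algorithm player'' in the subgame is well-defined and equals the horizon-$(\inplen - l)$ optimum, which requires re-running the reduction of \Cref{l:truerandworstcase} inside the subgame (the adversary's continuation may still interleave bad days, and I must confirm those remain payoff-irrelevant in the subgame exactly as at the root). A secondary subtlety is the boundary behavior when $\inplen - l$ is small relative to $B$ (e.g.\ $\inplen - l \le B$), where the horizon-$(\inplen - l)$ optimal algorithm degenerates; the statement should still hold since the isomorphism is purely structural, but I would state the degenerate cases explicitly so the reduction to \Cref{t:randworstcase} is unambiguous.
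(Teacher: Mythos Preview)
Your proposal is correct and takes essentially the same approach as the paper: both argue that a prefix of $l$ bad days contributes zero to the algorithm's cost and zero to the hindsight benchmark $\min\{k,B\}$, so the residual decision problem at day $l+1$ is exactly a fresh instance of the finite-horizon ski-rental problem with horizon $T-l$. The paper's proof is three sentences and does not spell out the subgame isomorphism or re-invoke \Cref{l:truerandworstcase} inside the subgame as you do, but the underlying idea is identical.
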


By \Cref{l:reducetimehorizon}, the probability of \textit{stopping} on the sequence $$\left(\underbrace{0,\dots,0,}_{l}\underbrace{1,\dots,1}_{k}\right),$$ conditional on \textit{continuing} to time index $l+k-1$, for the subgame optimal algorithm, is $\eta^{\inplen-l}_{k}$. 

It remains to characterize the subgame optimal decisions on input sequences that start with a good weather day. We show that the conditional probability of \textit{stopping} is the same as that of \textit{stopping} had all the bad weather days been provided contiguously before providing the good weather days.

\begin{lemma}
\label{l:characterizesubgame}
Consider a sub-sequence $(X_{1},\dots,X_{i},1)$ such that the total number of good weather days in $(X_{1},\dots,X_{i})$ is $k-1$. The probability of \textit{stopping} at $(X_{1},\dots,X_{i},1)$, conditional on \textit{continuing} up until this point, is $\eta_{k}^{\inplen-i+k-1}$. 
\end{lemma}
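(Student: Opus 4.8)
The plan is to show that everything about the subgame the algorithm faces after a history $(X_1,\dots,X_i)$ on which it has continued is a function of only two quantities: the number of remaining days $\inplen-i$ and the number $k-1$ of good days observed so far; in particular it does not depend on where among the first $i$ days the bad days fell. I would begin by recording the two observations that make bad days ``invisible'' to the subgame. First, continuing on a bad day costs $0$, so conditioned on reaching this node the algorithm has incurred cost exactly $k-1$ no matter how the good and bad days were interleaved. Second, bad days contribute $0$ to the optimal hindsight cost, so for every completion $(X_{i+1},\dots,X_{\inplen})$ of the input this cost equals $\min\{(k-1)+(\text{number of good days among } X_{i+1},\dots,X_{\inplen}),\,\stopcost\}$. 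Since the set of remaining decisions (which of the $\inplen-i$ future days to see, and when --- if ever --- to stop) and every leaf payoff (a ratio whose numerator and denominator both carry the same additive constant $k-1$) are therefore determined by $(\inplen-i,\,k-1)$ alone, the extensive-form subgame rooted at $(X_1,\dots,X_i)$ is isomorphic to the subgame rooted at the all-good history $(1,\dots,1)$ of length $k-1$ in a \emph{fresh} finite-horizon ski-rental instance with horizon $H := (\inplen-i)+(k-1) = \inplen-i+k-1$.

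Given this, I would conclude as follows. A subgame optimal strategy restricted to any subgame is subgame optimal in that subgame, and subgame-optimal play in a subgame is determined by the (minmax structure of the) subgame itself; hence it transfers across the isomorphism, and the probability that $\text{SG-OPT}$ stops at $(X_1,\dots,X_i,1)$ conditional on continuing up to that point equals the probability that $\text{SG-OPT}$ stops at the all-good history of length $k$ in the horizon-$H$ instance, conditional on continuing. That latter probability is exactly $\eta^{H}_k$: applying \Cref{l:reducetimehorizon} with no leading bad days together with \Cref{c:conditionprobworstcase}, and using that every subgame optimal algorithm is worst-case optimal --- so that its conditional stopping probabilities along the contiguous-good-day path are the $\eta$'s of \Cref{c:conditionprobworstcase} --- the conditional probability of stopping on the $k$-th contiguous good day in a horizon-$H$ instance is $\eta^{H}_{k}$. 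Substituting $H=\inplen-i+k-1$ yields the claim. The same reasoning can be packaged as an induction on the number $i-(k-1)$ of bad days in the prefix, removing them one at a time --- a leading bad day via \Cref{l:reducetimehorizon} and a non-leading one via the subgame isomorphism above --- each step decrementing both the bad-day count and the working horizon by one until the base case $(1,\dots,1)$ is reached.

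The step I expect to be the main obstacle is the transfer-across-isomorphism argument. Formally the algorithm is a map from full histories to (distributions over) decisions, so a priori it may act differently on two histories that share the summary statistics $(\inplen-i,\,k-1)$; what rules this out is optimality, and one has to make precise that the operative notion here --- a strategy that is a minmax (equivalently, sequential-equilibrium) strategy of the worst-case-ratio game in every subgame --- depends only on the subtree below the current node, whose structure and leaf payoffs we argued coincide under the isomorphism. A minor bookkeeping point to dispatch is well-definedness: at any history reachable with positive probability under $\text{SG-OPT}$ the index $k$ falls in the range for which $\eta^{\inplen-i+k-1}_{k}$ is defined in \Cref{c:conditionprobworstcase}, and at unreachable histories the statement is about the conditional law of $\text{SG-OPT}$ and holds by the same characterization.
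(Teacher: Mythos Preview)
Your proposal is correct and follows essentially the same approach as the paper: observe that, conditional on having continued, the subgame depends only on the number of good days seen and the number of days remaining, so the history can be rearranged to place all bad days first, after which \Cref{l:reducetimehorizon} (and \Cref{c:conditionprobworstcase}) yield $\eta_k^{\inplen-i+k-1}$. The paper's own proof is a two-sentence version of this argument, simply asserting that ``the minimization problem to the algorithm player is the same as that of observing a sequence where all the bad weather days arrive first'' and then invoking \Cref{l:reducetimehorizon}; your cost-accounting justification for why the subgames are isomorphic is exactly the content the paper leaves implicit.
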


This completes the reduction from subgame optimal algorithms to worst-case optimal algorithms. We can compare the worst-case algorithm in \autoref{t:randworstcase} to the subgame optimal algorithm. The objective we use is the worst-case ratio of their performances. Let $\text{WC-OPT}$ denote a worst-case optimal algorithm and $\text{SG-OPT}$ denote a subgame optimal algorithm. The competitive ratio according to the benchmark defined in \Cref{d:sgoptbenchmark} is then $$\max_{\mathbf{X}}\frac{\text{WC-OPT}(\mathbf{X})}{\text{SG-OPT}(\mathbf{X})}.$$ 

\begin{lemma}
\label{l:subgameoptbenchmark}
    The competitive ratio against the subgame optimal benchmark for $\text{WC-OPT}$ is the value of the game, i.e., \begin{itemize}
        \item if $\ceil{B}+\floor{B} \geq T+1$, it is,  $$\frac{1}{1-\frac{(\inplen-B)}{\floor{B}}\left(\frac{B-1}{B}\right)^{\inplen-\floor{B}}}.$$
        \item if $\ceil{B}+\floor{B} < T+1$, it is, $$\frac{1}{1-\frac{(\ceil{B}-1)}{B}\left(\frac{B-1}{B}\right)^{\ceil{B}-1}}.$$
    \end{itemize}
\end{lemma}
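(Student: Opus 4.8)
The plan is to show that $\max_{\mathbf{X}}\tfrac{\text{WC-OPT}(\mathbf{X})}{\text{SG-OPT}(\mathbf{X})}$ equals $\beta$, the value of \textit{SRP} (equivalently the competitive ratio of the finite-horizon ski-rental problem computed just after \Cref{t:randworstcase}), and then to note that $\beta$ is exactly the pair of displayed expressions. For the upper bound I would use a two-sided sandwich against the offline optimum. For every input $\mathbf{X}$ with $k$ good days, $\text{SG-OPT}(\mathbf{X})\ge \OPT_{\mathbf{X}}(\mathbf{X})=\min\{k,B\}$, since under any realization of its internal randomness the resulting deterministic algorithm pays at least $\min\{k,B\}$; and $\text{WC-OPT}(\mathbf{X})\le \beta\cdot\OPT_{\mathbf{X}}(\mathbf{X})$, since $\text{WC-OPT}$ is worst-case optimal with competitive ratio $\beta$. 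Dividing gives $\text{WC-OPT}(\mathbf{X})/\text{SG-OPT}(\mathbf{X})\le\beta$ for all $\mathbf{X}$.

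For the matching lower bound I would exhibit the single input $\mathbf{X}^{*}=(\underbrace{0,\dots,0}_{T-1},1)$, for which $\OPT_{\mathbf{X}}(\mathbf{X}^{*})=\min\{1,B\}=1$. On one hand, every pure strategy $A^{l}$ in the support of $\text{WC-OPT}$ ignores bad-weather days, so $\text{WC-OPT}$ pays on $\mathbf{X}^{*}$ exactly what it pays on the one-good-day input $F^{1}$; by the computation after \Cref{t:randworstcase} the adversary's equilibrium payoff is precisely the payoff of the one-good-day strategy, so $\text{WC-OPT}(\mathbf{X}^{*})=\text{WC-OPT}(F^{1})=\beta\cdot\OPT_{\mathbf{X}}(F^{1})=\beta$. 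On the other hand, by \Cref{l:reducetimehorizon} the subgame $\text{SG-OPT}$ enters after continuing (at zero cost) through the $T-1$ leading bad days is the worst-case problem with horizon $T-(T-1)=1$; the worst-case-optimal algorithm for horizon $1$ never stops (its stopping-index set in \Cref{c:conditionprobworstcase} is empty because $\lceil B\rceil+\lfloor B\rfloor\ge 2$; equivalently, stopping costs $B\ge 1$ while continuing on a single good day costs $1$), so $\text{SG-OPT}(\mathbf{X}^{*})=1=\OPT_{\mathbf{X}}(\mathbf{X}^{*})$. Hence $\text{WC-OPT}(\mathbf{X}^{*})/\text{SG-OPT}(\mathbf{X}^{*})=\beta$, and together with the upper bound the maximum is exactly $\beta$.

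Finally I would identify $\beta$ with the claimed formulas: these are exactly the two expressions for the competitive ratio stated after \Cref{t:randworstcase}, case for case, using the elementary identity $\frac{(T-B)(B-1)}{B\lfloor B\rfloor}\bigl(\tfrac{B}{B-1}\bigr)^{-(T-\lfloor B\rfloor-1)}=\frac{T-B}{\lfloor B\rfloor}\bigl(\tfrac{B-1}{B}\bigr)^{T-\lfloor B\rfloor}$ in the regime $\lceil B\rceil+\lfloor B\rfloor\ge T+1$ (the regime $\lceil B\rceil+\lfloor B\rfloor<T+1$ already matches verbatim). The main obstacle, and the only place the argument carries content, is the lower-bound step: one must verify that $\text{SG-OPT}$ genuinely collapses to offline-optimal behavior on $\mathbf{X}^{*}$, which hinges on \Cref{l:reducetimehorizon} (leading bad days shrink the horizon) together with the fact that for a horizon no larger than $\lfloor B\rfloor$ the worst-case-optimal algorithm never stops, while $\text{WC-OPT}$, being oblivious to bad-weather days, still incurs the full worst-case ratio on the very same input; everything else is a one-line inequality or an algebraic rewrite of quantities already derived after \Cref{t:randworstcase}.
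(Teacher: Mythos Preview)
Your proposal is correct and follows essentially the same approach as the paper: both arguments sandwich the ratio via $\OPT_{\mathbf{X}}(\mathbf{X})$, use worst-case optimality of $\text{WC-OPT}$ for the upper bound, and achieve the matching lower bound on the specific input consisting of $T-1$ bad days followed by one good day (where $\text{WC-OPT}$ still pays the full value $\beta$ while $\text{SG-OPT}$, by \Cref{l:reducetimehorizon}, collapses to the horizon-$1$ optimal and pays exactly $1$). Your write-up is simply more explicit about the verification steps and the algebraic identification of $\beta$ with the displayed formulas.
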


\begin{proof}[Proof of \Cref{l:subgameoptbenchmark}]
    We can rewrite the ratio as the ratio between the competitive ratio of $\text{WC-OPT}$ and $\text{SG-OPT}$, i.e., $$\max_{\mathbf{X}}\frac{\text{WC-OPT}(\mathbf{X})}{\text{OPT}_{\mathbf{X}}(\mathbf{X})} \frac{\text{OPT}_{\mathbf{X}}(\mathbf{X})}{\text{SG-OPT}(\mathbf{X})}.$$ The proof of \autoref{t:randworstcase} tells us that the competitive ratio of $\text{WC-OPT}$ on an input sequence with $1$ good day is exactly the value of the game. This is the maximum possible value $\frac{\text{WC-OPT}(\mathbf{X})}{\text{OPT}_{\mathbf{X}}(\mathbf{X})}$ can take. If the $\mathbf{X}$ is such that the only good day is provided at the end, then $\frac{\text{SG-OPT}(\mathbf{X})}{\text{OPT}_{\mathbf{X}}(\mathbf{X})}=1$. This is the minimum possible value it can take. The lemma follows. 
\end{proof}

The result above tells us that the difference in the performance of worst-case optimal and subgame optimal can be as bad as the worst-case approximation ratio. If one expects to not see worst-case inputs, as has been argued for real-world applications \citep{Roughgarden20}, a subgame optimal algorithm gives better guarantees while maintaining the same worst-case approximation ratio.
\section{Efficient Computation Of Robust Algorithms}
\label{s:framework}

In the previous section, we were able to characterize the optimal robust algorithm. In general, obtaining such a characterization is analytically hard. Viewing it through a game theoretic lens may allow us to efficiently compute such an algorithm. Computing an optimal robust algorithm reduces to computing a mixed Nash equilibrium in the zero-sum game between algorithm and adversary. Unfortunately, the pure strategy space of the algorithm is exponentially large and the adversary has infinitely many pure strategies. Traditional techniques to compute equilibria, such as no-regret learning (\citealp{FS-99}) and the Ellipsoid algorithm (\citealp{Khachiyan-79}), cannot be applied directly to the game. We identify certain properties on the game which reduce it to a half-infinite action game with one player having access to a best response oracle. The best response oracle computes a best response for the player with infinite strategies, for any mixed strategy of the player with finite strategies, in time efficient in the support size of the mixed strategy. Finding an equilibrium in such a setting is efficient.

\begin{lemma}[\citep{FS-97},\citep{HLP-19}]
\label{l:prelimlearningexist}
    There exists an algorithm which computes an $\epsilon$-approximate Nash equilibrium in a two-player zero-sum game, where one player has finite set of actions and the other player has access to a best response oracle, in time which is polynomial in the number of actions of the first player, $1/{\epsilon}$ and the upper bound on the payoffs of the game.
\end{lemma}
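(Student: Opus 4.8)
The plan is to run the Freund--Schapire reduction from approximate equilibrium computation to no-regret online learning, using the best-response oracle to simulate the player with the large action space. Call the player with $n$ actions the \emph{learner} and the other player the \emph{oracle player}, let $v$ be the value of the game, and let $U$ bound the absolute value of all payoffs. We play $N$ rounds. In round $t$ the learner plays the mixed strategy $x_t \in \Delta([n])$ dictated by the multiplicative-weights (Hedge) algorithm, and the oracle player plays a pure best response $y_t$ to $x_t$, obtained from one call to the best-response oracle on $x_t$; since $x_t$ is supported on at most $n$ actions, this call runs in time $\poly(n)$ (and in the bit length of the payoffs, i.e. in $\log U$). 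After the round the learner observes the payoff vector $(u(i,y_t))_{i\in[n]}$ against the single action $y_t$ and performs its $O(n)$-time Hedge update. The output is $\bar x = \tfrac1N\sum_t x_t$ together with $\bar y$, the uniform mixture over $y_1,\dots,y_N$.

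First I would invoke the standard Hedge guarantee: with the learning rate tuned to $\epsilon$ and $U$, and $N = \Theta(U^2 \log n / \epsilon^2)$, the learner's average payoff over the run is within $\epsilon/2$ of the best fixed action in hindsight against the realized sequence $y_1,\dots,y_N$. Next, since each $y_t$ is an \emph{exact} best response, taking the learner to be the maximizer (the minimizing case is symmetric) we get $u(x_t,y_t) = \min_y u(x_t,y) \le v$, so the run's average payoff is at most $v$; while the Hedge bound gives the run's average payoff at least $\max_x u(x,\bar y) - \epsilon/2 \ge v - \epsilon/2$. This sandwiches the average payoff within $[v-\epsilon/2,\ v]$. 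Finally, linearity of $u$ in the learner's randomization yields $\min_y u(\bar x,y) = \tfrac1N\sum_t \min_y u(x_t,y) = \tfrac1N\sum_t u(x_t,y_t) \ge v - \epsilon/2$, and $\max_x u(x,\bar y) = \tfrac1N\sum_t u(x,y_t)$ evaluated at the best $x$ is $\le v + \epsilon/2$; hence at $(\bar x,\bar y)$ neither player can improve its payoff by more than $\epsilon$ via a unilateral deviation, so $(\bar x,\bar y)$ is an $\epsilon$-approximate Nash equilibrium. The running time is $N$ times the $\poly(n,\log U)$ per-round cost, which is $\poly(n, 1/\epsilon, U)$ as required.

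The step requiring care is the interface with the infinite action set: the oracle player's actions are never enumerated, so the running time has no dependence on the cardinality of that set, and the output $\bar y$ is a polynomially sized object, namely a uniform distribution over the $N \le \poly(n,1/\epsilon,U)$ best responses produced during the run. One must also check that payoff boundedness is genuinely used --- it is exactly what makes the Hedge regret bound, and hence $N$ and the total running time, polynomial rather than merely finite (this is why the "Bounded Best Response Utility" hypothesis is needed in the application). No separate appeal to the minimax theorem for the (possibly infinite) game is necessary: the two-sided sandwich above follows directly from the regret bound together with the fact that the oracle returns exact best responses.
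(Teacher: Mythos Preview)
Your argument is essentially correct and matches one of the two proofs the paper gives (the Freund--Schapire no-regret approach of Appendix~B.1, culminating in Lemma~\ref{l:appxnashlearn}); the paper's explicitly labeled proof of Lemma~\ref{l:prelimlearningexist}, however, takes the alternative \citet{HLP-19} route via the Ellipsoid method (Appendix~B.2), using the best-response oracle as a separation oracle for the finite player's LP and then solving a reduced game over only the violated constraints to recover the infinite player's mixture. Both arguments yield the same polynomial dependence; the Ellipsoid route has the minor advantage of a $\log(1/\epsilon)$ rather than $\poly(1/\epsilon)$ iteration count, while your no-regret route is simpler and more self-contained.

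One slip to fix: you write $\min_y u(\bar x,y) = \tfrac1N\sum_t \min_y u(x_t,y)$, but linearity only gives $u(\bar x,y)=\tfrac1N\sum_t u(x_t,y)$, and then $\min_y$ of a sum is at least the sum of the $\min_y$'s, so the correct relation is $\min_y u(\bar x,y)\ge \tfrac1N\sum_t \min_y u(x_t,y)$. Fortunately the inequality points the right way for your sandwich, so the conclusion is unaffected.
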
  

We provide sufficient conditions on the prior-independent online algorithm environment which imply the existence of efficient algorithms to compute a robust algorithm. The properties that our framework requires to hold are as follows.

\begin{property}(Small-Cover)
\label{p:frameworkutilcloseness}
Given $\epsilon > 0$, there exists a $\Theta(\epsilon$)-cover, denoted by $\pidistclassepsgrid$, of the class of distributions $\pidistclass$, such that for every $\pidisteps \in \pidistclassepsgrid$ and corresponding $\Theta(\epsilon)$-ball around $\pidisteps$ given by $\pidistclassepsgrid$,

$$\left | \frac{\pialg(\pidist)}{\bayopt(\pidist)} - \frac{\pialg(\pidisteps)}{\bayopteps(\pidisteps)} \right | \leq \frac{\epsilon}{4},$$

\noindent
for all algorithms $\pialg \in \pialgclass$. Furthermore, $\suppsize{\pidistclassepsgrid} = \poly\left(T,\frac{1}{\epsilon}\right)$, where $\inplen$ is the size of the input.
\end{property}

\begin{property}(Efficient Best Response)
\label{p:frameworkbestresponse}
There exists an algorithm which solves the optimization problem

$$\pialgbestresponse = \argmin_{\pialg \in \pialgclass}\mathbf{E}_{F \sim \pidistoverdist}\left[\frac{\pialg(\pidist)}{\bayopt(\pidist)}\right]$$

\noindent
for any $\pidistoverdist \in \Delta(\pidistclassepsgrid)$, efficiently in the size of the representation of $\pidistoverdist$ and length of the input $\inplen$.

\end{property}

\begin{property}(Efficient Utility Computation)
\label{p:frameworkutilcomputation}
For any $\pidistoverdist \in \Delta(\pidistclassepsgrid)$, for the optimal algorithm $\pialgbestresponse$ obtained by $\Cref{p:frameworkbestresponse}$, there exists an algorithm which computes, for any $\pidist \in \pidistclassepsgrid$, the utility $ \frac{\pialgbestresponse(\pidist)}{\bayopt(\pidist)}$ in run-time polynomial in the size of the input, i.e., $\text{poly}(\inplen)$. 
\end{property}

\begin{property}(Bounded Best Response Utility)
\label{p:frameworkboundedbr}
    For any $\pidistoverdist \in \Delta(\pidistclassepsgrid)$, any best response $\pialgbestresponse$ is such that, for any $\pidist \in \pidistclassepsgrid$, the utility $ \frac{\pialgbestresponse(\pidist)}{\bayopt(\pidist)}$ is at most $\text{poly}(\inplen)$.
\end{property}

Natural approaches to solve \textit{Efficient Best Response}, like backwards induction or dynamic programming, would also solve \textit{Efficient Utility Computation}. \textit{Small-Cover} intuitively says that the class of distributions $\pidistclass$ can be approximated by another class of distributions $\pidistclassepsgrid$, where $\pidistclassepsgrid \subset \pidistclass$ and the cardinality of $\pidistclassepsgrid$ is finite. Such a property would be specific to the class of distributions and algorithms considered. While it may seem as a very strong condition as it is required to hold for all $\pialg \in \pialgclass$, generally the class of algorithms can be restricted to a subset with the property that for any distribution over distributions of the adversary, an optimal best response of the algorithm lies in this class. The four properties can then be applied to that subclass. 

Further concerning \textit{Small-Cover}, it is relatively straightforward to show that for a fixed algorithm, the difference in its performance on two distributions that are close is small. However, the challenge is to show that the difference in the ratio of the performance of the algorithm to the performance of the Bayesian optimal is still small for distributions that are close. Typically, the Bayesian optimal cost can be arbitrarily small and because of this, straightforward bounds on the closeness of the ratio could be very large. Such a bound would imply a longer running time due to the finer cover required. For the application we consider in \autoref{s:model}, we show a non-trivial bound on the performance of the ratio objective, even though the Bayesian optimal cost could be arbitrarily small. With these properties, our main result is as follows.

\begin{theorem}
\label{t:apxpipoly}
If the class of algorithms $\pialgclass$ and the class of distributions $\pidistclass$ satisfy Small-Cover, Efficient Best Response, Efficient Utility Computation and Bounded Best Response Utility, an approximate robust optimal algorithm can be computed in time $\poly\left(\inplen,\frac{1}{\epsilon}\right)$ for inputs of length $\inplen$.
\end{theorem}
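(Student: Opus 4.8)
The plan is to reduce the robust algorithm design problem to an approximate-equilibrium computation in a finite ``half-infinite'' zero-sum game. Let $G_\epsilon$ denote the zero-sum game in which the algorithm player's strategy set is $\pialgclass$ (or its mixed extension), the adversary's strategy set is the finite cover $\pidistclassepsgrid$ from \emph{Small-Cover}, and the adversary's payoff on the pure profile $(\pialg,\pidisteps)$ is $\pialg(\pidisteps)/\bayopteps(\pidisteps)$. First I would compute an approximate equilibrium of $G_\epsilon$ using \Cref{l:prelimlearningexist}; then I would argue that the resulting mixed algorithm is near-optimal not just against the cover but against the entire class $\pidistclass$, and that the whole procedure runs in time $\poly(T,1/\epsilon)$.

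\emph{Step 1: the cover preserves the value.} Using \emph{Small-Cover}, for every fixed $\pialg \in \pialgclass$ and every $\pidist \in \pidistclass$ there is a cover point $\pidisteps$ with $\bigl|\pialg(\pidist)/\bayopt(\pidist) - \pialg(\pidisteps)/\bayopteps(\pidisteps)\bigr| \le \epsilon/4$, and $\pidistclassepsgrid \subseteq \pidistclass$. Taking the inner maximum over the adversary and then the infimum over (mixed) algorithms, this shows $|\text{val}(G_\epsilon) - \piratio| \le \epsilon/4$, and more usefully that any mixed algorithm $\pirandalgvar$ that is $\delta$-optimal in $G_\epsilon$ is $(\delta + \epsilon/2)$-optimal in the full robust game. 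The only non-formal point here is that the per-algorithm closeness bound of \emph{Small-Cover} extends to the mixed extension, which follows by linearity of $\pialg \mapsto \pialg(\pidist)$ in the randomization.

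\emph{Step 2: solve $G_\epsilon$ and conclude.} The adversary in $G_\epsilon$ has only $\suppsize{\pidistclassepsgrid} = \poly(T,1/\epsilon)$ pure actions. \emph{Efficient Best Response} supplies the oracle required by \Cref{l:prelimlearningexist}: for any $\pidistoverdist \in \Delta(\pidistclassepsgrid)$ it returns $\pialgbestresponse$ in time polynomial in $|\supp{\pidistoverdist}|$ and $T$. Along the learning dynamics the algorithm player only ever plays such best responses, and \emph{Bounded Best Response Utility} bounds the adversary's payoff against every one of them, on every cover point, by $\poly(T)$; \emph{Efficient Utility Computation} lets the adversary actually evaluate those $\poly(T)$ payoffs in order to run its (multiplicative-weights type) update. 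Hence \Cref{l:prelimlearningexist} applies with payoff bound $\poly(T)$ and produces an $(\epsilon/4)$-approximate equilibrium of $G_\epsilon$ — in particular an $(\epsilon/4)$-optimal mixed algorithm $\pirandalgvar$, a uniform mixture over polynomially many deterministic algorithms — in time $\poly(T,1/\epsilon)$. By Step 1, $\pirandalgvar$ is $O(\epsilon)$-optimal for the robust algorithm design problem; rescaling $\epsilon$ by a constant gives the claimed guarantee and running time.

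\emph{Main obstacle.} I expect the delicate point to be reconciling Step 1 with the payoff-boundedness required in Step 2: a priori the adversary payoff $\pialg(\pidist)/\bayopt(\pidist)$ is unbounded since $\bayopt(\pidist)$ can be arbitrarily small, so one cannot invoke either the cover estimate or \Cref{l:prelimlearningexist} with a uniform payoff bound over all of $\pialgclass$. The resolution is structural rather than computational: a bound is only ever needed over the best responses the oracle emits (\emph{Bounded Best Response Utility}) and only on the polynomially many cover points (\emph{Efficient Utility Computation}), and the cover error in \emph{Small-Cover} is stated directly for the \emph{ratio} objective, not for $\pialg(\pidist)$ in isolation. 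Threading these four properties through the learning procedure — and verifying that the mixed-extension form of \emph{Small-Cover} is exactly what the approximate-equilibrium guarantee consumes — is the crux; the remaining work is bookkeeping on the approximation parameters.
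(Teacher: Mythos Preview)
Your proposal is correct and follows essentially the same route as the paper: reduce to the finite ``reduced game'' on the cover $\pidistclassepsgrid$, invoke \Cref{l:prelimlearningexist} with \emph{Efficient Best Response} as the oracle, \emph{Efficient Utility Computation} for the learner's updates, and \emph{Bounded Best Response Utility} for the payoff bound, and then lift the approximate equilibrium back to $\pidistclass$ via \emph{Small-Cover}. The paper packages your Step~1 as two lemmas (\Cref{l:frameworkdisctocont} and \Cref{l:frameworkepsapprox}) together with \autoref{f:apxnashminmax} to pass from an approximate Nash equilibrium to an approximate minmax strategy, but the content and the constant-tracking are the same as what you sketched.
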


The main issue in computing an optimal algorithm is the size of the strategy spaces of each player. A natural direction leads to the question of computing a near-optimal robust algorithm. This problem reduces to computing an approximate Nash equilibrium in the game. The four properties aid us in solving this question by allowing us to solve for an approximate equilibrium in a reduced game. The key idea is to take the perspective of the adversary by learning a worst-case distribution over inputs. We move towards proving \autoref{t:apxpipoly} below.

\textbf{Original Game} : Pure strategy set of the algorithm is $\pialgclass$. Pure strategy set of the adversary is $\pidistclass$.

\textbf{Reduced Game} : Pure strategy set of the algorithm is $\pialgclass$. Pure strategy set of the adversary is $\pidistclassepsgrid$.

We first show that for any algorithm, for any adversary strategy in the original game, there exists a corresponding strategy in the reduced game such that their payoffs are $\epsilon$-close to one another.

\begin{lemma}
\label{l:frameworkdisctocont}
Assuming \textit{Small-Cover}, for a fixed algorithm strategy, for every adversary strategy in the original game, there exists an adversary strategy in the reduced game such that their payoffs are $\frac{\epsilon}{4}$-close to each other.
\end{lemma}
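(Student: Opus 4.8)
The plan is to leverage \textit{Small-Cover} (\Cref{p:frameworkutilcloseness}) almost directly: fix an algorithm strategy $\pialg \in \pialgclass$ and an arbitrary adversary strategy $\pidist \in \pidistclass$ in the original game. Since $\pidistclassepsgrid$ is a $\Theta(\epsilon)$-cover of $\pidistclass$, there exists a center $\pidisteps \in \pidistclassepsgrid$ whose associated $\Theta(\epsilon)$-ball contains $\pidist$. I would then take the corresponding adversary strategy in the reduced game to be exactly the (pure) strategy $\pidisteps$. The payoff of $(\pialg,\pidist)$ in the original game is $\pialg(\pidist)/\bayopt(\pidist)$ and the payoff of $(\pialg,\pidisteps)$ in the reduced game is $\pialg(\pidisteps)/\bayopteps(\pidisteps)$, so \textit{Small-Cover} immediately gives
$$\left| \frac{\pialg(\pidist)}{\bayopt(\pidist)} - \frac{\pialg(\pidisteps)}{\bayopteps(\pidisteps)} \right| \leq \frac{\epsilon}{4},$$
which is the claimed $\frac{\epsilon}{4}$-closeness.

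The one subtlety I would be careful about is the quantifier structure: \textit{Small-Cover} states the bound "for all algorithms $\pialg \in \pialgclass$," i.e.\ the choice of cover and of covering ball is uniform in $\pialg$, so the same $\pidisteps$ works regardless of which algorithm strategy was fixed. This is exactly what "for a fixed algorithm strategy, for every adversary strategy $\ldots$ there exists" needs — the map $\pidist \mapsto \pidisteps$ is induced purely by which ball of the cover contains $\pidist$ and does not depend on $\pialg$. I would state this explicitly so the reader sees the lemma is just a repackaging of the property in game-theoretic language (original vs.\ reduced game), with the reduced-game adversary strategy chosen as the relevant cover center.

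There is essentially no hard step here — the lemma is the trivial direction of the cover argument (the genuinely hard direction, establishing that such a cover with the ratio-closeness bound exists for the ski-rental application despite $\bayopt(\pidist)$ possibly being tiny, is handled where \textit{Small-Cover} is verified, not here). If anything, the only thing worth a sentence is noting that a pure cover center $\pidisteps$ is a legal (degenerate) element of $\Delta(\pidistclassepsgrid)$, so it is indeed an adversary strategy in the reduced game, and that the argument passes unchanged to mixed adversary strategies by taking convex combinations (though for this lemma the pure-to-pure statement suffices, and the mixed extension is what gets used in the subsequent value-comparison step between the two games).
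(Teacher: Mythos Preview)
Your proposal is correct and is essentially the same argument as the paper's, just packaged differently: the paper phrases the lemma for a general (possibly mixed) adversary strategy $\pidistoverdist$ and explicitly constructs the pushforward $\pidistoverdisteps$ by partitioning $\pidistclass$ into the cover balls (in some fixed order) and moving the mass of each cell onto its center, then bounds $\lvert\mathbf{E}_{\pidist\sim\pidistoverdist}[\util(\pialg,\pidist)]-\mathbf{E}_{\pidisteps\sim\pidistoverdisteps}[\util(\pialg,\pidisteps)]\rvert$ cellwise using \textit{Small-Cover}. This is exactly your ``pure-to-pure plus convex combinations'' extension written out in full, so there is no substantive difference.
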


Utilizing \Cref{l:frameworkdisctocont}, we now show that an approximate Nash equilibrium in the reduced game is an approximate Nash equilibrium in the original game. 

\begin{lemma}
\label{l:frameworkepsapprox}
Assuming \textit{Small-Cover}, an $\frac{\epsilon}{4}$-approximate Nash equilibrium in the reduced game is an $\frac{\epsilon}{2}$-approximate Nash equilibrium in the original game. 
\end{lemma}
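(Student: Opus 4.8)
The plan is to exploit the fact that the reduced game differs from the original game only by shrinking the adversary's action set --- the algorithm's action set is $\pialgclass$ in both, and $\pidistclassepsgrid \subseteq \pidistclass$ --- together with \Cref{l:frameworkdisctocont}, which controls the error introduced by this restriction. Fix an $\frac{\epsilon}{4}$-approximate Nash equilibrium $(\pirandalg^{*},\pidistoverdist^{*})$ of the reduced game, where $\pirandalg^{*} \in \Delta(\pialgclass)$ and $\pidistoverdist^{*} \in \Delta(\pidistclassepsgrid)$, and write $u(\pialg,\pidist) = \pialg(\pidist)/\bayopt(\pidist)$ for the adversary's payoff, extended bilinearly to mixed strategies. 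Let $c = u(\pirandalg^{*},\pidistoverdist^{*})$ be the payoff at this profile; since $\pidistoverdist^{*}$ is supported on $\pidistclassepsgrid$, this number is the same whether the profile is evaluated in the reduced or the original game.

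First I would dispatch the algorithm player, which is immediate. The mixed strategy $\pidistoverdist^{*}$ is a legal adversary strategy in the original game, and for every $\pialg \in \pialgclass$ the payoff $u(\pialg,\pidistoverdist^{*})$ is literally the same quantity in both games, since it depends only on the distributions in the support of $\pidistoverdist^{*}$, all of which lie in $\pidistclassepsgrid \subseteq \pidistclass$. The approximate-equilibrium condition in the reduced game therefore already gives $u(\pialg,\pidistoverdist^{*}) \geq c - \frac{\epsilon}{4} \geq c - \frac{\epsilon}{2}$ for all $\pialg$, so the algorithm player cannot improve by more than $\frac{\epsilon}{2}$ (in fact more than $\frac{\epsilon}{4}$) by deviating in the original game.

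The substantive step is the adversary's side. Take any pure deviation $\pidist \in \pidistclass$. By \Cref{l:frameworkdisctocont} applied to the fixed algorithm strategy $\pirandalg^{*}$, there is $\pidisteps \in \pidistclassepsgrid$ with $\lvert u(\pirandalg^{*},\pidist) - u(\pirandalg^{*},\pidisteps)\rvert \leq \frac{\epsilon}{4}$. Here I would note explicitly that although \emph{Small-Cover} is stated per deterministic algorithm, the cover element witnessing closeness is the same for every $\pialg \in \pialgclass$ --- namely the center of the $\Theta(\epsilon)$-ball of $\pidistclassepsgrid$ that contains $\pidist$ --- so averaging over the support of $\pirandalg^{*}$ and applying the triangle inequality keeps the bound at a single $\frac{\epsilon}{4}$. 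Since $(\pirandalg^{*},\pidistoverdist^{*})$ is an $\frac{\epsilon}{4}$-approximate equilibrium of the reduced game and $\pidisteps$ is a legal action there, $u(\pirandalg^{*},\pidisteps) \leq c + \frac{\epsilon}{4}$. Chaining the two inequalities yields $u(\pirandalg^{*},\pidist) \leq c + \frac{\epsilon}{2}$ for every $\pidist \in \pidistclass$, which is exactly the adversary's half of the $\frac{\epsilon}{2}$-approximate equilibrium condition; combining with the previous paragraph proves the lemma.

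I expect the only point requiring real care is this last step: ensuring the \emph{Small-Cover} error is a single $\frac{\epsilon}{4}$ rather than something that grows with the support of $\pirandalg^{*}$, and that it then adds to the $\frac{\epsilon}{4}$ slack of the reduced-game equilibrium to give exactly $\frac{\epsilon}{2}$. Everything else is bookkeeping about which action set contains which, plus the (standard) observation that we use the additive notion of approximate equilibrium, legitimate here because payoffs are bounded by \emph{Bounded Best Response Utility}.
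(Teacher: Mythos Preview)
Your proposal is correct and follows essentially the same route as the paper: the algorithm side is free because its action set is unchanged, and for the adversary you take an arbitrary deviation $\pidist$, round it to a cover point via \Cref{l:frameworkdisctocont}, and add the $\tfrac{\epsilon}{4}$ cover error to the $\tfrac{\epsilon}{4}$ equilibrium slack. The only difference is cosmetic --- the paper phrases the adversary step in terms of a best response to $\pirandalg^{*}$ rather than an arbitrary pure deviation --- and your explicit remark that the cover element is uniform in $\pialg$ (so the $\tfrac{\epsilon}{4}$ bound survives averaging over the support of $\pirandalg^{*}$) is a point the paper leaves implicit in \Cref{l:frameworkdisctocont}.
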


 We can then apply \Cref{l:prelimlearningexist}, on efficient computation of a $\frac{\epsilon}{4}$-approximate Nash equilibrium, to the reduced game. A proof for \autoref{t:apxpipoly} is provided below.

\begin{proof}[Proof of \autoref{t:apxpipoly}] 
Applying the algorithm in \Cref{l:prelimlearningexist}, we compute an $\frac{\epsilon}{4}$-approximate Nash equilibrium. By Efficient Best Response, Efficient Utility Computation, Bounded Best Response Utility and Small-Cover, the run-time of the algorithm in \Cref{l:prelimlearningexist} to compute an $\frac{\epsilon}{4}$-approximate Nash equilibrium will be $\text{poly}\left(\inplen,\frac{1}{\epsilon}\right)$. By \Cref{l:frameworkepsapprox}, an $\frac{\epsilon}{4}$-approximate Nash equilibrium in the reduced game will be an $\frac{\epsilon}{2}$-approximate Nash equilibrium in the original game. \autoref{f:apxnashminmax} then tells us that the mixed strategy of the algorithm player will be an $\epsilon$-approximate prior-independent optimal algorithm. 
\end{proof}

Before we look at an example in the form of the ski-rental problem, we prove a result which will be useful in restricting the class of algorithms that need to be considered while computing a robust algorithm. The result below says that it suffices to reduce to strategy space of the algorithm player to algorithms satisfying a general property.

\begin{lemma}
\label{l:reducedgameapxeqlb}
In a two-player game between player $1$, with pure strategy set $\pialgclass$, and player $2$, with pure strategy set $\pidistclass$, if for any mixed strategy that player $2$ chooses, a best response of player $1$ lies in a subset of strategies $\pialgclass^{*} \subset \pialgclass$, then an $\epsilon$-approximate Nash equilibrium in the reduced game between player $1$ and player $2$ with pure strategy sets $\pialgclass^{*}$ and $\pidistclass$ respectively is an $\epsilon$-approximate Nash equilibrium in the original game.
\end{lemma}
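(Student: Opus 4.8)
The plan is to show that restricting player $1$'s strategy set from $\pialgclass$ to $\pialgclass^{*}$ does not change the value of the zero-sum game, and that an $\epsilon$-approximate equilibrium of the restricted game certifies an $\epsilon$-approximate equilibrium of the original game. Let $(\pirandalgvar, \pidistoverdistvar)$ be an $\epsilon$-approximate Nash equilibrium in the reduced game, where $\pirandalgvar \in \Delta(\pialgclass^{*})$ and $\pidistoverdistvar \in \Delta(\pidistclass)$. The $\epsilon$-approximate equilibrium conditions there are: (i) player $1$ cannot improve its payoff by more than $\epsilon$ by deviating to any pure strategy in $\pialgclass^{*}$ against $\pidistoverdistvar$; and (ii) player $2$ cannot improve by more than $\epsilon$ by deviating to any pure strategy in $\pidistclass$ against $\pirandalgvar$. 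Since $\pialgclass^{*} \subseteq \pialgclass$, player $1$'s deviation set is the only thing that grows when we pass to the original game, so condition (ii) for player $2$ is inherited verbatim. The whole argument therefore reduces to checking that player $1$ also cannot improve by more than $\epsilon$ by deviating to a strategy in $\pialgclass \setminus \pialgclass^{*}$ against $\pidistoverdistvar$.

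For that step, I would invoke the hypothesis directly: against the particular mixed strategy $\pidistoverdistvar$ that player $2$ plays, a best response of player $1$ over all of $\pialgclass$ lies in $\pialgclass^{*}$. Hence $\min_{\pialg \in \pialgclass} \pialg(\pidistoverdistvar) = \min_{\pialg \in \pialgclass^{*}} \pialg(\pidistoverdistvar)$ (reading payoffs as costs to player $1$, as in the paper's ratio objective, where player $1$ minimizes and player $2$ maximizes). By condition (i), $\pirandalgvar$ achieves cost within $\epsilon$ of $\min_{\pialg \in \pialgclass^{*}} \pialg(\pidistoverdistvar)$, and this equals $\min_{\pialg \in \pialgclass} \pialg(\pidistoverdistvar)$, so no deviation in the larger set $\pialgclass$ helps player $1$ by more than $\epsilon$. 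Combined with the inherited condition (ii), this shows $(\pirandalgvar, \pidistoverdistvar)$ is an $\epsilon$-approximate Nash equilibrium in the original game.

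The only real subtlety — and the step I would be most careful about — is making sure the direction of the best-response hypothesis matches what we need, and that the payoff convention is handled consistently. The hypothesis says ``a'' best response lies in $\pialgclass^{*}$, which is exactly enough to conclude the two minima over $\pialgclass$ and $\pialgclass^{*}$ coincide, since $\pialgclass^{*} \subseteq \pialgclass$ already gives one inequality and the existence of a best response inside $\pialgclass^{*}$ gives the reverse. No continuity, compactness, or finiteness assumptions beyond those already implicit in the paper's equilibrium notions are needed, and randomization on player $1$'s side is harmless because a mixed strategy's cost is a convex combination of pure-strategy costs, so the minimum over pure strategies already bounds it. I would write the proof in two short displayed chains: one establishing $\min_{\pialgclass} = \min_{\pialgclass^{*}}$ against $\pidistoverdistvar$, and one combining this with the two $\epsilon$-slack inequalities.
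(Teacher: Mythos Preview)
Your proposal is correct and follows essentially the same approach as the paper: player $2$'s condition is inherited because $\pidistclass$ is unchanged, and for player $1$ you use the hypothesis that a best response to $\pidistoverdistvar$ already lies in $\pialgclass^{*}$ to conclude that the optimum over $\pialgclass$ equals the optimum over $\pialgclass^{*}$, so the $\epsilon$-slack carries over. The only cosmetic difference is that the paper phrases this step by contradiction (assume some $\pialg \in \pialgclass$ gives an $\epsilon$-improvement, then the best response in $\pialgclass^{*}$ would too), whereas you argue it directly via the equality of minima; the content is identical.
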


We apply this framework to the prior-independent finite-horizon ski-rental problem.

\subsection{Prior-Independent Finite-Horizon Ski-Rental Problem}
\label{s:model}

In this section we describe the model of the prior-independent ski-rental problem and show that it satisfies the \textit{Small-Cover}, \textit{Efficient Best Response}, \textit{Efficient Utility Computation} and \textit{Bounded Best Response Utility} property, thus implying the existence of a FPTAS to compute an approximate prior-independent optimal algorithm.

The prior-independent finite-horizon ski-rental problem makes a distributional assumption over the space of inputs in the finite-horizon ski-rental problem (described in \autoref{ss:finitehorizonskirental}). The distributional assumption over the space of inputs is as follows. Each day the weather is good with some probability $\skidist$, i.i.d over $\inplen$ days. The class of distributions $\skidistclass$ we consider is $\skidistclass = [\delta,1]$, for a $\delta > 0$, where a distribution in this class is identified with the parameter $\skidist$. We denote a distribution over probabilities as $\skidistoverdist \in \Delta(\skidistclass)$. The distribution over distributions can be thought of as sampling $\skidist \sim \skidistoverdist$ followed by instantiating the ski rental problem with this i.i.d probability of a good day.




For a fixed distribution $\skidist$, let the expected performance of the Bayesian optimal algorithm be defined as $\skibayopt(\skidist)$. For a deterministic algorithm $\pialg$, let the expected performance of the algorithm on a fixed probability $\skidist$ be defined as $A(p)$, where the expectation is over the randomness of the sequences of days generated. We can now state the prior-independent ski rental problem.

\begin{definition}
\label{d:pisrp}
The prior-independent problem is given by the class of distributions $\skidistclass = [\delta,1]$, $\delta > 0$ \footnote{We consider the range of feasible distributions to be $[\delta,1]$, for $\delta > 0$ as $\skidist=0$ is a null instance for which the Bayesian optimal performance is zero. This would lead to an undefined payoff in the prior-independent framework.}, the class of algorithms $\pialgclass$ and solves for the randomized algorithm that optimizes for the worst-case ratio $$\piratio = \min_{\pirandalg \in \Delta(\pialgclass)} \left[\max_{\skidist \in [\delta,1]} \frac{\mathbf{E}_{\pialg \sim \pirandalg}[\pialg(\skidist)]}{\skibayopt(\skidist)}\right] $$
\end{definition}

\begin{figure}[tb]
  \begin{subfigure}[t]{0.46\textwidth}
    \centering
    \includegraphics[width=\linewidth]{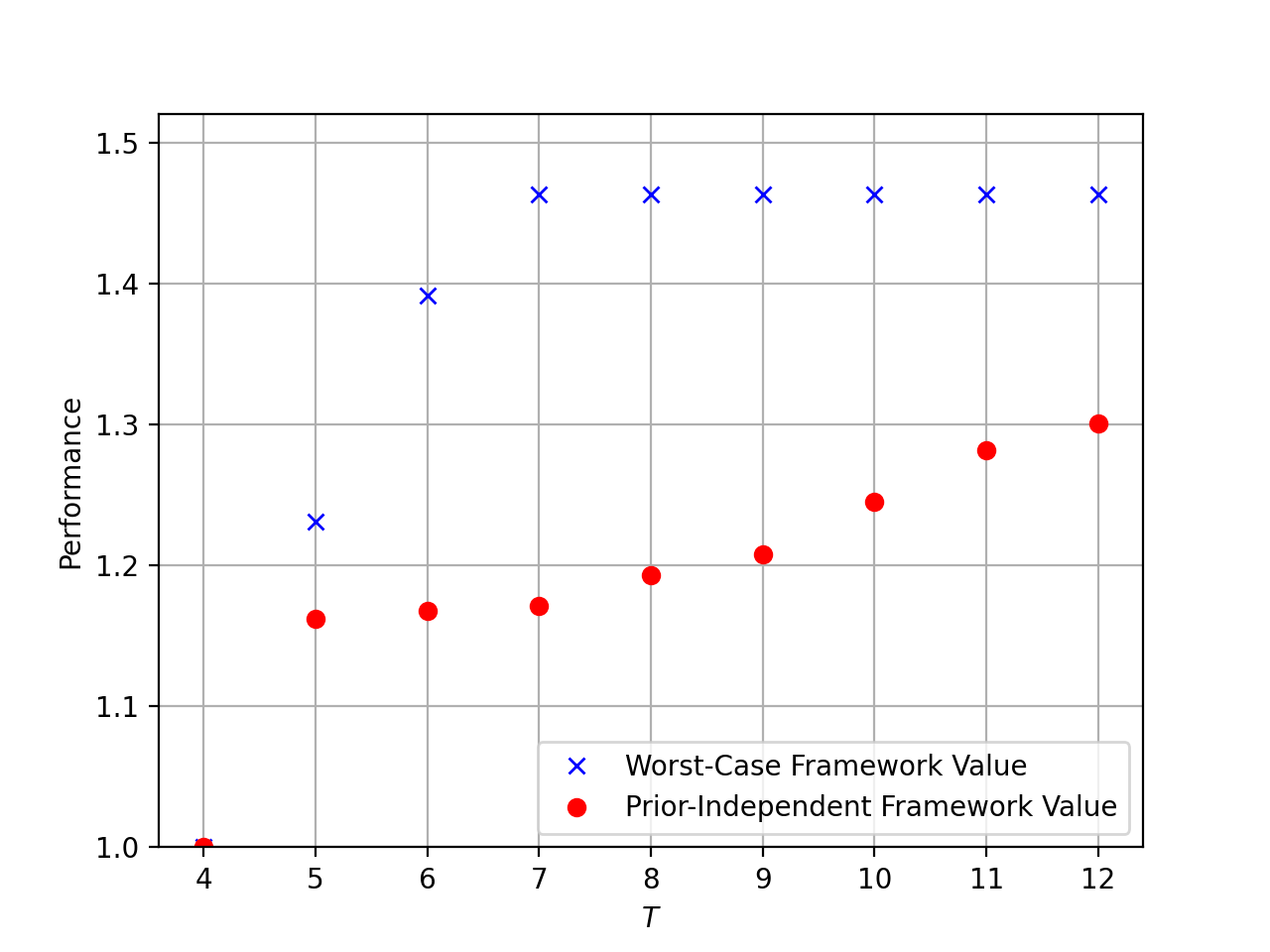} 
    \caption{Prior-Independent ratio as a function of $\inplen$} 
    \label{fig:parta}
  \end{subfigure}
  \hfill
  \begin{subfigure}[t]{0.46\textwidth}
    \centering
    \includegraphics[width=\linewidth]{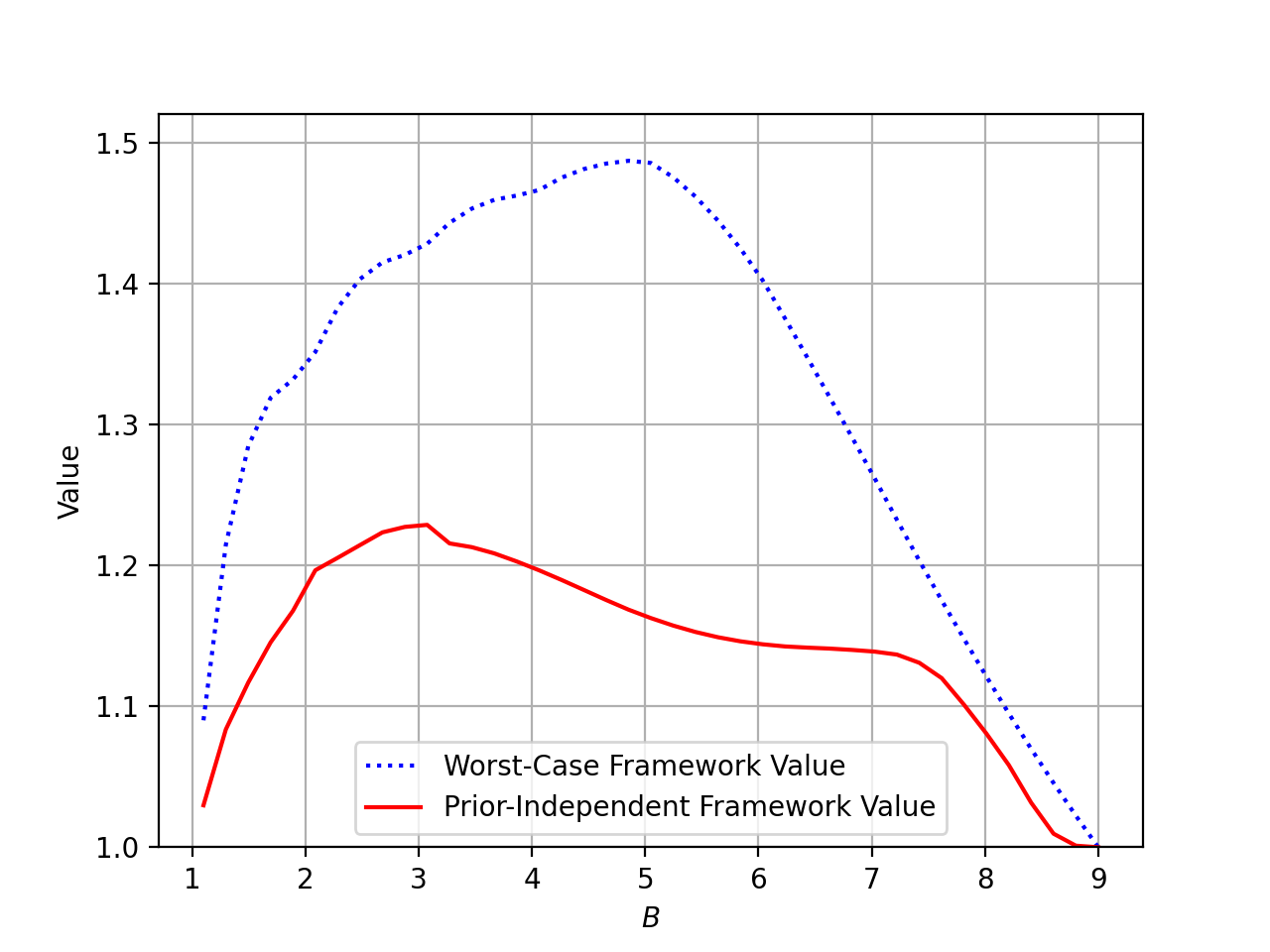} 
    \caption{Prior-Independent ratio as a function of $\stopcost$} 
    \label{fig:partb}
    \end{subfigure}
  \caption{\autoref{fig:parta} plots the prior-independent ratio as a function of integral $\inplen$ for a fixed $\stopcost=4$. \autoref{fig:partb} plots the prior-independent ratio as a function of continuous $\stopcost$ for a fixed $\inplen=9$. Compare these values to the worst-case ratio for the finite-horizon ski-rental problem as derived in \autoref{s:sub-gameoptalgs}.}
  \label{fig:piratio}
\end{figure}

\noindent
In general, the prior-independent ratio will depend on $\stopcost$ and $\inplen$. For reference, \autoref{fig:piratio} depicts the prior-independent ratio of the ski-rental problem for different instances of $\stopcost$ and $\inplen$. The approximation parameter used was $\epsilon=0.01$. We can see from the plots that the prior-independent ratios obtained are better than the worst-case ratios. The zero-sum game was approximately solved using the online learning reduction and the Multiplicative Weights Update (\citealp{FS-99}). 

We now work towards showing that the prior-independent ski-rental problem satisfies Small-Cover, Efficient Best Response, Efficient Utility Computation and Bounded Best Response Utility. \autoref{ss:bayoptchar} characterizes the Bayesian optimal algorithm and the Bayesian optimal cost for every distribution $\skidist \in [\delta,1]$. The characterization will help us in proving the Small-Cover property. We prove a property on the utilities in the zero-sum game formulation which help us prove the Small-Cover property. \autoref{ss:piskirentprop} shows that the prior-independent ski-rental problem satisfies Small-Cover, Efficient Best Response, Efficient Utility Computation and Bounded Best Response Utility, thus showing that an approximate prior-independent optimal online algorithm can be computed efficiently. 

\subsubsection{Characterization of the Bayesian Optimal Algorithm}
\label{ss:bayoptchar}

We first characterize the Bayesian optimal algorithm for a given time horizon $\inplen$, a stop cost $\stopcost > 1$, distribution $\skidist \in [\delta,1]$ and thus find $\skibayopt(\skidist)$. 

\begin{theorem}
\label{t:bayopt}
For a given time horizon $\inplen$ and stop cost $\stopcost$, the Bayesian optimal algorithm, for a distribution $\skidist \in \left[\frac{\stopcost-1}{\inplen-1},1\right]$, is to stop on a good day at time index $k$ if $p \geq \frac{\stopcost-1}{\inplen - k}$, otherwise continue, where $k \in [\min\{\ceil{\inplen-\stopcost},\inplen-1\}]$. For $\skidist \in [\delta,\frac{\stopcost-1}{\inplen-1}]$, the Bayesian optimal algorithm always continues.
\end{theorem}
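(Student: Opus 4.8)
The plan is to reduce the problem to a one‑dimensional dynamic program over the number of remaining days. Since the weather is i.i.d.\ across days and any cost already paid for past good days is sunk, the optimal expected cost‑to‑go of an algorithm that has not yet stopped depends only on the number of remaining days and the current day's weather; hence it is without loss to consider deterministic algorithms whose action on day $k$ depends only on $\inpelement{k}$. Let $g(n)$ denote the optimal expected remaining cost when $n$ days remain, the expectation taken before the next day's weather is revealed, so $g(0)=0$ and, by a one‑step optimality argument,
\[
g(n)=\skidist\,\min\!\bigl\{\stopcost,\,1+g(n-1)\bigr\}+(1-\skidist)\,\min\!\bigl\{\stopcost,\,g(n-1)\bigr\}.
\]
First I would establish two facts by induction on $n$: (i) $g(n)\le\stopcost$, so the second minimum is always attained by $g(n-1)$ --- i.e.\ it is weakly optimal never to stop on a bad day, and the recursion simplifies to $g(n)=\skidist\min\{\stopcost,1+g(n-1)\}+(1-\skidist)g(n-1)$; and (ii) $g$ is nondecreasing in $n$ and $g(n)\le n\skidist$ (the latter because ``always continue'' is a feasible strategy). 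On a good day at time index $k$ there are $\inplen-k$ days remaining afterwards, so stopping (cost $\stopcost$) is weakly better than continuing (cost $1+g(\inplen-k)$) exactly when $g(\inplen-k)\ge\stopcost-1$. The whole theorem then reduces to the equivalence
\[
g(n)\ge\stopcost-1 \quad\Longleftrightarrow\quad n\skidist\ge\stopcost-1 .
\]

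The core step is proving this equivalence, which I would do by pinning down $g$ exactly. Put $m=\min\{j\ge1:j\skidist>\stopcost-1\}$; this is finite since $\skidist\ge\delta>0$, and $m\ge1$ since $\stopcost>1$. An induction on $n\le m$ using the simplified recursion gives $g(n)=n\skidist$: while $n\le m$ we have $g(n-1)=(n-1)\skidist\le(m-1)\skidist\le\stopcost-1$, so the minimum equals $1+g(n-1)$ and $g(n)=\skidist+g(n-1)$. For $n>m$, monotonicity gives $g(n-1)\ge g(m)=m\skidist>\stopcost-1$, so the minimum equals $\stopcost$ and $g(n)=\skidist\stopcost+(1-\skidist)g(n-1)$, which solves to $g(n)=\stopcost-(\stopcost-m\skidist)(1-\skidist)^{n-m}$; since $0<\stopcost-m\skidist<1$ this lies strictly between $\stopcost-1$ and $\stopcost$. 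Comparing $g(n)$ with $\stopcost-1$ in the cases $n<m$, $n=m$, $n>m$ (using $g(n)=n\skidist\le(m-1)\skidist\le\stopcost-1$ in the first) yields the equivalence. I expect this exact evaluation of $g$ to be the main obstacle: the direction ``$n\skidist<\stopcost-1\Rightarrow g(n)<\stopcost-1$'' is immediate from $g(n)\le n\skidist$, but the reverse direction genuinely needs the lower bound on $g(n)$ that only the recursion supplies.

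Finally I would assemble the statement. By the equivalence, on a good day at time $k$ it is optimal to stop iff $\skidist\ge\frac{\stopcost-1}{\inplen-k}$; for $k=\inplen$ there are no remaining days, $g(0)=0<\stopcost-1$, and continuing is optimal. This threshold can be triggered only at time indices $k$ with $\frac{\stopcost-1}{\inplen-k}\le1$, i.e.\ $\inplen-k\ge\stopcost-1$; rounding to the integer $\inplen-k$ and assigning the cost‑equivalent boundary case to ``continue'' gives exactly $k\in[\min\{\ceil{\inplen-\stopcost},\inplen-1\}]$. If instead $\skidist\le\frac{\stopcost-1}{\inplen-1}$, then $\frac{\stopcost-1}{\inplen-k}\ge\frac{\stopcost-1}{\inplen-1}\ge\skidist$ for every $k\in\{1,\dots,\inplen-1\}$, so it is optimal to continue on every day, which is the final claim.
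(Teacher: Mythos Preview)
Your proof is correct and takes essentially the same backward-induction approach as the paper: both compare the stop cost $\stopcost$ against $1$ plus the optimal expected continuation cost and show the crossover occurs exactly at the claimed threshold. Your presentation via the value function $g(n)$ and its closed form is somewhat more systematic than the paper's interval-by-interval argument, but the underlying dynamic-programming idea is identical.
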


The characterization in the result above helps us derive an expression for Bayesian optimal cost. For a probability $\skidist \in \left[\frac{B-1}{T-1},1\right]$, let $k$ be the last day at which the algorithm $\skibayopt$ will stop. The formula for the expected cost is 
$$\skibayopt(\skidist) = \stopcost[1-(1-\skidist)^{k}] + (1-\skidist)^{k}\skidist(\inplen-k).$$ When the probability $\skidist \in \left[\delta,\frac{\stopcost-1}{\inplen-1}\right]$, the cost is $$\skibayopt(\skidist) = \inplen \skidist.$$ It will be important to analyze the expected cost for a general algorithm $\pialg$. This will help us understand how utilities behave in the game we consider. More importantly, it will help in showing the Small-cover property for the prior-independent problem.

\begin{lemma}
\label{l:polycost}
For a given time horizon $\inplen$, a deterministic algorithm $\pialg$, and a distribution $\skidist \in [\delta,1]$, the cost $\pialg(\skidist)$ is a polynomial in $\skidist$ and has degree at most $\inplen$.
\end{lemma}

The prior-independent problem is trivial when $\stopcost \geq \inplen$. This is because no matter what the distribution $\skidist$ is, the Bayesian optimal is to always continue and thus the agent strategy of continuing for all days achieves the best possible prior-independent ratio of one. From now on we only consider $\stopcost < \inplen$.

\subsubsection{Properties of the Prior-Independent Ski-Rental Problem}
\label{ss:piskirentprop}

In this section, we show that the prior-independent ski-rental problem satisfies Small-Cover, Efficient Best Response, Efficient Utility Computation and Bounded Best Response Utility. It is without loss to restrict the class of algorithms to those which are \textit{information-symmetric} and do not stop on observing bad weather. This class satisfies the four properties, which gives us our desired result. Backwards Induction is an algorithm which for any randomization over distributions $\skidistoverdist$, solves the optimization problem
$$\min_{\pialg \in \pialgclass}\mathbf{E}_{\skidist \sim \skidistoverdist}\left[\frac{\pialg(\skidist)}{\skibayopt(\skidist)}\right].$$ 
It is useful to observe that for any distribution over distributions, any best response never stops on seeing a bad day of weather. Any strategy which stops on seeing a bad day of weather is strictly dominated by a strategy which stops at the next good day. It is without loss to consider such algorithms. 

An optimal algorithm makes decisions based on the count of the number of good days in a sub-sequence of days. We formally define what we mean by such an information-symmetric algorithm for the ski-rental problem. In \Cref{d:infosymmetricalgs}, the first argument for function $A_{i}$ is the count of good weather days observed in the first $(i-1)$ days. The second argument is the type of weather on the $i^{\text{th}}$ day.

\begin{definition}
\label{d:infosymmetricalgs}
An information-symmetric deterministic algorithm $\pialg = (\alg{1},\alg{2},\dots,\alg{\inplen})$ is such that

$$\alg{i} : \{0,1,\dots,\timeindex-1\} \times \{0,1\} \to \{\stp,\continue\} \quad \forall \quad \timeindex \in [\inplen] $$

\noindent
where $\stp$ is a decision to stop and $\continue$ is a decision to continue.  
\end{definition}

We first prove that for any fixed distribution that the adversary plays, an optimal algorithm, which is obtained by backwards induction, is information-symmetric. 

\begin{lemma}
\label{l:bestresponseinfosymmetric}
For any distribution over distributions, $\skidistoverdist \in \Delta([\delta,1])$, that the adversary plays, an optimal algorithm for that distribution obtained by Backwards Induction is information-symmetric.
\end{lemma}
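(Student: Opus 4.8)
The plan is to reduce the statement to the elementary fact that, for i.i.d.\ Bernoulli observations, the number of successes is a sufficient statistic, and then to observe that the backwards-induction recursion for the Bayesian optimal algorithm naturally operates on the triple (day index, count of good days observed so far, weather on the current day).

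First I would rewrite the objective that Backwards Induction minimizes in a form that exposes count-sufficiency. Writing the expected cost as $\pialg(\skidist)=\sum_{\piinput\in\{0,1\}^{\inplen}}\Pr_{\skidist}[\piinput]\,\pialg(\piinput)$ and using $\Pr_{\skidist}[\piinput]=\skidist^{g}(1-\skidist)^{\inplen-g}$ with $g=\sum_i X_i$, one obtains
$$\mathbf{E}_{\skidist\sim\skidistoverdist}\Bigl[\frac{\pialg(\skidist)}{\skibayopt(\skidist)}\Bigr]=\sum_{\piinput\in\{0,1\}^{\inplen}}\pialg(\piinput)\,q\bigl(\textstyle\sum_i X_i\bigr),\qquad q(g):=\int_{\delta}^{1}\frac{\skidist^{g}(1-\skidist)^{\inplen-g}}{\skibayopt(\skidist)}\,d\skidistoverdist(\skidist).$$
Since $\skidist\mapsto\skibayopt(\skidist)$ is a minimum of finitely many polynomials (\Cref{l:polycost}, using \Cref{t:bayopt} for which deterministic algorithms are relevant), it is continuous and strictly positive on the compact interval $[\delta,1]$, hence bounded below by a positive constant, so each $q(g)$ is a finite nonnegative number. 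Thus Backwards Induction is minimizing a nonnegative linear functional of the cost vector $(\pialg(\piinput))_{\piinput}$ whose coefficient of $\pialg(\piinput)$ depends on $\piinput$ only through its number of good days.

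Next I would run the backwards-induction recursion and track what the subproblem at a node depends on. Fix $\inplen$, and consider a history $h=(X_1,\dots,X_i)$ on which the algorithm has chosen $\continue$ on every day; let $g_i=\sum_{j\le i}X_j$. The portion of the objective that future decisions can still influence is $\sum_{\piinput:\piinput_{1:i}=h}\pialg(\piinput)\,q(\sum_j X_j)$; since on a continued history the cost already incurred equals the fixed number $g_i$, this splits into a term independent of future decisions plus $\sum_{s\in\{0,1\}^{\inplen-i}}(\text{future cost along suffix }s)\cdot q(g_i+|s|_1)$. Every quantity appearing here is determined by the pair $(g_i,i)$ and the numbers $q(\cdot)$ — not by the placement of good days inside $h$ — so the optimal value-to-go is a well-defined function $\Phi(g_i,i)$. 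Peeling off day $i+1$ and conditioning on $X_{i+1}=x$, stopping contributes $\stopcost\sum_{s'\in\{0,1\}^{\inplen-i-1}}q(g_i+x+|s'|_1)$ to that branch while continuing contributes $x\sum_{s'}q(g_i+x+|s'|_1)+\Phi(g_i+x,i+1)$. Both expressions are functions of $(g_i,x,i)$ alone, so the decision recorded by Backwards Induction at day $i+1$ is a function of (number of good days among the first $i$ days, the weather $X_{i+1}$, the index $i$); fixing once and for all a tie-breaking rule that is itself a function of this triple makes the returned algorithm match \Cref{d:infosymmetricalgs}.

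The main obstacle I anticipate is bookkeeping rather than conceptual: one must carry out the recursion with the weights $q(\cdot)$ — which are not a probability distribution over suffixes — in place of the usual posterior probabilities, and be careful to separate the already-incurred continuation cost $g_i$ as a history-dependent but future-independent constant, so that it does not spuriously reintroduce dependence on the order of good days. A minor additional point worth stating explicitly is that Backwards Induction returns one particular optimal algorithm, so the tie-break must be pinned down as a deterministic function of the information state; otherwise one only concludes that \emph{some} optimal algorithm is information-symmetric rather than the one produced. With these caveats in place the argument is short.
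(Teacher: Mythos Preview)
Your proposal is correct and takes essentially the same approach as the paper: a backward induction on the day index combined with the fact that for i.i.d.\ Bernoulli observations the count of good days is a sufficient statistic, so both the future continuation cost (by the induction hypothesis) and the relevant weights in the stopping/continuing comparison depend on the history only through $(g_i,i)$. The paper phrases the second ingredient as ``the Bayesian updated distribution will be the same for a same count of good days'' whereas you make the weights $q(g)=\int \skidist^{g}(1-\skidist)^{\inplen-g}/\skibayopt(\skidist)\,d\skidistoverdist$ explicit; your formulation is arguably cleaner because the $1/\skibayopt(\skidist)$ factor means this is not literally a Bayesian posterior problem, and your remark about fixing a count-measurable tie-breaking rule is a point the paper leaves implicit.
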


An implication of this result is that for each sub-sequence of days, it is only the count of the number of good days in that sub-sequence which matters. Utilizing this, we now write our algorithm as a function of the count of good days in a given sub-sequence of days, rather than that of the exact sequence observed. Formally, we define below the class of algorithms we can restrict ourselves to and those which we focus on.

A straightforward application of \Cref{l:reducedgameapxeqlb} to \Cref{l:bestresponseinfosymmetric} tells us that it is without loss to restrict the class of algorithms to those that are information symmetric and do not stop on observing bad weather i.e.,

$$\pialgclass = \{\pialg : \pialg \quad \text{is information-symmetric; does not stop on observing bad weather}\}$$

We show that the four required properties are satisfied for the above class of algorithms. From now on, whenever we talk about an algorithm, we assume that it comes from the above class. The above restriction of information-symmetric algorithms allows us to show that for a discrete distribution played by the adversary, an optimal algorithm can be found in efficient time by Backwards Induction, thus satisfying Efficient Best Response.

\begin{theorem}
\label{t:polybackwardsinduction}
For a discrete distribution with support size $\suppsize{\skidistoverdist}$, Backwards Induction finds an optimal solution to the optimization problem
$$\min_{\pialg \in \pialgclass}\mathbf{E}_{\skidist \sim \skidistoverdist}\left[\frac{\pialg(\skidist)}{\skibayopt(\skidist)}\right],$$
in time $O(\suppsize{\skidistoverdist} T^{3})$. 
\end{theorem}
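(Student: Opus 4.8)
The plan is to describe a dynamic program over states indexed by $(i, c)$, where $i \in \{1,\dots,\inplen\}$ is the current day and $c \in \{0,\dots,i-1\}$ is the count of good weather days observed so far, and then argue that it computes the information-symmetric algorithm minimizing the objective $\mathbf{E}_{\skidist \sim \skidistoverdist}[\pialg(\skidist)/\skibayopt(\skidist)]$, and bound its running time. First I would observe that, since we have restricted attention via \Cref{l:bestresponseinfosymmetric} and \Cref{l:reducedgameapxeqlb} to algorithms that are information-symmetric and never stop on a bad day, a deterministic algorithm is fully specified by, for each state $(i,c)$ reached on a good day, a single bit: stop or continue. The key structural point is that the objective is \emph{linear} in the distribution over deterministic algorithms and, crucially, separable across the support of $\skidistoverdist$: writing $\mathbf{E}_{\skidist \sim \skidistoverdist}[\pialg(\skidist)/\skibayopt(\skidist)] = \sum_{\skidist \in \supp{\skidistoverdist}} \skidistoverdist(\skidist)\, \pialg(\skidist)/\skibayopt(\skidist)$, each term $\pialg(\skidist)$ is itself an expectation over the random sequence of days, and conditioning on reaching state $(i,c)$ the \emph{future} cost contribution decomposes cleanly. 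This is what lets backwards induction work: the optimal continuation decision at $(i,c)$ depends only on the subtree below it.

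The main steps, carried out from $i = \inplen$ down to $i=1$, are as follows. For each state $(i,c)$, I would maintain, for every $\skidist$ in the support, the probability of reaching that state (this is a fixed Binomial-type quantity, $\binom{i-1}{c}\skidist^{c}(1-\skidist)^{i-1-c}$, independent of the algorithm once we know we continue on bad days and have not yet stopped), together with the optimal expected \emph{future} cost-to-go conditioned on reaching $(i,c)$ and continuing. The Bellman recursion at $(i,c)$ compares the cost of stopping now — which on a good day is $\stopcost$ times the reach-probability weight, aggregated with the $1/\skibayopt(\skidist)$ weights over $\skidist$ — against the cost of continuing, which picks up the immediate expected cost of day $i$ (namely $\skidist$ on a good day, $0$ on a bad day) plus the already-computed optimal cost-to-go at $(i+1, c+1)$ (good) and $(i+1, c)$ (bad), each weighted by $\skidist$ and $1-\skidist$ respectively. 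The decision bit at $(i,c)$ is whichever is smaller; ties broken arbitrarily. After processing $i=1$ we read off the total objective from the root state $(1,0)$, and the recorded decision bits constitute the optimal algorithm. Correctness follows by the standard backwards-induction / principle-of-optimality argument applied to the finite decision tree, using linearity of expectation to push the $\sum_{\skidist}\skidistoverdist(\skidist)(\cdot)/\skibayopt(\skidist)$ outside.

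For the running time: there are $O(\inplen^2)$ states $(i,c)$, and at each state the Bellman update must aggregate a quantity over all $\suppsize{\skidistoverdist}$ points of the support, giving $O(\suppsize{\skidistoverdist}\,\inplen^2)$ work for the DP proper. The extra factor of $\inplen$ in the claimed bound $O(\suppsize{\skidistoverdist}\,\inplen^3)$ comes from precomputing the normalizers $\skibayopt(\skidist)$ for each $\skidist$ in the support: by \Cref{t:bayopt} and the closed form following it, $\skibayopt(\skidist) = \stopcost[1-(1-\skidist)^{k}] + (1-\skidist)^{k}\skidist(\inplen-k)$ where $k$ is determined by comparing $\skidist$ against the thresholds $(\stopcost-1)/(\inplen-k)$, and this takes $O(\inplen)$ time per support point, i.e. $O(\suppsize{\skidistoverdist}\,\inplen)$ total, which is dominated. (Alternatively, if one carries the reach-probabilities and cost increments without exploiting the binomial collapse, a naive accounting also lands at $O(\suppsize{\skidistoverdist}\,\inplen^3)$, which suffices.) The step I expect to require the most care is not the running-time bookkeeping but the \emph{correctness} argument — specifically, verifying that the "never stop on a bad day" and information-symmetry reductions are genuinely without loss for this particular convex-combination objective (so that the DP, which only searches over such algorithms, still finds a global optimum over all of $\pialgclass$), and that the cost-to-go decomposition is valid once stopping has not yet occurred. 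Both of these are supplied by \Cref{l:bestresponseinfosymmetric}, \Cref{l:reducedgameapxeqlb}, and the dominated-strategy observation preceding \Cref{d:infosymmetricalgs}, so the proof mostly amounts to assembling these pieces and writing the recursion explicitly.
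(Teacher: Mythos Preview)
Your proposal is correct and takes essentially the same approach as the paper: backwards induction over states $(i,c)$, maintaining per-support-point continuation costs and comparing stop versus continue at each state, with correctness inherited from the information-symmetry reduction of \Cref{l:bestresponseinfosymmetric}. The only difference is in the running-time accounting: the paper recomputes the reach-probabilities $\Pr[(k,i,1)\mid p]$ from scratch at each stage (costing $O(\suppsize{\skidistoverdist}\,i)$ per state, hence $O(\suppsize{\skidistoverdist}\,i^2)$ per stage and $O(\suppsize{\skidistoverdist}\,\inplen^3)$ total), whereas you propagate them recursively and land at $O(\suppsize{\skidistoverdist}\,\inplen^2)$---your parenthetical about the ``naive accounting'' is exactly the paper's version, so both bounds are fine for the stated theorem.
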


We now show that the prior-independent ski-rental problem satisfies Efficient Utility Computation. A key observation is our restriction to information-symmetric algorithms defined above. 

\begin{lemma}
\label{l:skirentproptwo}
For any information-symmetric algorithm $\pialg$, and any $\skidist \in [\delta,1]$, the utility $\util(\pialg,\skidist) = \frac{\pialg(\skidist)}{\skibayopt(\skidist)}$ can be computed in time $\poly(\inplen)$.
\end{lemma}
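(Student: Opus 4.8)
The plan is to compute the numerator $\pialg(\skidist)$ and the denominator $\skibayopt(\skidist)$ of $\util(\pialg,\skidist)$ separately, each in time $\poly(\inplen)$, and then output their quotient. The quotient is well defined because $\skidist \geq \delta > 0$ forces $\skibayopt(\skidist) > 0$ (this is exactly why the class of distributions is taken to be $[\delta,1]$ rather than $[0,1]$); in fact the explicit formulas below give $\skibayopt(\skidist) \geq \delta\min\{\inplen,\stopcost\}$.

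For the denominator I would invoke the characterization of the Bayesian optimal algorithm in \Cref{t:bayopt} together with the two closed-form cost expressions stated immediately after it. Concretely: if $\skidist < \tfrac{\stopcost-1}{\inplen-1}$ then $\skibayopt(\skidist) = \inplen\skidist$; otherwise let $k$ be the largest index in $\{1,\dots,\min\{\ceil{\inplen-\stopcost},\inplen-1\}\}$ with $\skidist \geq \tfrac{\stopcost-1}{\inplen-k}$ --- obtained by a single linear scan over at most $\inplen$ thresholds --- and then $\skibayopt(\skidist) = \stopcost\bigl[1-(1-\skidist)^{k}\bigr] + (1-\skidist)^{k}\skidist(\inplen-k)$. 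This takes $O(\inplen)$ arithmetic operations.

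For the numerator the key point is that an information-symmetric algorithm (\Cref{d:infosymmetricalgs}) behaves, at day $\timeindex$, as a function only of the pair $(\timeindex,g)$ where $g$ is the number of good days among the first $\timeindex-1$ days; hence running $\pialg$ on i.i.d.\ $\mathrm{Bernoulli}(\skidist)$ inputs is a Markov chain on the $O(\inplen^{2})$ states $\{(\timeindex,g) : 1 \leq \timeindex \leq \inplen,\ 0 \leq g \leq \timeindex-1\}$ together with an absorbing ``stopped'' state. I would maintain a table $q_{\timeindex}(g)$, the probability that $\pialg$ has not yet stopped at the start of day $\timeindex$ and has seen exactly $g$ good days so far, initialized by $q_{1}(0)=1$, along with a running accumulator for the expected cost. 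Processing day $\timeindex$ from each live state $(\timeindex,g)$: with probability $\skidist\, q_{\timeindex}(g)$ the day is good and $\pialg$ applies $\alg{\timeindex}(g,1)$, which contributes $\stopcost\, \skidist\, q_{\timeindex}(g)$ to the cost if this is $\stp$ and otherwise adds $\skidist\, q_{\timeindex}(g)$ to the cost and $\skidist\, q_{\timeindex}(g)$ to $q_{\timeindex+1}(g+1)$; symmetrically, with probability $(1-\skidist)\, q_{\timeindex}(g)$ the day is bad, $\pialg$ applies $\alg{\timeindex}(g,0)$, contributing $\stopcost(1-\skidist)\, q_{\timeindex}(g)$ if it stops and otherwise adding $0$ to the cost and $(1-\skidist)\, q_{\timeindex}(g)$ to $q_{\timeindex+1}(g)$. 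The accumulated total is $\pialg(\skidist)$, computed in $O(\inplen^{2})$ operations, so $\util(\pialg,\skidist)$ is obtained in $\poly(\inplen)$ time overall.

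There is no substantive obstacle here beyond bookkeeping: one must be careful that on a good day the algorithm pays the continuation cost $1$ only when it chooses $\continue$ (and pays $\stopcost$, not $1+\stopcost$, when it chooses $\stp$), and one should note that the restriction to information-symmetric algorithms is precisely what collapses the naive exponential-size state space (one state per history in $\{0,1\}^{\timeindex-1}$) down to the $O(\inplen^{2})$ good-day counts that make this dynamic program run in polynomial time.
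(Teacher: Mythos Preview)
Your proposal is correct and takes essentially the same approach as the paper: exploit information-symmetry to collapse the state space to the $O(\inplen^{2})$ pairs $(\timeindex,g)$ and run a dynamic program over them. The only cosmetic differences are that the paper computes backward (continuation costs $c^{\skidist}(k,\timeindex,\inpelement{\timeindex})$) rather than your forward reach-probability table $q_{\timeindex}(g)$, and the paper re-runs the same DP for $\skibayopt(\skidist)$ instead of using the closed-form formula you invoke; neither affects the argument.
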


We now show that Small-Cover is satisfied. We show an $\epsilon$-cover of the probability space which satisfies Small-Cover. We then apply the learning framework described in \autoref{s:framework} to show that a prior-independent optimal algorithm can be found in time $\poly\left(T,\frac{1}{\epsilon}\right)$. The idea to obtain an $\epsilon$-cover such that utilities are close to each other is to discretize the grid $[\delta,1]$. An intuition as to why discretizing works is that the expected performance of an algorithm does not change much if we consider two probabilities that are close to each other.

\begin{lemma}
\label{l:algoperfcloseness}
For a fixed algorithm $\pialg$, for any two probabilities $\skidist_{\alpha},\skidist_{\beta}$ such that 
   $$|\skidist_{\alpha}-\skidist_{\beta}| \leq \frac{\epsilon}{\inplen^{2}},$$
\noindent   
the expected costs of the algorithm are such that
   $$|\pialg(\skidist_{\alpha})-\pialg(\skidist_{\beta})| \leq \frac{\epsilon(\stopcost+\inplen)}{\inplen}.$$
\end{lemma}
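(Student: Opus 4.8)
The plan is to exploit \Cref{l:polycost}, which tells us that for a fixed deterministic algorithm $\pialg$ the cost $\pialg(\skidist)$ is a polynomial in $\skidist$ of degree at most $\inplen$. Since $\skidist \mapsto \pialg(\skidist)$ is a polynomial on the compact interval $[0,1]$, it is Lipschitz there, and the whole argument reduces to bounding the Lipschitz constant, i.e.\ controlling $|\pialg'(\skidist)|$ (or the sum of the absolute values of appropriate coefficients) on $[\delta,1]$. Concretely, I would write $\pialg(\skidist) = \sum_{\piinput \in \{0,1\}^{\inplen}} \pialg(\piinput)\,\skidist^{|\piinput|}(1-\skidist)^{\inplen - |\piinput|}$ as in the proof of \Cref{l:polycost} (where $|\piinput| = \sum_i X_i$), group terms by $j = |\piinput|$, and observe that the cost $\pialg(\piinput)$ along any sequence is at most $\stopcost + \inplen$ (the algorithm pays at most one stopping cost $\stopcost$ plus at most $\inplen$ continuation costs of $1$ each). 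Thus $\pialg(\skidist) = \sum_{j=0}^{\inplen} a_j \binom{\inplen}{j}\skidist^j(1-\skidist)^{\inplen-j}$ for suitable coefficients $a_j$ with $0 \le a_j \le \stopcost+\inplen$; in other words $\pialg(\skidist)$ is a convex-combination-weighted expectation of the per-sequence costs under a $\mathrm{Binomial}(\inplen,\skidist)$ count of good days.

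From this Bernstein-polynomial form the derivative has a clean expression. Differentiating, $\frac{d}{d\skidist}\,\mathrm{Bin}(\inplen,\skidist)[j] = \inplen\big(\mathrm{Bin}(\inplen-1,\skidist)[j-1] - \mathrm{Bin}(\inplen-1,\skidist)[j]\big)$, so after reindexing, $\pialg'(\skidist) = \inplen \sum_{j=0}^{\inplen-1}(a_{j+1}-a_j)\binom{\inplen-1}{j}\skidist^j(1-\skidist)^{\inplen-1-j}$. Since each $|a_{j+1}-a_j| \le \stopcost+\inplen$ and the Bernstein weights $\binom{\inplen-1}{j}\skidist^j(1-\skidist)^{\inplen-1-j}$ sum to $1$, we get $|\pialg'(\skidist)| \le \inplen(\stopcost+\inplen)$ uniformly on $[0,1]$. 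By the mean value theorem, for any $\skidist_\alpha,\skidist_\beta$, $|\pialg(\skidist_\alpha) - \pialg(\skidist_\beta)| \le \inplen(\stopcost+\inplen)\,|\skidist_\alpha-\skidist_\beta|$; plugging in $|\skidist_\alpha-\skidist_\beta| \le \epsilon/\inplen^2$ gives $|\pialg(\skidist_\alpha)-\pialg(\skidist_\beta)| \le \epsilon(\stopcost+\inplen)/\inplen$, which is exactly the claimed bound.

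The main obstacle is establishing the per-sequence cost bound $\pialg(\piinput)\le \stopcost+\inplen$ carefully and confirming that the rewriting into the $a_j$ (binomial-coefficient) form is legitimate — one must check that grouping sequences of the same good-day count is consistent, which is immediate here only because all good-day counts weigh $\skidist^j(1-\skidist)^{\inplen-j}$ identically; for a general (non-information-symmetric) algorithm $\pialg(\piinput)$ still varies over sequences of the same count, but that is fine since we only need $a_j := \mathbf{E}[\pialg(\piinput)\mid |\piinput|=j] \in [0,\stopcost+\inplen]$ and the argument goes through verbatim. An alternative, slightly more elementary route avoids differentiation entirely: bound $|\skidist_\alpha^j(1-\skidist_\alpha)^{\inplen-j} - \skidist_\beta^j(1-\skidist_\beta)^{\inplen-j}|$ directly via a telescoping/hybrid argument (each factor moved one at a time, each step costing at most $|\skidist_\alpha-\skidist_\beta|$, with $\inplen$ factors), then multiply by $\binom{\inplen}{j}$, sum over $j$, and use $\sum_j \binom{\inplen}{j}(\cdots) $ telescoping bounds; this yields the same $\inplen(\stopcost+\inplen)|\skidist_\alpha-\skidist_\beta|$ Lipschitz estimate without invoking calculus. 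I would present the Bernstein-derivative version as the cleaner proof and note the combinatorial alternative as a remark.
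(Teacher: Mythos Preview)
Your argument is correct and yields exactly the claimed bound, but it is genuinely different from the paper's. The paper proves the lemma by a probabilistic coupling: generate the two input sequences $\piinput'$ and $\piinput''$ from the same uniform randomness so that, day by day, they disagree with probability $|\skidist_\alpha-\skidist_\beta|$; a union bound over the $\inplen$ days gives $\Pr[\piinput'\neq\piinput'']\le \inplen|\skidist_\alpha-\skidist_\beta|\le \epsilon/\inplen$, and since any realized cost is at most $\stopcost+\inplen$, the expected difference is at most $(\epsilon/\inplen)(\stopcost+\inplen)$.

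Your route instead exploits the Bernstein-polynomial structure from \Cref{l:polycost}: write $\pialg(\skidist)=\sum_j a_j\binom{\inplen}{j}\skidist^j(1-\skidist)^{\inplen-j}$ with $a_j\in[0,\stopcost+\inplen]$, differentiate to get $\pialg'(\skidist)=\inplen\sum_j(a_{j+1}-a_j)\binom{\inplen-1}{j}\skidist^j(1-\skidist)^{\inplen-1-j}$, and read off the Lipschitz constant $\inplen(\stopcost+\inplen)$. Both arguments land on the same Lipschitz bound. The coupling proof is shorter and requires no calculus or algebraic identities; your analytic proof, on the other hand, isolates an explicit pointwise derivative bound $|\pialg'(\skidist)|\le \inplen(\stopcost+\inplen)$, which is a slightly stronger statement and could be reused if one later needed gradient information rather than just continuity. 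Your remark that the grouping into $a_j$ works for arbitrary (not just information-symmetric) algorithms via $a_j=\mathbf{E}[\pialg(\piinput)\mid |\piinput|=j]$ is the right way to handle the one subtlety in your approach.
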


While \Cref{l:algoperfcloseness} showed an algorithm performance closeness property, what we are really interested is in a utility-closeness property, i.e., for probabilities that are close to each other, the utility $\frac{\pialg(\skidist)}{\skibayopt(\skidist)}$ should be close to each other for any algorithm. The trivial lower bound of $\inplen \delta$ on the Bayesian optimal cost will give a utility-closeness bound as a function of $1/\delta$. This, however is not desirable as $\delta$ could be arbitrarily small. We prove a tighter utility-closeness bound which is independent of $\delta$. Proving the ratio of utilities are close requires an additional decomposition property of the algorithmic cost. The proof of this theorem is provided in \autoref{a:proofs-framework}. 

\begin{lemma}
\label{l:utilitycloseness}
For any $k \in [\min\{\ceil{\inplen-\stopcost},\inplen-1\}]$, consider two probabilites $\skidist_{\alpha},\skidist_{\beta}$ in the interval $\left(\frac{\stopcost-1}{\inplen-k},\min\left\{\frac{\stopcost-1}{\inplen-k-1},1\right\}\right]$ such that 
    $$|\skidist_{\alpha}-\skidist_{\beta}| \leq \frac{\epsilon}{8(\stopcost+1)^{2}(\inplen+1)^{2}\inplen}.$$
    
    \noindent
    Then for any algorithm $\pialg$ which does not stop on observing bad weather, we have
    $$|\util(\pialg,\skidist_{\alpha})-u(\pialg,\skidist_{\beta})| \leq \frac{\epsilon}{4}.$$
    \noindent
    This utility-closeness property also holds when $\skidist_{\alpha},\skidist_{\beta}$ are in the interval $\left[\delta,\frac{\stopcost-1}{\inplen-1}\right]$. 
\end{lemma}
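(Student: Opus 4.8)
The plan is to remove the dangerous small denominator $\skibayopt(p)$ before doing any estimation, by pulling a common factor of $p$ out of both the algorithm's cost and the Bayesian optimal cost. An algorithm $A$ that never stops on bad weather incurs cost $0$ on the all-bad sequence, so $A(0)=0$, and by \Cref{l:polycost} we may write $A(p)=p\,\tilde A(p)$ with $\tilde A$ a polynomial of degree $\le T-1$. Likewise, using the closed forms in \Cref{t:bayopt}, $\skibayopt(0)=0$ and $\skibayopt(p)=p\,\tilde O(p)$, where $\tilde O(p)=T$ on $[\delta,\frac{B-1}{T-1}]$ and $\tilde O(p)=B\sum_{i=0}^{k-1}(1-p)^i+(T-k)(1-p)^k$ on the interval indexed by $k$. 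Thus $\util(A,p)=\tilde A(p)/\tilde O(p)$, with no factor of $p$ left to blow up. Two elementary facts about $\tilde O$ follow at once from these formulas: $\tilde O(p)\ge 1$ on $[\delta,1]$ (indeed $\skibayopt(p)\ge 1-(1-p)^T\ge p$), and on each such interval $|\tilde O'(p)|\le (B+1)T^2$ by differentiating termwise.

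The heart of the argument is a uniform bound on $\tilde A$ and its derivative, which I would establish by induction on the horizon $T$: every algorithm not stopping on bad weather satisfies $0\le\tilde A(p)\le(B+T)T$ and $|\tilde A'(p)|\le(B+T)T^2$ for all $p\in[0,1]$. The step conditions on day $1$: $A(p)=(1-p)A_0(p)+pA_1(p)$, where $A_0$ is the $(T-1)$-day continuation after a bad first day and $A_1(p)$ is the cost given a good first day, which is either the constant $B$ or $1+\bar A(p)$ with $\bar A$ a $(T-1)$-day continuation; dividing by $p$ gives $\tilde A(p)=(1-p)\tilde A_0(p)+A_1(p)$, so $\tilde A$ is, up to an additive $B$ or $1$, a convex combination of the two normalized $(T-1)$-day costs $\tilde A_0,\tilde{\bar A}$. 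The convex-combination structure is what keeps the constants polynomial of the right degree; plugging in the induction hypothesis for horizon $T-1$ closes both inequalities. This is the \emph{additional decomposition of the algorithmic cost} the argument needs: treating $\tilde A$ as a black-box bounded degree-$(T-1)$ polynomial and invoking a Markov-type derivative bound would cost an extra factor of $T$ and fail against the prescribed $|p_\alpha-p_\beta|$.

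To assemble: first, $\util(A,p)\le B+T$ for every such $A$ and every $p$, because $A(p)\le(B+T)(1-(1-p)^T)$ ($A$ pays $0$ on the all-bad sequence and at most $B+T$ on any other) while $\skibayopt(p)\ge 1-(1-p)^T$ (the Bayesian optimum, being a best response, also never stops on bad weather, hence pays $\ge 1$ on every sequence containing a good day). Writing the utility difference as a ratio split,
\[
|\util(A,p_\alpha)-\util(A,p_\beta)|\le\frac{|\tilde A(p_\alpha)-\tilde A(p_\beta)|}{\tilde O(p_\alpha)}+\util(A,p_\beta)\,\frac{|\tilde O(p_\alpha)-\tilde O(p_\beta)|}{\tilde O(p_\alpha)}\le|\tilde A(p_\alpha)-\tilde A(p_\beta)|+(B+T)\,|\tilde O(p_\alpha)-\tilde O(p_\beta)|,
\]
using $\tilde O(p_\alpha)\ge 1$ and $\util(A,p_\beta)\le B+T$. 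The mean value theorem together with the two derivative bounds makes the right-hand side at most $(B+T)(B+2)T^2\,|p_\alpha-p_\beta|$; since $(B+T)(B+2)T\le 2(B+1)^2(T+1)^2$, the hypothesis $|p_\alpha-p_\beta|\le\frac{\epsilon}{8(B+1)^2(T+1)^2T}$ yields $|\util(A,p_\alpha)-\util(A,p_\beta)|\le\epsilon/4$. The computation is identical on $[\delta,\frac{B-1}{T-1}]$, where $\tilde O\equiv T$ so the second term simply vanishes.

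The one genuinely delicate point is the inductive derivative bound $|\tilde A'(p)|\le(B+T)T^2$: getting the power of $T$ right (so it survives the prescribed coarseness of the cover) is exactly where one must exploit the recursive structure of the cost rather than treat $\tilde A$ as an arbitrary bounded polynomial. Everything else — the $p$-factorization, the $\util\le B+T$ bound, the termwise estimate on $\tilde O'$, and the ratio split — is routine once that bound is in hand.
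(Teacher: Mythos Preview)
Your proof is correct and genuinely different from the paper's. You share with the paper the essential first move --- factoring $p$ out of both $A(p)$ and $\skibayopt(p)$ so that the ratio $\util(A,p)=\tilde A(p)/\tilde O(p)$ has a denominator bounded below by $1$ (this is \Cref{l:appendixbayoptdecompose} in the paper). From there the arguments diverge. The paper reinterprets $\tilde A(p)$ as an expectation over a binomial process of length $T-1$ by a combinatorial ``clubbing'' of length-$T$ paths into groups indexed by length-$(T-1)$ sequences, bounds the resulting aggregated path-cost by $(\stopcost+1)(\inplen+1)$, and then invokes the coupling bound of \Cref{l:algoperfcloseness}. You instead stay in the calculus world: you bound $|\tilde A'(p)|$ directly by an induction on the horizon via the day-one recursion $\tilde A(p)=(1-p)\tilde A_0(p)+A_1(p)$, and finish with the mean value theorem and the ratio split. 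Your observation that a black-box Markov-brothers bound would lose a factor of $T$ is exactly the point; the recursion is what saves it, playing the same role that the clubbing trick plays in the paper. Your route is arguably cleaner (no auxiliary $A'$ construction), while the paper's route has the virtue of recycling \Cref{l:algoperfcloseness} and keeping the argument probabilistic throughout.
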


We focus on showing the \textit{Small-Cover} property and show the \textit{Bounded Best Response Utility} property during the proof of \textit{Small-Cover}. It is important to observe that the discretization should be done in a disjoint way for each interval in which the Bayesian optimal algorithm is the same. \Cref{l:utilitycloseness} tells us that the difference in the utility of an algorithm is close only for probabilities in intervals where the Bayesian optimal algorithm is the same. For two probabilities that are close but go across these intervals, the Lipschitz bound we prove will not hold.  

\begin{theorem}
\label{t:epscoverskirental}
The class of algorithms $\pialgclass$ and the class of distributions $\skidistclass$ in the prior-independent ski-rental problem satisfy Small-Cover.
\end{theorem}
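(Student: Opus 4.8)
The plan is to build the cover $\skidistclassepsgrid$ by a two-level discretization of $[\delta,1]$: first partition the interval according to which Bayesian optimal algorithm is in force, and then lay a uniform grid inside each piece that is fine enough to invoke \Cref{l:utilitycloseness}. First I would invoke \Cref{t:bayopt} to write $\skidistclass=[\delta,1]$ as the disjoint union of the intervals $I_k=\left(\frac{\stopcost-1}{\inplen-k},\min\left\{\frac{\stopcost-1}{\inplen-k-1},1\right\}\right]$ for $k\in[\min\{\ceil{\inplen-\stopcost},\inplen-1\}]$, together with $I_0=\left[\delta,\frac{\stopcost-1}{\inplen-1}\right]$; on each $I_k$ the Bayesian optimal algorithm --- and hence the map $\skidist\mapsto\skibayopt(\skidist)$ --- is a single, explicitly known object, and there are at most $\inplen$ such intervals. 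This step is where \Cref{t:bayopt} does the real work: it guarantees only a \emph{bounded} number of regimes.

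Next I would fix the mesh $\gamma=\frac{\epsilon}{8(\stopcost+1)^{2}(\inplen+1)^{2}\inplen}$, exactly the threshold appearing in \Cref{l:utilitycloseness}, place a uniform grid of spacing $\gamma$ inside each $I_k$ (including the endpoints of $I_k$, so every point of $I_k$ is within $\gamma$ of a grid point of that same interval), and take $\skidistclassepsgrid$ to be the union of these grids. The $\Theta(\epsilon)$-ball associated with a grid point $\skidisteps\in I_k$ is then $\{\skidist\in I_k:|\skidist-\skidisteps|\le\gamma\}$; by construction the balls cover $[\delta,1]$ and no ball straddles two regimes. For the cardinality bound, each $I_k$ has length at most $1$, so it contributes at most $1+1/\gamma=O\!\left((\stopcost+1)^{2}(\inplen+1)^{2}\inplen/\epsilon\right)$ grid points; multiplying by the at most $\inplen$ intervals gives $\suppsize{\skidistclassepsgrid}=\poly(\inplen,1/\epsilon)$. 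I would emphasize here that this count does not depend on $\delta$: even though $I_0$ starts at $\delta$, its length is bounded by $\frac{\stopcost-1}{\inplen-1}\le 1$, so it is precisely the $\delta$-free threshold from \Cref{l:utilitycloseness} that makes the cover polynomial.

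With the cover in hand, verifying the utility-closeness inequality of \textit{Small-Cover} is immediate: recalling that (by \Cref{l:bestresponseinfosymmetric} and \Cref{l:reducedgameapxeqlb}) $\pialgclass$ consists of information-symmetric algorithms that never stop on bad weather, I would fix any $\pialg\in\pialgclass$, any $\skidisteps\in\skidistclassepsgrid$, and any $\skidist$ in its ball; since $\skidist$ and $\skidisteps$ lie in a common $I_k$ at distance at most $\gamma$, \Cref{l:utilitycloseness} gives $\left|\frac{\pialg(\skidist)}{\skibayopt(\skidist)}-\frac{\pialg(\skidisteps)}{\skibayopt(\skidisteps)}\right|\le\frac{\epsilon}{4}$, which is exactly what is required. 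I would also record \textit{Bounded Best Response Utility} at this point, since it falls out of the same estimates: the numerator $\pialg(\skidist)$ never exceeds $\stopcost+\inplen$ (continue through all days and stop at the end), and the explicit formulas for $\skibayopt$ from \Cref{t:bayopt} (namely $\inplen\skidist$ on $I_0$ and $\stopcost[1-(1-\skidist)^{k}]+(1-\skidist)^{k}\skidist(\inplen-k)$ on $I_k$), combined with \Cref{l:polycost}, bound $\pialg(\skidist)/\skibayopt(\skidist)$ by $\poly(\inplen)$ for every algorithm in $\pialgclass$, in particular every best response.

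The hard part is not in this theorem itself --- the technically heavy ingredient, the $\delta$-independent Lipschitz-type bound of \Cref{l:utilitycloseness}, is assumed here --- but rather in two points of care in the assembly: (i) the grid must be built \emph{separately} on each $I_k$ so that no $\Theta(\epsilon)$-ball crosses a boundary at which $\skibayopt(\cdot)$ changes form, since \Cref{l:utilitycloseness} only controls utility differences within a single regime; and (ii) the final cardinality must be polynomial in $\inplen$ and $1/\epsilon$ and independent of $\delta$, which hinges on \Cref{t:bayopt} producing only $O(\inplen)$ regimes, each of bounded length. Getting both of these right is the crux of writing the proof.
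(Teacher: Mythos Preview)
Your proposal is correct and follows essentially the same approach as the paper: partition $[\delta,1]$ into the at most $\inplen$ regimes of \Cref{t:bayopt}, lay a uniform grid of mesh $\gamma=\frac{\epsilon}{8(\stopcost+1)^{2}(\inplen+1)^{2}\inplen}$ separately inside each regime so that no ball straddles a boundary, and then invoke \Cref{l:utilitycloseness} for the utility-closeness inequality and count $O(\inplen/\gamma)=\poly(\inplen,1/\epsilon)$ grid points. The only minor remark is that your aside on \textit{Bounded Best Response Utility} is a bit loose---bounding $\pialg(\skidist)\le \stopcost+\inplen$ is not enough since $\skibayopt(\skidist)$ can be arbitrarily small; the paper instead extracts the uniform bound $(\stopcost+1)(\inplen+1)$ from the decomposition inside the proof of \Cref{l:utilitycloseness} (both $\pialg(\skidist)$ and $\skibayopt(\skidist)$ are divisible by $\skidist$, and after cancellation the denominator is at least $1$).
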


Having shown that all four properties in the framework hold, we give a bound on the running time to compute an approximate-optimal prior-independent ski-rental algorithm. 

\begin{theorem}
\label{t:skirentalfptas}
There exists an algorithm to find an $\epsilon$-approximate prior-independent optimal algorithm in time $\Tilde{O}(\inplen^{12}\epsilon^{-3})$. 
\end{theorem}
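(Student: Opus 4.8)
The plan is to assemble the four structural properties—\emph{Small-Cover} (\Cref{t:epscoverskirental}), \emph{Efficient Best Response} (\Cref{t:polybackwardsinduction}), \emph{Efficient Utility Computation} (\Cref{l:skirentproptwo}), and \emph{Bounded Best Response Utility} (\Cref{p:frameworkboundedbr})—all of which have been verified above for the prior-independent ski-rental problem, and feed them into \Cref{t:apxpipoly}. That theorem already yields a $\poly(T,1/\epsilon)$-time procedure producing an $\epsilon$-approximate prior-independent optimal algorithm; the only remaining work is to make the polynomial explicit by tracking the three quantities that drive the running time of the learning procedure of \Cref{l:prelimlearningexist}: the number $n := \suppsize{\skidistclassepsgrid}$ of adversary pure strategies in the reduced game, the per-round cost (one best-response call, plus one full utility-vector evaluation over the cover, plus one multiplicative-weights update), and the largest payoff $\bar u$ the learner can encounter.

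First I would bound the cover size. The cover constructed in the proof of \Cref{t:epscoverskirental} partitions $[\delta,1]$ into the $O(T)$ maximal subintervals on which the Bayesian-optimal algorithm is constant (\Cref{t:bayopt}) and grids each subinterval at resolution $\bar\epsilon=\Omega\!\big(\epsilon/((\stopcost+1)^2(T+1)^2T)\big)$; since we only consider $\stopcost<T$, this is $1/\bar\epsilon=O(T^5/\epsilon)$. Because the subintervals partition an interval of length at most $1$, the total number of grid points is at most $1/\bar\epsilon$ plus the number of subintervals, so $n=\suppsize{\skidistclassepsgrid}=O(T^5/\epsilon)$.

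Next I would bound $\bar u$. For any $\skidistoverdist\in\Delta(\skidistclassepsgrid)$, the best response $\algbestresponse$ produced by Backwards Induction is information-symmetric and never stops on a bad day (as noted before \Cref{l:bestresponseinfosymmetric}). Hence on any $\skidist$ its stopping time $\tau$ lies on a good day, so $\Pr[\tau\le T]\le 1-(1-\skidist)^T\le T\skidist$, and the expected number of good days it is charged for is at most $T\skidist$; therefore $\algbestresponse(\skidist)\le(\stopcost+1)\,T\skidist\le 2T^2\skidist$. On the other side, $\skibayopt(\skidist)\ge \stopcost\,\skidist>\skidist$ (using $1-(1-\skidist)^k\ge\skidist$ for $k\ge1$, together with the always-continue case $\skibayopt(\skidist)=T\skidist\ge \stopcost\,\skidist$). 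Consequently $\util(\algbestresponse,\skidist)\le 2T^2$ for every $\skidist$ in the cover, i.e., $\bar u=O(T^2)$—polynomial in $T$ and, crucially, independent of $\delta$. This is exactly the verification of \Cref{p:frameworkboundedbr} for this problem, and it is the one place where a naive argument (lower-bounding $\skibayopt$ by $T\delta$) would fail: the union-bound estimate $\Pr[\tau\le T]\le T\skidist$ is what cancels the offending factor of $\skidist$ in the denominator.

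Finally I would multiply out. Plugging $n=O(T^5/\epsilon)$ and $\bar u=O(T^2)$ into \Cref{l:prelimlearningexist}, the learner needs $O(\bar u^2\log n/\epsilon^2)=\tilde O(T^4\epsilon^{-2})$ rounds to reach an $\tfrac\epsilon4$-approximate equilibrium of the reduced game. Each round costs $O(n\,T^3)$ for the best response (\Cref{t:polybackwardsinduction}, support size $\le n$), $O(n\,T^3)$ to evaluate the adversary's utility vector (\Cref{l:skirentproptwo}, one $O(T^3)$ evaluation per cover point), and $O(n)$ for the weight update—i.e., $O(T^8\epsilon^{-1})$ per round—so the total is $\tilde O(T^4\epsilon^{-2})\cdot O(T^8\epsilon^{-1})=\tilde O(T^{12}\epsilon^{-3})$. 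By \Cref{l:frameworkdisctocont,l:frameworkepsapprox} an $\tfrac\epsilon4$-approximate equilibrium of the reduced game is an $\tfrac\epsilon2$-approximate equilibrium of the original game, and its algorithm marginal is an $\epsilon$-approximate prior-independent optimal algorithm, exactly as in the proof of \Cref{t:apxpipoly}. The main obstacle is not any single step but the $\bar u$ bound in the third paragraph: every earlier property is "local" (closeness of utilities, per-distribution tractability), whereas the running time of \Cref{t:skirentalfptas} is ruined unless $\bar u$ is genuinely $\poly(T)$ rather than $\poly(T,1/\delta)$, so that is where I would be most careful.
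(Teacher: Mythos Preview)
Your proof is correct and follows essentially the same approach as the paper: bound the cover size by $O(T^5/\epsilon)$, bound the maximum utility encountered by $O(T^2)$, and multiply the $\tilde O(T^4\epsilon^{-2})$ rounds of multiplicative weights by the $O(T^8\epsilon^{-1})$ per-round cost (dominated by backwards induction over the cover). The only cosmetic difference is in how $\bar u$ is bounded: the paper cites the decomposition $\skibayopt(\skidist)=\skidist\cdot g_k(\skidist)$ with $g_k\ge 1$ established inside the proof of \Cref{l:utilitycloseness}, whereas you obtain the same $O(T^2)$ bound by the direct probabilistic observation that an algorithm which never stops on a bad day satisfies $\Pr[\tau\le T]\le T\skidist$ and $\skibayopt(\skidist)\ge \stopcost\skidist$.
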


\section{Empirical Comparison of Models}
\label{s:simulations}

In this section we empirically compare the worst-case algorithm, the subgame optimal algorithm and the prior-independent optimal algorithm in prior-independent framework. The setting of our simulations is as follows. We consider the finite-horizon ski-rental problem. We fix the time horizon at $\inplen=9$. The \textit{stopping cost} $B$ is varied continuously in the range $[1.1,\inplen]$. 

\begin{figure}
    \centering
    \includegraphics[scale=0.45]{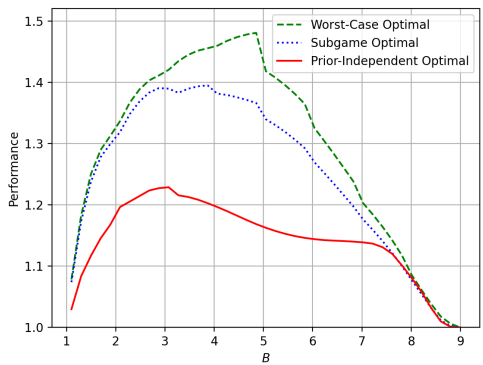}
    \caption{The time horizon is fixed at $T=9$. The \textit{stopping} cost is varied continuously from $[1,T]$. The figure plots the approximation ratio of each algorithm against the worst-case distribution of the adversary in the prior-independent framework.}
    \label{fig:comp_approximationratio_sec5}
\end{figure}


We compare each algorithm in the prior-independent framework, considered in \autoref{s:framework}. We plot the approximation ratio of the algorithm on the worst-case distribution of the adversary. As was done in the previous figure, the plot is as a function of the \textit{stopping cost}. Recall, a distribution of the adversary in the prior-independent setting is a distribution over probabilities. The plot shows the leverage that subgame optimal algorithm have over the worst-case algorithms. Due to statistical assumptions in the prior-independent setting, we expect to see every input sequence with non-zero probability (apart from the case $p=1$). The subgame optimal algorithm is designed to perform optimally (in the worst-case framework) on subgames that are not worst-case. The comparison in \autoref{fig:comp_approximationratio_sec5} depicts that. 

The algorithms were evaluated on the worst-case adversary distributions. A different approach could have compared the performance of each algorithm in either framework. Such a comparison would not depict the advantages a subgame optimal algorithm has over a worst-case algorithm. Their performance in the prior-independent framework of the ski-rental problem, given by the expressions \begin{equation*}
    \max_{p}\frac{\text{WC-OPT}(p)}{\text{OPT}_{p}(p)} \quad \text{and } \max_{p}\frac{\text{SG-OPT}(p)}{\text{OPT}_{p}(p)}
\end{equation*} respectively, is the same. Moreover, it is exactly equal to their performance in the worst-case framework with optimal hindsight cost as its benchmark. We give a formal proof for why this is the case.

Consider the worst-case framework with optimal hindsight cost as its benchmark. The following sets \begin{itemize}
    \item $\mathbf{X}^{\text{WC-OPT}} := \argmax_{\mathbf{X}} \frac{\text{WC-OPT}(\mathbf{X})}{\text{OPT}_{\mathbf{X}}(\mathbf{X})},$
    \item $\mathbf{X}^{\text{SG-OPT}} := \argmax_{\mathbf{X}} \frac{\text{SG-OPT}(\mathbf{X})}{\text{OPT}_{\mathbf{X}}(\mathbf{X})},$ 
\end{itemize} comprise of worst-case input sequences for the worst-case optimal and subgame optimal algorithm respectively. A useful lemma, which we prove, tells us that both these sets have a non-empty intersection.

\begin{lemma}
\label{l:commonworstcaseinput}
    In the worst-case framework with optimal hindsight cost as its benchmark, it holds that, \begin{equation*}
        \mathbf{X}^{\text{WC-OPT}} \cap \mathbf{X}^{\text{SG-OPT}} \neq \emptyset,
    \end{equation*} that is, the worst-case input sequences for the worst-case optimal algorithm and the sub-game optimal algorithm have a non-empty intersection.
\end{lemma}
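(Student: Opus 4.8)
The plan is to exhibit a single input sequence lying in both $\mathbf{X}^{\text{WC-OPT}}$ and $\mathbf{X}^{\text{SG-OPT}}$, namely $F^{1} = (1,0,\dots,0)$ --- one good day followed by all bad days. First I would note that $F^{1} \in \mathbf{X}^{\text{WC-OPT}}$: the argument recalled in the proof of \Cref{l:subgameoptbenchmark} (taken from the proof of \Cref{t:randworstcase}) shows $\text{WC-OPT}$ tracks only the count of good days, so every input with exactly one good day attains ratio equal to the value of the game $\piratio$, which is the maximum of $\text{WC-OPT}(\mathbf{X})/\OPT_{\mathbf{X}}(\mathbf{X})$ over all $\mathbf{X}$; in particular $F^{1}$ does.

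Next I would show $\text{SG-OPT}(F^{1}) = \text{WC-OPT}(F^{1})$. On the empty prefix, \Cref{l:characterizesubgame} with $i = 0$ and $k = 1$ gives that $\text{SG-OPT}$ stops on the (good) first day of $F^{1}$ with conditional probability $\eta_{1}^{\inplen} = \text{Pr}^{\inplen}[A^{0}]$, which is exactly the probability with which $\text{WC-OPT}$ stops on its first good day by \Cref{c:conditionprobworstcase}. Conditioned on continuing, every remaining day of $F^{1}$ is bad; continuing there costs nothing now and retains the option to stop later, so continuing weakly dominates stopping (the argument used in \Cref{l:reducetimehorizon}), and $\text{SG-OPT}$ continues through to the end. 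Hence $\text{SG-OPT}(F^{1}) = \text{Pr}^{\inplen}[A^{0}]\,\stopcost + (1-\text{Pr}^{\inplen}[A^{0}]) = \text{WC-OPT}(F^{1})$, and since $\OPT_{F^{1}}(F^{1}) = \min\{1,\stopcost\} = 1$, the ratio equals $\piratio$. Because every subgame optimal algorithm is also worst-case optimal, $\piratio$ is the maximum of $\text{SG-OPT}(\mathbf{X})/\OPT_{\mathbf{X}}(\mathbf{X})$, so $F^{1} \in \mathbf{X}^{\text{SG-OPT}}$, and therefore $F^{1} \in \mathbf{X}^{\text{WC-OPT}} \cap \mathbf{X}^{\text{SG-OPT}}$.

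The main point requiring care --- and the only real obstacle --- is that the position of the lone good day matters for $\text{SG-OPT}$ even though it is irrelevant for $\text{WC-OPT}$: placing the good day last forces $\text{SG-OPT}$ into a horizon-$1$ subgame where it continues, giving ratio $1$ (its best case, as noted in the proof of \Cref{l:subgameoptbenchmark}), so an arbitrary one-good-day input will not do. Placing the good day first is precisely what makes $\text{SG-OPT}$ coincide with $\text{WC-OPT}$; more generally $\text{SG-OPT}$ agrees with $\text{WC-OPT}$ on every all-good prefix, since \Cref{l:characterizesubgame} with $i = k-1$ yields stopping probability $\eta_{k}^{\inplen - (k-1) + k - 1} = \eta_{k}^{\inplen}$, and on $F^{1}$ the single good day sits in such a prefix. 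A secondary, immaterial subtlety is that $\text{SG-OPT}$'s behavior on bad days is not pinned down uniquely; the weak-dominance argument lets us take a version that continues on bad days without loss of generality.
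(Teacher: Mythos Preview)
Your proof is correct and takes a genuinely different route from the paper. The paper disposes of the lemma in two sentences: since $\text{SG-OPT}$ takes an optimal decision at every stage and $\text{WC-OPT}$ takes an optimal decision on some (on-path) input sequence, the sets intersect. The implicit content is equilibrium interchangeability in zero-sum games --- any input in the support of the adversary's equilibrium mixed strategy is a worst-case input for \emph{every} minimax strategy of the algorithm player, hence for both $\text{WC-OPT}$ and $\text{SG-OPT}$ simultaneously.

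You instead exhibit a concrete witness $F^{1}=(1,0,\dots,0)$ and verify membership in both $\mathbf{X}^{\text{WC-OPT}}$ and $\mathbf{X}^{\text{SG-OPT}}$ by direct computation, using the explicit conditional stopping probabilities from \Cref{c:conditionprobworstcase} and \Cref{l:characterizesubgame}. Your argument is specific to ski-rental, but it is fully self-contained and does not lean on an unstated equilibrium fact. The paper's argument, though terser, is in principle problem-agnostic; yours trades that generality for an explicit, checkable construction. Your side observation --- that the position of the lone good day is immaterial for $\text{WC-OPT}$ but decisive for $\text{SG-OPT}$, and that $\text{SG-OPT}$ coincides with $\text{WC-OPT}$ along every all-good prefix via $\eta_{k}^{\inplen-(k-1)+(k-1)}=\eta_{k}^{\inplen}$ --- is a nice structural point that the paper does not make explicit.
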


The goal is to achieve a comparison between the performance of algorithms in the prior-independent and worst-case framework. An important lemma, which we prove below, tells us that the performance of any algorithm in the prior-independent framework is upper bounded by its performance in the worst-case framework. Intuitively, the result makes sense: informational assumptions on the input space can only improve performance. Note however, the benchmark changes from optimal hindsight cost to the Bayesian optimal cost.

\begin{lemma}
\label{l:algpiworstcaseub}
    For any algorithm $A$, it holds, \begin{equation*}
        \max_{F \in \mathcal{F}}\frac{A(F)}{\text{OPT}_{F}(F)} \leq \max_{\mathbf{X}}\frac{A(\mathbf{X})}{\text{OPT}_{\mathbf{X}}(\mathbf{X})},
    \end{equation*} that is, its performance in the prior-independent framework with Bayesian optimal cost as its benchmark is at most its performance in the worst-case framework with the optimal hindsight cost as its benchmark.
\end{lemma}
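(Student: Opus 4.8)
The plan is to reduce the claim to a per-sequence (pointwise) comparison and then average. Write $\beta := \max_{\mathbf{X}} \frac{A(\mathbf{X})}{\text{OPT}_{\mathbf{X}}(\mathbf{X})}$ for the worst-case competitive ratio of $A$ against the optimal hindsight cost. If $\beta = \infty$ the stated inequality is trivial, so I would assume $\beta < \infty$. By definition of $\beta$, for every input sequence $\mathbf{X}$ with $\text{OPT}_{\mathbf{X}}(\mathbf{X}) > 0$ we have the pointwise bound $A(\mathbf{X}) \le \beta \cdot \text{OPT}_{\mathbf{X}}(\mathbf{X})$; for a degenerate all-bad-weather sequence (where $\text{OPT}_{\mathbf{X}}(\mathbf{X}) = 0$) finiteness of $\beta$ forces $A$ not to stop, so $A(\mathbf{X}) = 0$ and the bound holds there as well. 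Hence $A(\mathbf{X}) \le \beta\cdot\text{OPT}_{\mathbf{X}}(\mathbf{X})$ for all $\mathbf{X}$.

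Next I would fix an arbitrary $F \in \mathcal{F}$ and take the expectation of this pointwise bound over $\mathbf{X} \sim F$, using $A(F) = \mathbf{E}_{\mathbf{X}\sim F}[A(\mathbf{X})]$, to get $A(F) \le \beta \cdot \mathbf{E}_{\mathbf{X}\sim F}[\text{OPT}_{\mathbf{X}}(\mathbf{X})]$. The one substantive step is then the inequality $\mathbf{E}_{\mathbf{X}\sim F}[\text{OPT}_{\mathbf{X}}(\mathbf{X})] \le \text{OPT}_{F}(F)$: for any (possibly randomized) online algorithm $A'$ and any fixed sequence $\mathbf{X}$ we have $A'(\mathbf{X}) \ge \text{OPT}_{\mathbf{X}}(\mathbf{X})$, since the hindsight optimum minimizes cost over a strategy set that includes whatever $A'$ does on $\mathbf{X}$; averaging over $\mathbf{X} \sim F$ gives $\mathbf{E}_{\mathbf{X}\sim F}[A'(\mathbf{X})] \ge \mathbf{E}_{\mathbf{X}\sim F}[\text{OPT}_{\mathbf{X}}(\mathbf{X})]$ for every $A'$, and taking the minimum over $A'$ yields $\text{OPT}_{F}(F) \ge \mathbf{E}_{\mathbf{X}\sim F}[\text{OPT}_{\mathbf{X}}(\mathbf{X})]$.

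Combining the two gives $A(F) \le \beta\cdot\text{OPT}_{F}(F)$, and since $\text{OPT}_{F}(F) > 0$ for every $F \in \mathcal{F}$ (because $p \ge \delta > 0$, so the expected offline cost is positive), this is $\frac{A(F)}{\text{OPT}_{F}(F)} \le \beta$; taking the maximum over $F \in \mathcal{F}$ on the left finishes the proof. I do not expect a real obstacle here — the argument is the standard "additional information can only help" observation, with the prior-independent benchmark (Bayesian-optimal online cost) sandwiched between $A(F)$ and $\beta$ times the expected hindsight optimum. The only care needed is bookkeeping around the zero-cost degeneracies (the all-bad sequence and $p=0$), which is exactly why the restriction $\mathcal{F}=\{p\in[\delta,1]\}$ and the convention $\beta=\infty$ are invoked; I would state these explicitly rather than leave them implicit.
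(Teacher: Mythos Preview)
Your proof is correct and follows essentially the same approach as the paper: both define $\beta$ as the worst-case ratio, use the pointwise bound $A(\mathbf{X}) \le \beta\cdot\text{OPT}_{\mathbf{X}}(\mathbf{X})$, take expectations, and then invoke $\text{OPT}_{\mathbf{X}}(\mathbf{X}) \le \text{OPT}_{F}(\mathbf{X})$ pointwise (the paper states this directly for $\text{OPT}_{F^*}$, while you phrase it as minimizing over all online algorithms). Your version is actually more careful than the paper's about the degenerate cases ($\beta=\infty$, the all-bad sequence, and positivity of $\text{OPT}_F(F)$), which the paper leaves implicit.
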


\begin{proof}
    Let $\beta := \max_{\mathbf{X}}\frac{A(\mathbf{X})}{\text{OPT}_{\mathbf{X}}(\mathbf{X})}$ be the performance of algorithm $A$ in the worst-case framework. Let $F^* \in \argmax_{F}\frac{A(F)}{\text{OPT}_{F}(F)}$ be a worst-case distribution for algorithm $A$ in the prior-independent framework. It holds that, \begin{align*}
        \frac{A(F^*)}{OPT_{F^*}(F^*)} &= \frac{\mathbf{E}_{\mathbf{X} \sim F^*}[A(\mathbf{X})]}{\mathbf{E}_{\mathbf{X} \sim F^*}[OPT_{F^*}(\mathbf{X})]}, \\
        & \leq \frac{\beta \mathbf{E}_{\mathbf{X} \sim F^*}[\text{OPT}_{\mathbf{X}}(\mathbf{X})]}{\mathbf{E}_{\mathbf{X} \sim F^*}[OPT_{F^*}(\mathbf{X})]}, \\
        & \leq \beta,
    \end{align*} where the first equality follows by definition. The first inequality follows from $\beta$ being the performance of the algorithm in the worst-case framework. The final inequality follows from observing that $\text{OPT}_{\mathbf{X}}(\mathbf{X})$ minimizes the hindsight cost of the algorithm for $\mathbf{X}$. Thus, $\text{OPT}_{\mathbf{X}}(\mathbf{X}) \leq \text{OPT}_{F^*}(\mathbf{X})$ for any input sequence $\mathbf{X}$. 
\end{proof}

Finally, we state a result which solves for the performance of the subgame optimal and worst-case optimal algorithm in the prior-independent framework. We provide sufficient conditions on the class of distributions $\mathcal{F}$ such that the performance of these algorithms can be characterized. The sufficient condition covers the following scenario: \begin{enumerate}
    \item There exists a distribution $F$ such that all the input sequences $\mathbf{X}$ with non-zero mass in $F$ are worst-case.
    \item The cost of the Bayesian optimal algorithm for distribution $F$ on every input sequence $\mathbf{X}$ in its support is exactly equal to the optimal hindsight cost on $\mathbf{X}$.
\end{enumerate} An environment in which this condition is satisfied is when an input is generated by a point mass distribution such that it is worst-case. The result below is applied to prior-independent ski-rental problem. 

\begin{lemma}
\label{l:worstcaseperfinpi}
    If there exists a distribution $F \in \mathcal{F}$ such that \begin{itemize}
        \item $\text{supp} (F) \subseteq \mathbf{X}^{\text{WC-OPT}} \cap \mathbf{X}^{\text{SG-OPT}}$,
        \item $\text{OPT}_{\mathbf{X}}(\mathbf{X}) = \text{OPT}_{F}(\mathbf{X})$ for all $\mathbf{X} \in \text{supp}(F)$,
    \end{itemize} the performance of \text{WC-OPT} and \text{SG-OPT} in the prior-independent framework is the same. Moreover, it is equal to their performance in the worst-case framework with optimal hindsight cost as their benchmark.
\end{lemma}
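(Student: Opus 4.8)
The plan is to sandwich the prior-independent performance of each algorithm between its worst-case performance (via \Cref{l:algpiworstcaseub}, one direction) and $\beta$ (via a direct evaluation at the hypothesized distribution $F$, the other direction). Write $\beta_{\text{WC}} := \max_{\mathbf{X}}\frac{\text{WC-OPT}(\mathbf{X})}{\text{OPT}_{\mathbf{X}}(\mathbf{X})}$ and $\beta_{\text{SG}} := \max_{\mathbf{X}}\frac{\text{SG-OPT}(\mathbf{X})}{\text{OPT}_{\mathbf{X}}(\mathbf{X})}$ for the competitive ratios of the two algorithms against the optimal hindsight benchmark. The first observation I would record is that, since every subgame optimal algorithm is also worst-case optimal (and $\text{WC-OPT}$ is worst-case optimal by construction), $\beta_{\text{WC}} = \beta_{\text{SG}} = \beta$, the value of the worst-case game; so it remains only to show that each algorithm's prior-independent performance equals its own worst-case performance.

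Next I would establish the reverse inequality to \Cref{l:algpiworstcaseub} by plugging the hypothesized $F$ into the prior-independent objective. For every $\mathbf{X} \in \text{supp}(F)$, the first bullet gives $\mathbf{X} \in \mathbf{X}^{\text{WC-OPT}}$, i.e.\ $\text{WC-OPT}(\mathbf{X}) = \beta_{\text{WC}}\,\text{OPT}_{\mathbf{X}}(\mathbf{X})$ exactly (not merely $\leq$). Taking expectations over $\mathbf{X} \sim F$ and using linearity, $\text{WC-OPT}(F) = \beta_{\text{WC}}\,\mathbf{E}_{\mathbf{X}\sim F}[\text{OPT}_{\mathbf{X}}(\mathbf{X})]$. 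The second bullet says $\text{OPT}_{\mathbf{X}}(\mathbf{X}) = \text{OPT}_{F}(\mathbf{X})$ on $\text{supp}(F)$, so $\mathbf{E}_{\mathbf{X}\sim F}[\text{OPT}_{\mathbf{X}}(\mathbf{X})] = \mathbf{E}_{\mathbf{X}\sim F}[\text{OPT}_{F}(\mathbf{X})] = \text{OPT}_{F}(F)$. Hence $\frac{\text{WC-OPT}(F)}{\text{OPT}_{F}(F)} = \beta_{\text{WC}}$, so $\max_{F' \in \mathcal{F}} \frac{\text{WC-OPT}(F')}{\text{OPT}_{F'}(F')} \geq \beta_{\text{WC}}$, and combined with \Cref{l:algpiworstcaseub} this is an equality. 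The identical computation with $\text{SG-OPT}$ and the containment $\text{supp}(F) \subseteq \mathbf{X}^{\text{SG-OPT}}$ yields $\frac{\text{SG-OPT}(F)}{\text{OPT}_{F}(F)} = \beta_{\text{SG}}$ and hence that $\text{SG-OPT}$'s prior-independent performance equals $\beta_{\text{SG}}$.

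Stitching the pieces together: $\text{WC-OPT}$ and $\text{SG-OPT}$ have prior-independent performances $\beta_{\text{WC}}$ and $\beta_{\text{SG}}$ respectively, these equal their worst-case performances against the hindsight benchmark, and $\beta_{\text{WC}} = \beta_{\text{SG}} = \beta$, so the two coincide. I do not anticipate a genuine obstacle; the only point requiring care is the step that on $\text{supp}(F)$ the ratio equals the maximum rather than merely bounding it, which is precisely what $\text{supp}(F) \subseteq \mathbf{X}^{\text{WC-OPT}} \cap \mathbf{X}^{\text{SG-OPT}}$ buys us, in tandem with the second bullet, which aligns the Bayesian-optimal denominator $\text{OPT}_F$ with the hindsight-optimal one so that the constant $\beta$ factors cleanly out of both numerator and denominator under linearity of expectation. \Cref{l:commonworstcaseinput} is what makes the hypothesis non-vacuous — e.g.\ a point-mass distribution on a common worst-case input satisfies both bullets, since the Bayesian optimal for a point mass is the hindsight optimum — and for the ski-rental application this common worst-case input is the single-good-day sequence.
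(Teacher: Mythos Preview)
Your proof is correct and follows essentially the same approach as the paper's: evaluate the ratio at the hypothesized $F$, use the first bullet to pull out the factor $\beta$ in the numerator and the second bullet to match the denominator, then invoke \Cref{l:algpiworstcaseub} for the upper bound. One tangential slip in your closing parenthetical (which is not part of the lemma's proof proper): for the ski-rental application the relevant common worst-case input is the \emph{all}-good-days sequence (via the point mass $p=1$), not the single-good-day sequence, since a lone good day placed at the end is not worst-case for $\text{SG-OPT}$.
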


\begin{proof}
Let $A$ denote either of WC-OPT or SG-OPT. By assumption, their performance on distribution $F$ in the prior-independent framework can be solved as \begin{align*}
    \frac{A(F)}{\text{OPT}_{F}(F)} &= \frac{\mathbf{E}_{\mathbf{X} \sim F}[A(\mathbf{X})]}{\mathbf{E}_{\mathbf{X} \sim F}[\text{OPT}_{F}(\mathbf{X})]}, \\
    &=  \frac{\beta \mathbf{E}_{\mathbf{X} \sim F}[\text{OPT}_{\mathbf{X}}(\mathbf{X})]}{\mathbf{E}_{\mathbf{X} \sim F}[\text{OPT}_{F}(\mathbf{X})]}, \\
    &= \beta.
\end{align*}
The first equality is by definition of the performance of algorithm $A$. The second equality by assumption that $\text{supp}(F) \subseteq \mathbf{X}^{\text{WC-OPT}} \cap \mathbf{X}^{\text{SG-OPT}}$. The final equality by assumption that $\text{OPT}_{\mathbf{X}}(\mathbf{X}) = \text{OPT}_{F}(\mathbf{X})$ for all $\mathbf{X} \in \text{supp}(F)$. Since distribution $F$ achieves the upper bound, as proven in Lemma \ref{l:algpiworstcaseub}, the statement follows. 
\end{proof}

An application of the result above shows that the performance of worst-case optimal and subgame optimal algorithm is the same in the prior-independent ski-rental framework.  The distribution $p=1$ satisfies the conditions of Lemma \ref{l:worstcaseperfinpi}.

\begin{corollary}
    The performance of the worst-case optimal ski-rental algorithm and subgame optimal ski-rental algorithm is the same in the prior-independent ski-rental framework. Moreover, it is equal to their performance in the worst-case framework with optimal hindsight cost as its benchmark.
\end{corollary}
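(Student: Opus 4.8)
The plan is to apply \Cref{l:worstcaseperfinpi} with $F$ chosen to be the prior-independent distribution at parameter $\skidist = 1$, i.e.\ the point mass on the all-good-weather input $\mathbf{X}^{*} = (1,\dots,1)$; since $1 \in [\delta,1]$, this $F$ is admissible, and it suffices to verify the two hypotheses of \Cref{l:worstcaseperfinpi} for this particular $F$. Granting both, the conclusion of \Cref{l:worstcaseperfinpi} is exactly the corollary, including the equality with the worst-case-framework performance under the optimal-hindsight benchmark.

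To verify the first hypothesis, $\text{supp}(F) = \{\mathbf{X}^{*}\} \subseteq \mathbf{X}^{\text{WC-OPT}} \cap \mathbf{X}^{\text{SG-OPT}}$, I would handle the two memberships in turn. For $\mathbf{X}^{*} \in \mathbf{X}^{\text{WC-OPT}}$: the input $\mathbf{X}^{*}$ is precisely the adversary pure strategy $F^{T}$, which survives the domination-based reductions of \Cref{ss:optrand} and receives positive probability in the equilibrium of \textit{SRP-R} computed in \Cref{t:randworstcase}; by \Cref{l:truerandworstcase} this is an equilibrium of \textit{SRP}, so every input in the adversary's support is a best response against $\text{WC-OPT}$ and hence attains the game value $\beta := \max_{\mathbf{X}} \text{WC-OPT}(\mathbf{X})/\text{OPT}_{\mathbf{X}}(\mathbf{X})$. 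For $\mathbf{X}^{*} \in \mathbf{X}^{\text{SG-OPT}}$: on an input with no bad days the horizon-shrinking that distinguishes $\text{SG-OPT}$ from $\text{WC-OPT}$ is never invoked; concretely, \Cref{l:characterizesubgame} applied along $\mathbf{X}^{*}$ (with $i = k-1$) gives stopping probability $\eta_{k}^{T}$ on the $k$-th good day, matching the conditional stopping probability of $\text{WC-OPT}$ from \Cref{c:conditionprobworstcase}. So $\text{SG-OPT}$ also attains ratio $\beta$ on $\mathbf{X}^{*}$, and since every subgame optimal algorithm is worst-case optimal (hence has worst-case ratio at most $\beta$), $\mathbf{X}^{*}$ is a worst-case input for $\text{SG-OPT}$ as well.

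To verify the second hypothesis, $\text{OPT}_{\mathbf{X}}(\mathbf{X}) = \text{OPT}_{F}(\mathbf{X})$ for every $\mathbf{X} \in \text{supp}(F)$, I only need the single point $\mathbf{X}^{*}$: because $F$ is a point mass, the Bayesian-optimal \emph{online} algorithm for $F$ knows the realized input and can reproduce the offline optimum, so $\text{OPT}_{F}(\mathbf{X}^{*}) = \text{OPT}_{\mathbf{X}^{*}}(\mathbf{X}^{*}) = \min\{T,B\} = B$, using $B < T$ in the non-trivial regime; equivalently, substituting $\skidist = 1$ into the Bayesian cost formula following \Cref{t:bayopt} (where $(1-\skidist)^{k} = 0$) gives the same value $B$. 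With both hypotheses checked, \Cref{l:worstcaseperfinpi} delivers the corollary.

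The step I expect to be the main obstacle is the first hypothesis — establishing that the all-good input is \emph{simultaneously} worst case for $\text{WC-OPT}$ and for $\text{SG-OPT}$. This rests on two facts that each need a short argument: that $F^{T}$ genuinely carries positive mass in the adversary's equilibrium strategy (so that $\mathbf{X}^{*}$ is a best response against $\text{WC-OPT}$), and that the two algorithms agree on every purely-good sequence because no bad day ever triggers a horizon reduction (so the ratio $\beta$ transfers from $\text{WC-OPT}$ to $\text{SG-OPT}$ on $\mathbf{X}^{*}$).
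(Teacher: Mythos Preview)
Your proposal is correct and follows essentially the same approach as the paper: both apply \Cref{l:worstcaseperfinpi} with the prior-independent distribution at $\skidist=1$, noting that its unique support point (the all-good-weather sequence) is a worst-case input for both $\text{WC-OPT}$ and $\text{SG-OPT}$ and that the Bayesian optimal for this degenerate prior coincides with the offline optimum. Your verification of the first hypothesis is more detailed than the paper's (which simply cites \Cref{t:randworstcase}), but the idea is the same.
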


\section{Conclusion}
\label{s:conclusion}

In this paper we study the design and analysis of online algorithms, considering notions in a beyond worst-case analysis. We take a game-theoretic approach to the analysis of such problems. 

First, we introduce the concept of {\em subgame optimal} online algorithms. This is a stronger solution concept than worst-case algorithms. Subgame optimal algorithms are required to be optimal at every stage of the process of the input being revealed. We solve for the subgame optimal algorithm in the finite-horizon ski-rental problem. We show that difference in the performance of a worst-case algorithm and the subgame optimal algorithm is as bad as the worst-case approximation ratio.

Second, we define prior-independent optimality for online algorithms. This is a weaker requirement than traditional competitive analysis. We provide sufficient conditions on the online algorithm design problem which imply the existence of a FPTAS. We show that these properties are satisfied in the prior-independent finite-horizon ski-rental problem.

Finally, we empirically compare the different solution concepts in prior-independent framework.

\bibliographystyle{apalike}
\bibliography{references}

\appendix
\section{Preliminaries}
\label{s:appendixprelims}

\subsection{Online Algorithm Design}
\label{p:onlinealgs}

An online algorithm design problem involves making decisions over time with an uncertain future. At each stage, a decision maker observes an input. Upon observing this, she makes an irrevocable decision without knowledge of future inputs. An input comes from the class $\mathcal{X}$. An input sequence of length $\inplen$ is denoted by $\mathbf{X}$, where $\mathbf{X} \in \mathcal{X}^{\inplen}$. 

Before observing an input, the decision maker is at a state $\mathcal{S}$ in the system. Once the input is observed, she alters the state of the system at some cost. A mapping from inputs to a decision to a change in the state of the system is known as an algorithm. Formally,  

\begin{definition}
\label{d:bralg}
A deterministic algorithm is a tuple of functions $(\alg{1},\alg{2},\dots,\alg{\inplen})$ where

$$\alg{i} : \left(\inpelement{1},\dots,\inpelement{\timeindex}\right) \to \mathcal{S} \quad \forall \left(\inpelement{1},\dots,\inpelement{\timeindex}\right) \in \mathcal{X}^{\timeindex} \quad, \forall \timeindex \in [\inplen].$$
\end{definition}

The algorithm suffers a cost for moving between states. For initial state $s'$, final state $s''$ and input $X_{i}$, the cost suffered is $c(s',s'',X_{i})$, where $c : \mathcal{S} \times \mathcal{S} \times \mathcal{X} \to \mathbb{R}^{+}$. The system starts at initial state $s_{0}$ before observing any input.  The cost suffered by an algorithm on an input sequence $\mathbf{X}$, denoted by $A(s_{0},\mathbf{X})$, is $$A(s_{0},\mathbf{X}) = c(s_{0},A_{1}(X_{1}),X_{1}) + \sum_{t=2}^{\inplen}c\left(A_{t-1}(X_{1},\dots,X_{t-1}),A_{t}(X_{1},\dots,X_{t}),X_{t}\right).$$ 


\subsection{Zero-Sum Games}

Consider a two-player zero-sum game between an algorithm and an adversary. Let the pure strategy space of the algorithm be $\pialgclass$ and the pure strategy space of the adversary be $\pidistclass$. For this preliminary section, let the cardinality of both these sets be finite. The utility to the adversary for a pure strategy profile $(\pialg,\pidist)$ is finite and denoted as $\util(\pialg,\pidist)$. Two-player zero-sum games have the property that all equilibria give a player the same expected payoff, and consequently the other player the negated payoff. This is called the value of the game. Moreover, in finite games, each player has a mixed strategy that guarantee this value for all strategies of the other player.  

\begin{fact}[\citealp{vonNeumann-28}]
In any Nash equilibrium of a two-player zero-sum game, the expected payoff obtained is the same and is equal to the value of the game.
\end{fact}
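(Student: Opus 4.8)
The plan is to obtain the statement from von Neumann's minimax theorem combined with the best-response characterization of Nash equilibria; essentially all of the analytic work is already packaged inside the cited theorem, and what remains is a short sandwich argument. Adopt the convention of the preliminary section: the adversary picks $\pidist$ to \emph{maximize} $\util(\pialg,\pidist)$ and the algorithm player picks $\pialg$ to \emph{minimize} it, and extend $\util$ bilinearly to mixed strategies in $\Delta(\pialgclass)$ and $\Delta(\pidistclass)$. The first step is to invoke the minimax theorem for finite zero-sum games to assert the existence of the value
$$v \;:=\; \min_{\pialg \in \Delta(\pialgclass)}\ \max_{\pidist \in \Delta(\pidistclass)} \util(\pialg,\pidist) \;=\; \max_{\pidist \in \Delta(\pidistclass)}\ \min_{\pialg \in \Delta(\pialgclass)} \util(\pialg,\pidist),$$
using along the way that, by linearity, the inner optimizations may as well range over pure strategies.

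Next I would fix an arbitrary Nash equilibrium $(\pialg^{*},\pidist^{*})$ in mixed strategies and show its payoff equals $v$ by a two-sided estimate. For the upper bound: in equilibrium $\pialg^{*}$ is a best response to $\pidist^{*}$ for the minimizing player, so $\util(\pialg^{*},\pidist^{*}) = \min_{\pialg} \util(\pialg,\pidist^{*})$, and taking $\pidist = \pidist^{*}$ inside the outer maximum of the right-hand expression for $v$ gives $\util(\pialg^{*},\pidist^{*}) \le v$. For the lower bound, symmetrically: $\pidist^{*}$ is a best response to $\pialg^{*}$ for the maximizing player, so $\util(\pialg^{*},\pidist^{*}) = \max_{\pidist} \util(\pialg^{*},\pidist) \ge \min_{\pialg} \max_{\pidist} \util(\pialg,\pidist) = v$. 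Hence $\util(\pialg^{*},\pidist^{*}) = v$, and the algorithm player receives $-\util(\pialg^{*},\pidist^{*}) = -v$. Since the equilibrium was arbitrary, every equilibrium yields the same pair of payoffs $(v,-v)$, which is exactly the claim.

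I do not expect a real obstacle: the only substantive ingredient, existence of the value (the minimax equality), is precisely von Neumann's theorem, and the rest is bookkeeping --- keeping the min/max convention consistent and using that an equilibrium strategy is, by definition, a best response against the opponent's equilibrium strategy, so that the equilibrium payoff pins down the relevant one-sided value. Were a self-contained argument desired instead, the effort would move to proving the minimax theorem itself (strong LP duality applied to the two linear programs computing $\max_{\pidist}\min_{\pialg}$ and $\min_{\pialg}\max_{\pidist}$, or a Brouwer/Kakutani fixed-point argument), but that is standard and beyond the scope of this preliminary.
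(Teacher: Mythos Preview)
The paper does not prove this statement; it is stated as a classical fact attributed to \citet{vonNeumann-28} in the preliminaries and left without proof. Your argument is correct and is the standard derivation: invoke the minimax theorem to define the value $v$, then sandwich the payoff of an arbitrary equilibrium $(\pialg^{*},\pidist^{*})$ between $v$ and $v$ using the two best-response conditions. There is nothing to compare against in the paper, and no gap in your reasoning.
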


The mixed strategy that guarantees this value for the algorithm is known as a Minmax strategy while the mixed strategy that guarantees this value for the adversary is known as a Maxmin strategy. 

\begin{definition}
The Minmax strategy of the algorithm solves
$$\argmin_{\pirandalg \in \Delta(\pialgclass)}\max_{\pidist \in \pidistclass} \mathbf{E}_{\pialg \sim \pirandalg}[u(\pialg,\pidist)].$$
\end{definition}

\begin{definition}
The Maxmin strategy of the adversary solves
$$\argmax_{\pidistoverdist \in \Delta(\pidistclass)}\min_{\pialg \in \pialgclass} \mathbf{E}_{\pidist \sim \pidistoverdist}[u(\pialg,\pidist)].$$
\end{definition}

The values achieved by each of these strategies is the same and is equal to the value of the game. This is known as the Minimax Theorem. 

\begin{fact}[\citealp{vonNeumann-28}]
In a zero-sum game,

$$ \min_{\pirandalg \in \Delta(\pialgclass)}\max_{\pidist \in \pidistclass} \mathbf{E}_{\pialg \sim \pirandalg}[u(\pialg,\pidist)] = \max_{\pidistoverdist \in \Delta(\pidistclass)}\min_{\pialg \in \pialgclass} \mathbf{E}_{\pidist \sim \pidistoverdist}[u(\pialg,\pidist)].$$

\end{fact}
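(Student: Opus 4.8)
The plan is to prove the stated equality --- the minimax theorem --- via linear programming duality. First I would record the ``easy'' inequality: for any mixed strategies $\pirandalg \in \Delta(\pialgclass)$ and $\pidistoverdist \in \Delta(\pidistclass)$,
$$\max_{\pidist \in \pidistclass} \mathbf{E}_{\pialg \sim \pirandalg}[u(\pialg,\pidist)] \;\ge\; \mathbf{E}_{\pialg \sim \pirandalg,\, \pidist \sim \pidistoverdist}[u(\pialg,\pidist)] \;\ge\; \min_{\pialg \in \pialgclass} \mathbf{E}_{\pidist \sim \pidistoverdist}[u(\pialg,\pidist)],$$
and then, since this holds for all $\pirandalg$ and all $\pidistoverdist$, taking the minimum over $\pirandalg$ on the left and the maximum over $\pidistoverdist$ on the right yields ``$\ge$'' in the theorem. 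So the real content is the reverse inequality ``$\le$''.

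For the reverse inequality I would first observe that, because a best response against a fixed mixed strategy can always be taken to be a pure strategy, $\max_{\pidist \in \pidistclass}\mathbf{E}_{\pialg\sim\pirandalg}[u(\pialg,\pidist)]$ equals the maximum of $\sum_{\pialg} \pirandalg(\pialg)\, u(\pialg, j)$ over the finitely many \emph{pure} adversary strategies $j$. Hence the left-hand side of the theorem is the optimal value of the linear program ``minimize $v$ subject to $\sum_{\pialg}\pirandalg(\pialg)\, u(\pialg,j) \le v$ for every pure $j$, $\sum_{\pialg}\pirandalg(\pialg)=1$, $\pirandalg \ge 0$''. I would then write down the LP dual, with a dual variable $\pidistoverdist(j) \ge 0$ for each inequality constraint and $w$ for the normalization; the dual works out to ``maximize $w$ subject to $\sum_{j}\pidistoverdist(j)\, u(i,j) \ge w$ for every pure algorithm $i$, $\sum_j \pidistoverdist(j)=1$, $\pidistoverdist \ge 0$'', whose optimal value is exactly the right-hand side of the theorem. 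Both programs are feasible (the simplices are nonempty) and bounded (as $u$ is finite on a finite strategy set), so strong LP duality gives equality of the two optimal values, which is the claim.

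The main obstacle is precisely strong LP duality itself: weak duality recovers only the ``easy'' direction already noted, and the reverse direction needs a genuinely nontrivial ingredient --- the strong duality theorem for linear programming, or equivalently a separating-hyperplane / Farkas-lemma argument ruling out a gap between the two optimal values. An alternative route that is especially natural in this paper, given the online-learning machinery already invoked (\Cref{l:prelimlearningexist} and \citealp{FS-97}), is to have both players run a no-regret algorithm for $T$ rounds with the adversary best-responding each round: the algorithm-player's time-averaged loss is at least the minmax value (each round's loss equals $\max_{\pidist}\mathbf{E}_{\pialg\sim\pirandalg^t}[u] \ge \min_{\pirandalg}\max_{\pidist}\mathbf{E}[u]$) and at most the maxmin value plus a vanishing regret term (no-regret against the empirical adversary distribution $\bar\pidistoverdist$ gives average loss at most $\min_{\pialg}\mathbf{E}_{\pidist\sim\bar\pidistoverdist}[u] + o(1) \le \max_{\pidistoverdist}\min_{\pialg}\mathbf{E}[u] + o(1)$); letting $T\to\infty$ collapses the sandwich to equality. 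I would present the LP-duality argument as the primary proof and mention this learning-based proof as a remark, since it foreshadows the constructive equilibrium-computation viewpoint used in the rest of the paper.
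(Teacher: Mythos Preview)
Your proposal is correct, but note that the paper does not actually prove this statement: it is recorded as a \emph{Fact} attributed to \citet{vonNeumann-28} and used as a black box. So there is no ``paper's own proof'' to compare against; you have supplied an argument where the paper simply cites the classical result.

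That said, your LP-duality argument is standard and sound, and your alternative no-regret route is essentially the content of \Cref{l:appxnashlearn} in the appendix (which the paper does prove, for the purpose of \emph{computing} an approximate equilibrium rather than establishing existence). If you want to align with the paper's presentation, you could simply cite the minimax theorem as the paper does; if you prefer to be self-contained, either of your proofs is fine, with the learning-based one meshing particularly well with the paper's later use of \Cref{l:prelimlearningexist}.
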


\begin{fact}[\citealp{vonNeumann-28}]
In any Nash equilibrium of the zero-sum game, the adversary plays a maxmin strategy while the algorithm plays a minmax strategy.
\end{fact}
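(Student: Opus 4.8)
The plan is to obtain the statement as an immediate consequence of the two facts stated just above it: that every Nash equilibrium of a two-player zero-sum game gives the adversary the value $v$ of the game, and the Minimax Theorem, which asserts
$$\min_{\pirandalg \in \Delta(\pialgclass)}\max_{\pidist \in \pidistclass}\mathbf{E}_{\pialg \sim \pirandalg}[\util(\pialg,\pidist)] \;=\; \max_{\pidistoverdist \in \Delta(\pidistclass)}\min_{\pialg \in \pialgclass}\mathbf{E}_{\pidist \sim \pidistoverdist}[\util(\pialg,\pidist)] \;=\; v.$$
All that then remains is to verify that an arbitrary equilibrium profile actually attains these outer optima.

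First I would fix a Nash equilibrium in which the algorithm plays $\sigma \in \Delta(\pialgclass)$ and the adversary plays $\rho \in \Delta(\pidistclass)$, and write down the two best-response conditions it must satisfy. Since the algorithm minimizes $\util$ and has no profitable deviation from $\sigma$ against $\rho$, one gets $\mathbf{E}_{\pialg \sim \sigma,\, \pidist \sim \rho}[\util(\pialg,\pidist)] = \min_{\pialg \in \pialgclass}\mathbf{E}_{\pidist \sim \rho}[\util(\pialg,\pidist)]$ (the inequality ``$\le$'' is the no-deviation condition, and ``$\ge$'' holds because the left side is an average of terms each at least the right side); symmetrically, since the adversary maximizes $\util$ and has no profitable deviation from $\rho$ against $\sigma$, one gets $\mathbf{E}_{\pialg \sim \sigma,\, \pidist \sim \rho}[\util(\pialg,\pidist)] = \max_{\pidist \in \pidistclass}\mathbf{E}_{\pialg \sim \sigma}[\util(\pialg,\pidist)]$. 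By the first fact, both of these quantities equal $v$.

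Then I would close the argument by sandwiching against the Minimax Theorem. Using $\rho$ as one feasible choice in the outer maximum of the maxmin problem,
$$\max_{\pidistoverdist \in \Delta(\pidistclass)}\min_{\pialg \in \pialgclass}\mathbf{E}_{\pidist \sim \pidistoverdist}[\util(\pialg,\pidist)] \;\ge\; \min_{\pialg \in \pialgclass}\mathbf{E}_{\pidist \sim \rho}[\util(\pialg,\pidist)] \;=\; v,$$
and since the left-hand side is also $v$, equality holds throughout, so $\rho$ is a Maxmin strategy. Symmetrically, using $\sigma$ in the outer minimum of the minmax problem,
$$\min_{\pirandalg \in \Delta(\pialgclass)}\max_{\pidist \in \pidistclass}\mathbf{E}_{\pialg \sim \pirandalg}[\util(\pialg,\pidist)] \;\le\; \max_{\pidist \in \pidistclass}\mathbf{E}_{\pialg \sim \sigma}[\util(\pialg,\pidist)] \;=\; v,$$
which again equals $v$, so $\sigma$ is a Minmax strategy; since the equilibrium was arbitrary, the claim follows. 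I do not anticipate a real obstacle here: the argument is bookkeeping with the definitions, and the one piece with genuine content --- the Minimax Theorem, which for finite strategy sets follows from linear programming duality --- has already been isolated as the preceding fact and may be assumed.
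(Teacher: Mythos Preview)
Your argument is correct. However, the paper does not prove this statement at all: it is stated as a \texttt{fact} attributed to \citet{vonNeumann-28} and is taken as a standard preliminary result without proof, so there is no paper proof to compare against. Your derivation from the preceding two facts (equal value at every equilibrium and the Minimax Theorem) is the standard way to see it and would be an appropriate justification had one been required.
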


An approximate Nash equilibrium can be defined as follows.

\begin{definition}($\epsilon$-Approximate Nash Equilibrium)
In a two-player zero-sum game, a strategy profile $(\pirandalg,\pidistoverdist)$ is an $\epsilon$-approximate Nash equilibrium if
$$\mathbf{E}_{\pialg \sim \pirandalg}\left[\mathbf{E}_{\pidist \sim \pidistoverdist}\left[\util(\pialg,\pidist)\right]\right] \geq \max_{\pidist \in \pidistclass}\mathbf{E}_{\pialg \sim \pirandalg}\left[\util(\pialg,\pidist)\right] - \epsilon,$$
and 
$$\mathbf{E}_{\pialg \sim \pirandalg}\left[\mathbf{E}_{\pidist \sim \pidistoverdist}\left[\util(\pialg,\pidist)\right]\right] \leq \min_{\pialg \in \pialgclass}\mathbf{E}_{\pidist \sim \pidistoverdist}\left[\util(\pialg,\pidist)\right] + \epsilon.$$
\end{definition}

An analogous fact to the minimax theorem holds for approximate Nash equilibrium and approximate minimax strategies. We provide a proof for this fact in \autoref{a:proofs-prelims}.

\begin{fact}
\label{f:apxnashminmax}
In any zero-sum game, if $(\pirandalg,\pidistoverdist)$ is an $\epsilon$-approximate Nash equilibrium, then $\pirandalg$ is an $2\epsilon$-approximate Minmax strategy and $\pidistoverdist$ is an $2\epsilon$-approximate Maxmin strategy. 
\end{fact}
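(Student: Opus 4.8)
The plan is to compare the payoff of the given approximate equilibrium to the value of the game and then chain inequalities. Write $v$ for the value of the game, which by the minimax theorem (the von Neumann facts recalled above) equals both $\min_{\pirandalg'}\max_{\pidist}\mathbf{E}_{\pialg\sim\pirandalg'}[\util(\pialg,\pidist)]$ and $\max_{\pidistoverdist'}\min_{\pialg}\mathbf{E}_{\pidist\sim\pidistoverdist'}[\util(\pialg,\pidist)]$, and let $w := \mathbf{E}_{\pialg\sim\pirandalg}[\mathbf{E}_{\pidist\sim\pidistoverdist}[\util(\pialg,\pidist)]]$ be the payoff of the given $\epsilon$-approximate Nash equilibrium $(\pirandalg,\pidistoverdist)$. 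Unpacking the two defining inequalities of an $\epsilon$-approximate Nash equilibrium gives $\max_{\pidist}\mathbf{E}_{\pialg\sim\pirandalg}[\util(\pialg,\pidist)] \le w+\epsilon$ and $\min_{\pialg}\mathbf{E}_{\pidist\sim\pidistoverdist}[\util(\pialg,\pidist)] \ge w-\epsilon$.

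Next I would sandwich $w$ around $v$. Since $\min_{\pialg}\mathbf{E}_{\pidist\sim\pidistoverdist}[\util(\pialg,\pidist)]$ is one particular value of the maximin objective, it is at most $v$; combined with the second inequality above this yields $w \le v+\epsilon$. Symmetrically, $\max_{\pidist}\mathbf{E}_{\pialg\sim\pirandalg}[\util(\pialg,\pidist)]$ is one particular value of the minimax objective, hence at least $v$; combined with the first inequality this yields $w \ge v-\epsilon$. Thus $\lvert w-v\rvert \le \epsilon$.

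Finally I would conclude by assembling the pieces. For the algorithm player, $\max_{\pidist}\mathbf{E}_{\pialg\sim\pirandalg}[\util(\pialg,\pidist)] \le w+\epsilon \le v+2\epsilon$, which is exactly the statement that $\pirandalg$ is a $2\epsilon$-approximate Minmax strategy. Symmetrically, $\min_{\pialg}\mathbf{E}_{\pidist\sim\pidistoverdist}[\util(\pialg,\pidist)] \ge w-\epsilon \ge v-2\epsilon$, which says $\pidistoverdist$ is a $2\epsilon$-approximate Maxmin strategy.

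The argument is short and there is no serious obstacle; the only point requiring care is the orientation of the inequalities — here the algorithm minimizes and the adversary maximizes, so "approximate Minmax" means the worst-case payoff against $\pirandalg$ overshoots the value $v$ by at most $2\epsilon$, and one must invoke the minimax theorem in exactly the right direction when bounding the inner $\min$ and $\max$ by $v$. One should also note in passing that the recalled fact is stated for finite games, which is the regime in which it is applied to the reduced game in the main text.
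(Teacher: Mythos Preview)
Your proposal is correct and is essentially the same argument as the paper's. The paper writes it as a single chain of inequalities $v \le \max_{\pidist}\mathbf{E}_{\pialg\sim\pirandalg}[\util(\pialg,\pidist)] \le w+\epsilon \le \min_{\pialg}\mathbf{E}_{\pidist\sim\pidistoverdist}[\util(\pialg,\pidist)]+2\epsilon \le v+2\epsilon$ and reads off both conclusions from the chain, whereas you first isolate $\lvert w-v\rvert\le\epsilon$ and then bound each side; the ingredients and logic are identical.
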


\section{Preliminaries on Computing Approximate Nash Equilibria in Zero-Sum Games}
\label{a:proofs-prelims}

We describe two approaches, at a high level, to compute an approximate Nash equilibrium.

In a no-regret learning setting, a learner is equipped with a set of actions. Each action provides the learner with a corresponding utility. The utility given by an action varies over a time horizon. The goal of the learner is to pick the action with the best hindsight utility, i.e., the action with maximum total utility computed over time. In the no-regret learning setting, the learner is unaware of future utilities that an action could provide. Based on the history of utilities up until each time instant, a no-regret learning algorithm chooses a distribution over its actions. The payoff obtained by the learner at a time instant is the expected utility over actions. 

The goal is to come up with a no-regret learning algorithm which minimizes regret; where regret is the difference between the optimal hindsight utility and total expected utility obtained by the algorithm. a no-regret learning algorithm is said to be no-regret if, for any sequence of utilities, the regret converges to zero in the limit of the time horizon. Furthermore, such an algorithm is said to be efficient no-regret if at each time instant, the probability distribution over actions is updated in time polynomial in the number of actions.

We use an efficient no-regret policy to compute an approximate equilibrium in a half-infinite two-player zero-sum game using a common technique in which the player with a finite set of strategies employs an efficient no-regret strategy while the opposing player utilizes a best response oracle to best respond to the mixed strategy of the opposing player at each iteration of the algorithm. We describe the exact algorithm along with the exact guarantee in \autoref{a:proofs-prelims}.

An approximate Nash equilibrium in a two-player zero-sum game can be found using the Ellipsoid method. The key idea to finding such an equilibrium is that the problem can be reduced to solving a linear program (\citealp{Dantzig-63}). To solve for the mixed strategy of a player in equilibrium, the set of variables in the linear program correspond to its mixed strategy and the value that it can guarantee itself in the game. The constraints correspond to strategies of the adversary player. \citet{HLP-19} show that if a player has access to a best-response oracle, and the opposing player has a finite set of actions, then the ellipsoid method can be used to compute an equilibrium in time polynomial in the size of the finite strategy space and the time taken by the best-response oracle. At a high level, the best-response oracle acts as a separation oracle by testing whether the player can guarantee their proposed value for its current mixed strategy.

\subsection{Online Learning}

In an online learning setting, there is an learner with a set of actions $\{1,\dots,\numactions\}$ and chooses an action at each time $\timestep \in \{1,2,\dots,\inplen\}$. The learner faces a sequence of payoffs $(\boldsymbol{\util}^{\timestep})_{\timestep=1}^{\timehorizon}$, where $\boldsymbol{\util}^{\timestep} \in [0,\Bar{u}]^{\numactions}$ corresponds to the utility obtained by each action at time $\timestep$. The learner maintains a probability distribution $\boldsymbol{\probdist}^{\timestep}$ over its action set for each time $\timestep \in [\timehorizon]$ given they have observed the sequence of utility vectors $(\boldsymbol{\util}^{\subtimestep})_{\subtimestep=1}^{\timestep-1}$. They obtain the corresponding expected payoff $\boldsymbol{\probdist}^{\timestep} \cdot \boldsymbol{\util}^{\timestep}$ at time $\timestep$. The agent wants to minimize its regret.

\begin{definition}(Regret) 
The difference between the utility of the best action in hindsight and the cumulative utility obtained by the agent

$$\regret(\timehorizon) = \max_{i \in [\numactions]}\sum_{\timestep=1}^{\timehorizon}\util^{\timestep}_{i} - \sum_{\timestep=1}^{\timehorizon}\boldsymbol{\probdist}^{\timestep} \cdot \boldsymbol{\util}^{\timestep}$$

\end{definition}

The goal of the learner is to come up with a policy, i.e., a mapping from the sequence of utilities to a probability distribution over actions $\boldsymbol{\probdist}^{\timestep}$, such that its regret is sub-linear in $\timehorizon$ for any sequence of utilities. Such a policy is known as a no-regret policy. We reduce our prior-independent framework to an online learning setting. To that end, we define a policy to be an efficient no-regret policy if it is a no-regret policy that needs to observe a small number of utility vectors to achieve vanishing regret and if it updates its probability distribution over actions in a computationally efficient manner. Formally defining such a policy,

\begin{definition}(Efficient No-Regret Policy) A policy is an efficient no-regret policy if 

\begin{itemize}
    \item For any sequence of utility functions $(\boldsymbol{\util}^{\timestep})_{\timestep=1}^{\timehorizon}$, the regret $R(T) \leq O(T^{\alpha})$ for some $\alpha \in (0,1)$ 

    \item It achieves $\epsilon$-regret after observing a sequence of $\poly\left(\log m, \frac{1}{\epsilon},\Bar{u}\right)$ utility vectors
    \item It updates the probability distribution over actions $\mathbf{\probdist}^{\timestep}$ in $\poly(m)$ time. 
\end{itemize}
\label{d:efficientnoregret}
\end{definition}

\noindent
An example of an efficient no-regret policy is Multiplicative Weights (\citealp{FS-99}). 

An important part of our work is the use of online learning algorithms to compute an approximate Nash-equilibrium in zero-sum games. The utility to the adversary for a strategy profile $(\pialg,\pidist)$ is $\util (\pialg,\pidist)$. Assume that the utilities lie in $[0,\Bar{u}]$. 

Considering the learning protocol given below where Player 2 employs an efficient no-regret policy and Player 1 best responds.

\begin{algorithm}

\SetAlgoLined 

\textbf{Input:}

$\pidistclass$ : Action set of Player 2 \;
$\epsilon$ : Desired regret guarantee

\textbf{Initialization:} 

$(\pidistoverdist)^{1}$ : Initial mixed strategy of Player 2 over actions $\pidistclass$. \;
$T$ : Time horizon given by the Efficient No-Regret Policy after which Player 2 is guaranteed to have at most $\epsilon$ regret.

\textbf{Algorithm:}

\For{$\timestep \gets 1$ \KwTo $\timehorizon$}{
  Player 1 best responds to $(\pidistoverdist)^{\timestep}$ with algorithm $\pialg^{\timestep} \in \pialgclass$. We overload notation and refer to $\pialg^{\timestep}$ as a degenerate mixed strategy on algorithm $\pialg^{\timestep}$\;
 Player 2 observes the vector of utilities $\boldsymbol{\util}^{\timestep}$ \;
 Player 2 updates its mixed strategy to $(\pidistoverdist)^{\timeindex+1}$ according to an efficient no-regret policy (\Cref{d:efficientnoregret}). 
}
    
\caption{Online Learning in Zero-Sum Games by \citealp{FS-99}}
\label{a:prelimlearningalgo}
\end{algorithm}

\begin{lemma}
\label{l:appxnashlearn}
If after $\timehorizon$ iterations, the regret of Player 2 is $\frac{R(T)}{T} \leq \epsilon$, then $\frac{1}{\timehorizon}\sum_{\timestep=1}^{\timehorizon}(\pidistoverdist)^{\timestep}$ is an $\epsilon$-approximate Maxmin strategy and $\frac{1}{\timehorizon}\sum_{\timestep=1}^{\timehorizon}\pialg^{\timestep}$ is an $\epsilon$-approximate Minmax strategy and the strategy profile $\left(\frac{1}{\timehorizon}\sum_{\timestep=1}^{\timehorizon}\pialg^{\timestep}, \frac{1}{\timehorizon}\sum_{\timestep=1}^{\timehorizon}(\pidistoverdist)^{\timestep} \right)$ is an $\epsilon$-approximate Nash equilibrium. 
\end{lemma}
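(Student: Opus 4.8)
The plan is to run the standard Freund--Schapire style argument, pivoting everything around the \emph{realized average payoff} $\bar u := \frac1\timehorizon\sum_{\timestep=1}^{\timehorizon}\mathbf{E}_{\pidist\sim(\pidistoverdist)^{\timestep}}[\util(\pialg^{\timestep},\pidist)]$, the time-averaged strategies $\bar{\pialg}:=\frac1\timehorizon\sum_{\timestep}\pialg^{\timestep}$ and $\bar{\pidistoverdist}:=\frac1\timehorizon\sum_{\timestep}(\pidistoverdist)^{\timestep}$, and the value $v$ of the zero-sum game (which exists by von Neumann's minimax theorem, cf.\ \Cref{s:appendixprelims}).

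First I would prove a ``best-response'' bound and a ``no-regret'' bound that together sandwich $\bar u$ around $v$. Since $\pialg^{\timestep}$ is a best response of Player~1 (the minimizer of the adversary's utility) to $(\pidistoverdist)^{\timestep}$, we have $\mathbf{E}_{\pidist\sim(\pidistoverdist)^{\timestep}}[\util(\pialg^{\timestep},\pidist)]=\min_{\pialg\in\pialgclass}\mathbf{E}_{\pidist\sim(\pidistoverdist)^{\timestep}}[\util(\pialg,\pidist)]\le v$ for every $\timestep$; averaging over $\timestep$ and using that the minimum of an average dominates the average of the minima gives both $\bar u\le v$ and $\min_{\pialg}\mathbf{E}_{\pidist\sim\bar{\pidistoverdist}}[\util(\pialg,\pidist)]\ge\bar u$. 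On the other side, the $\epsilon$-regret guarantee of Player~2 gives $\max_{\pidist\in\pidistclass}\frac1\timehorizon\sum_{\timestep}\util(\pialg^{\timestep},\pidist)\le\bar u+\epsilon$, i.e.\ $\max_{\pidist}\mathbf{E}_{\pialg\sim\bar{\pialg}}[\util(\pialg,\pidist)]\le\bar u+\epsilon$, and since the left-hand side is at least the minimax value $v$, we also get $\bar u\ge v-\epsilon$. Chaining these: $\bar u\in[v-\epsilon,v]$; $\min_{\pialg}\mathbf{E}_{\pidist\sim\bar{\pidistoverdist}}[\util(\pialg,\pidist)]\ge v-\epsilon$, so $\bar{\pidistoverdist}$ is an $\epsilon$-approximate Maxmin strategy; and $\max_{\pidist}\mathbf{E}_{\pialg\sim\bar{\pialg}}[\util(\pialg,\pidist)]\le v+\epsilon$, so $\bar{\pialg}$ is an $\epsilon$-approximate Minmax strategy.

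For the $\epsilon$-approximate Nash equilibrium claim I would then observe that the payoff of the averaged profile, $W:=\mathbf{E}_{\pialg\sim\bar{\pialg}}\mathbf{E}_{\pidist\sim\bar{\pidistoverdist}}[\util(\pialg,\pidist)]$, trivially satisfies $\min_{\pialg}\mathbf{E}_{\pidist\sim\bar{\pidistoverdist}}[\util(\pialg,\pidist)]\le W\le\max_{\pidist}\mathbf{E}_{\pialg\sim\bar{\pialg}}[\util(\pialg,\pidist)]$, hence $W\in[\bar u,\bar u+\epsilon]$ by the two bounds above; substituting $W\le\bar u+\epsilon$ together with $\min_{\pialg}\mathbf{E}_{\pidist\sim\bar{\pidistoverdist}}[\util]\ge\bar u$ (resp.\ $W\ge\bar u$ together with $\max_{\pidist}\mathbf{E}_{\pialg\sim\bar{\pialg}}[\util]\le\bar u+\epsilon$) into the two defining inequalities of an $\epsilon$-approximate Nash equilibrium verifies both. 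I expect no genuine obstacle here — it is a short and standard computation — the one subtlety that needs care is remembering that the realized average $\bar u$ is \emph{not} the payoff $W$ of the averaged profile (the latter pairs every $\pialg^{s}$ against every $(\pidistoverdist)^{\timestep}$, not only the diagonal $s=\timestep$), so one must route the argument through the trivial sandwich on $W$; keeping the single $\epsilon$ of slack on the correct side of each min/max manipulation so that the final gap is $\epsilon$ rather than $2\epsilon$ is the other thing to watch.
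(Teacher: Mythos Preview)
Your argument is correct and essentially the same as the paper's: both pivot around the realized average payoff $\bar u=\frac{1}{\timehorizon}\sum_{\timestep}\mathbf{E}_{\pidist\sim(\pidistoverdist)^{\timestep}}[\util(\pialg^{\timestep},\pidist)]$, use the best-response property to bound $\min_{\pialg}\mathbf{E}_{\pidist\sim\bar{\pidistoverdist}}[\util]\ge\bar u$ and the regret guarantee to bound $\max_{\pidist}\mathbf{E}_{\pialg\sim\bar{\pialg}}[\util]\le\bar u+\epsilon$, then read off both the approximate Minmax/Maxmin claims and the $\epsilon$-Nash claim from these two inequalities. The only cosmetic difference is that you invoke the game value $v$ explicitly via the minimax theorem, whereas the paper writes a single chain of inequalities and compares its endpoints (implicitly using only weak duality); the logical content is identical.
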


\begin{proof}

Define $\avgprobdist = \frac{1}{\timehorizon}\sum_{\timestep=1}^{\timehorizon}(\pidistoverdist)^{\timestep}$ and $\avgalgdist = \frac{1}{\timehorizon}\sum_{\timestep=1}^{\timehorizon}\pialg^{\timestep}$. Consider the following chain of inequalities, 

\begin{align*}
    \max_{\pidistoverdist \in \Delta(\pidistclass)}\min_{\pialg \in \pialgclass}\mathbf{E}_{\pidist \sim \pidistoverdist}[\util(\pialg,\pidist)] & \geq \min_{\pialg \in \pialgclass}\mathbf{E}_{\pidist \sim \avgprobdist}[\util(\pialg,\pidist)] \tag*{(1)} \label{pf2:eq1} \\
    & = \min_{\pialg \in \pialgclass}\frac{1}{\timehorizon}\sum_{\timestep}^{\timehorizon}\util(\pialg,(\pidistoverdist)^{\timestep}) \tag*{(2)} \label{pf2:eq2} \\
    & \geq \frac{1}{\timehorizon} \sum_{\timestep}^{\timehorizon}\min_{\pialg \in \pialgclass}\util(\pialg,(\pidistoverdist)^{\timestep}) \tag*{(3)} \label{pf2:eq3} \\
    & = \frac{1}{\timehorizon} \sum_{\timestep}^{\timehorizon}\util(\pialg^{\timestep},(\pidistoverdist)^{\timestep}) \tag*{(4)} \label{pf2:eq4} \\
    & \geq \max_{\pidist \sim \pidistclass}\frac{1}{\timehorizon}\sum_{\timestep=1}^{\timehorizon}\util(\pialg^{\timestep},\pidist) - \epsilon \tag*{(5)} \label{pf2:eq5} \\
    & = \max_{\pidist \sim \pidistclass}\mathbf{E}_{\pialg \sim \avgalgdist}[\util(\pialg,\pidist)] - \epsilon \tag*{(6)} \label{pf2:eq6} \\
    & \geq \min_{\pirandalg \in \Delta(\pialgclass)}\max_{\pidist \in \pidistclass}\mathbf{E}_{\pialg \sim \pirandalg}[\util(\pialg,\pidist)] - \epsilon \tag*{(7)} \label{pf2:eq7}
\end{align*}

\noindent
where the first inequality follows from the fact that $\avgprobdist$ is a feasible randomization, the first equality from writing out the expectation explicitly, the second inequality from the fact that point-wise minimizing each term will perform better for the algorithm, the second equality from noting that at each iteration $\pialg^{\timestep}$ was a best response to $(\pidistoverdist)^{\timestep}$, the third inequality from the condition that the regret of Player 2 is at most $\epsilon$, the third equality from writing down the expectation explicitly and the fourth inequality from observing that $\avgalgdist$ is a feasible randomized algorithm. Comparing equation \ref{pf2:eq6} with the LHS of equation \ref{pf2:eq1} shows that $\avgalgdist$ is an $\epsilon$-approximate Minmax strategy. Similarly, comparing equation \ref{pf2:eq1} with equation \ref{pf2:eq7} shows that $\avgprobdist$ is an $\epsilon$-approximate Maxmin strategy. 

To show that $(\avgalgdist,\avgprobdist)$ is an $\epsilon$-approximate Nash equilibrium. Consider the following chain of inequalities. 

$$\mathbf{E}_{\pialg \sim \avgalgdist}[\mathbf{E}_{\pidist \sim \avgprobdist}[\util(\pialg,\pidist)]] \geq \min_{\pialg \in \pialgclass}\mathbf{E}_{\pidist \sim \avgprobdist}[\util(\pialg,\pidist)] \geq \max_{\pidist \in \pidistclass}\mathbf{E}_{\pialg \sim \avgalgdist}[\util(\pialg,\pidist)] - \epsilon$$

\noindent
where the second inequality follows from comparing equation \ref{pf2:eq1} and \ref{pf2:eq6}. Also consider

$$\mathbf{E}_{\pialg \sim \avgalgdist}[\mathbf{E}_{\pidist \sim \avgprobdist}[\util(\pialg,\pidist)]] \leq \max_{\pidist \in \pidistclass}\mathbf{E}_{\pialg \sim \avgalgdist}[\util(\pialg,\pidist)] \leq \min_{\pialg \in \pialgclass}\mathbf{E}_{\pidist \sim \avgprobdist}[\util(\pialg,\pidist)] + \epsilon$$

where the second inequality again follows from comparing equation \ref{pf2:eq1} and \ref{pf2:eq6}. Thus $(\avgalgdist,\avgprobdist)$ is an $\epsilon$-approximate Nash equilbrium. 
\end{proof}

\subsection{Ellipsoid Method}
\label{ss:eqlbellipsoid}

In this section we show that the Ellipsoid method can be used to compute an approximate Nash equilibrium in a two-player zero-sum game where a player who has uncountably many pure strategies has access to a best response oracle. The problem of finding an equilibrium strategy for each player can be written as a linear program where the set of variables map to a mixed strategy for a player and each constraint corresponds to a pure strategy of the opposing player. Unfortunately, uncountable number of strategies for a player correspond to uncountable number of variables and thus cannot be solved efficiently. However, an equilibrium can still be computed by solving the dual program and considering a reduced game where the set of uncountable strategies is reduced to those that were violated during the run of Ellipsoid algorithm for the dual program. This argument was proved by \citet{HLP-19}. We provide it here for completeness. 

\begin{proof}[Proof of \autoref{l:prelimlearningexist}]
    Let the setting of the zero-sum game be such that \textit{Player 1} has an uncountably large pure strategy space denoted by $\pialgclass$. Keeping convention, the utility of \textit{Player 2} for a strategy profile $(\pialg,\pidist)$, where $\pialg \in \pialgclass$ and $\pidist \in \pidistclass$, is $\util(\pialg,\pidist)$. The linear program which solves for the mixed strategy of \textit{Player 2} in equilibrium is given by  \begin{equation*} \begin{aligned}
    & \max_{\left(x_{\pidist}\right)_{\pidist \in \pidistclass},z} && z \\
    & \text{subject to} && \sum_{\pidist \in \pidistclass}x_{\pidist}\util(\pialg,\pidist) \geq z, &&  \forall \pialg \in \pialgclass, \\
    & && \sum_{\pidist \in \pidistclass}x_{\pidist} = 1, &&  \\
    & && x_{\pidist} \geq 0, && \forall \pidist \in \pidistclass.
  \end{aligned}
\end{equation*}
This linear program can be solved using the Ellipsoid method, for a suitably chosen initial point which will be described later, where the separation oracle used during the course of algorithm is the best response oracle of \textit{Player 1}. Let $\pidist_{e}$ be some fixed pure strategy of \textit{Player 2}. The linear program to solve for the mixed strategy in equilibrium can be rewritten in the standard form as \begin{equation} \label{eqn:advlp}  \begin{aligned}
    & \max_{\left(x_{\pidist}\right)_{\pidist \in \pidistclass \setminus \pidist_{e}},z} && z \\
    & \text{subject to} && \sum_{\pidist \in \pidistclass \setminus \pidist_{e}}x_{\pidist}[\util(\pialg,\pidist_{e})-\util(\pialg,\pidist)] + z \leq \util(\pialg,\pidist_{e}) , &&  \forall \pialg \in \pialgclass, \\
    & && \sum_{\pidist \in \pidistclass \setminus \pidist_{e}}x_{\pidist} \leq 1, &&  \\
    & && x_{\pidist} \geq 0, && \forall \pidist \in \pidistclass \setminus \pidist_{e}.
  \end{aligned}
  \tag{\textit{ADV-LP}}
\end{equation} Let the variables at iteration $t$ of the algorithm be denoted as $\left(x_{\pidist}^{t}\right)_{\pidist \in \pidistclass \setminus \pidist_{e}},z^{t}$. The oracle best-responds to the mixed strategy of \textit{Player 2} given by the variables of the program to obtain $\left(\pialg^{*}\right)^{t}$. The feasibility of the point is then verified by computing the expected utility at the best response. In the case $$\sum_{\pidist \in \pidistclass \setminus F_{e}}x_{\pidist}^{t}\util\left(\left(\pialg^{*}\right)^{t},\pidist\right) + \left(1-\sum_{\pidist \in \pidistclass \setminus F_{e}}x_{\pidist}^{t}\right)\util\left(\left(\pialg^{*}\right)^{t},\pidist_{e}\right) < z^{t},$$ the point is infeasible and a separating hyperplane is given by the vector $$\left([\util(\left(\pialg^{*}\right)^{t},\pidist)-\util(\left(\pialg^{*}\right)^{t},\pidist_{e})]_{\pidist \in \pidistclass \setminus F_{e}},-1\right).$$ In the case $$\sum_{\pidist \in \pidistclass \setminus F_{e}}x_{\pidist}^{t}\util\left(\left(\pialg^{*}\right)^{t},\pidist\right) + \left(1-\sum_{\pidist \in \pidistclass \setminus F_{e}}x_{\pidist}^{t}\right)\util\left(\left(\pialg^{*}\right)^{t},\pidist_{e}\right) \geq z^{t},$$, the additional constraints we need to check is whether the variables $\left(x_{\pidist}\right)_{\pidist \in \pidistclass \setminus \pidist_{e}}$ are non-negative and whether $\sum_{\pidist \in \pidistclass \setminus \pidist_{e}}x_{\pidist} \leq 1$. If any of the variables is non-negative, the separating hyperplane is the unit vector of dimension $|\pidistclass|$ with the coordinate corresponding to the violated constraint being $1$, while if the constraint $\sum_{\pidist \in \pidistclass \setminus \pidist_{e}}x_{\pidist} \leq 1$ is violated, the separating hyperplane is the vector $\left(1,1,\dots,1,0\right)$ with dimension $|\pidistclass|$. The total time taken by the separation oracle is linear in the time taken by the separation oracle and $|\pidistclass|$.

While running Ellipsoid to solve the program described above will give us the mixed strategy of \textit{Player 2} in equilibrium, it says nothing about the mixed strategy of \textit{Player 1}. To solve for the mixed strategy of \textit{Player 1}, we would have to solve the dual program of \ref{eqn:advlp} given by \begin{equation*}  \begin{aligned}
    & \max_{\left(y_{\pialg}\right)_{\pialg \in \pialgclass},z} && z \\
    & \text{subject to} && -\sum_{\pialg \in \pialgclass}y_{\pialg}\util(\pialg,\pidist) \geq z, &&  \forall \pidist \in \pidistclass, \\
    & && \sum_{\pialg \in \pialgclass}y_{\pialg} = 1, &&  \\
    & && y_{\pialg} \geq 0, && \forall \pialg \in \pialgclass.
  \end{aligned}
\end{equation*}
This program would not be feasible to solve using Ellipsoid as the number of variables, where a variable corresponds to a pure strategy of \textit{Player 1}, is uncountably large. We now show that considering a restricted strategy space suffices to find an approximate minmax strategy of \textit{Player 1}.

Consider the set of algorithms corresponding to the violated constraints in the run of Ellipsoid method while computing an $\epsilon$ approximate solution for the program \ref{eqn:advlp}. Let $\pialgclass_{v}$ be the set of algorithms corresponding to the violated constraints in solving \ref{eqn:advlp} and consider the reduced game where the strategy space of \textit{Player 2} is $\pidistclass$ and the strategy space of \textit{Player 1} is $\pialgclass_{v}$. Crucially, observe that the size of the pure strategy space of \textit{Player 1} in the reduced game is polynomial in $\text{poly}\left(|\pidistclass|,\log\frac{1}{\epsilon}\right)$ The linear program to solve for the mixed strategy of \textit{Player 1} in this reduced game is given by \begin{equation*}\label{eqn:agtrlp}\begin{aligned}
    & \max_{\left(y_{\pialg}\right)_{\pialg \in \pialgclass_{v}},z} && z \\
    & \text{subject to} && -\sum_{\pialg \in \pialgclass_{v}}y_{\pialg}\util(\pialg,\pidist) \geq z, &&  \forall \pidist \in \pidistclass, \\
    & && \sum_{\pialg \in \pialgclass_{v}}y_{\pialg} = 1, &&  \\
    & && y_{\pialg} \geq 0, && \forall \pialg \in \pialgclass_{v}.
  \end{aligned}
  \tag{\textit{ALGR-LP}}
\end{equation*}
The dual of this program allows us to solve for the mixed strategy of \textit{Player 2} in the reduced game, which when written in the standard form, is given by \begin{equation} \label{eqn:advrlp}  \begin{aligned}
    & \max_{\left(x_{\pidist}\right)_{\pidist \in \pidistclass \setminus \pidist_{e}},z} && z \\
    & \text{subject to} && \sum_{\pidist \in \pidistclass \setminus \pidist_{e}}x_{\pidist}[\util(\pialg,\pidist_{e})-\util(\pialg,\pidist)] + z \leq \util(\pialg,\pidist_{e}) , &&  \forall \pialg \in \pialgclass_{v}, \\
    & && \sum_{\pidist \in \pidistclass \setminus \pidist_{e}}x_{\pidist} \leq 1, &&  \\
    & && x_{\pidist} \geq 0, && \forall \pidist \in \pidistclass \setminus \pidist_{e}.
  \end{aligned}
  \tag{\textit{ADVR-LP}}
\end{equation}
Crucially, the sequence of points generated during the run of Ellipsoid algorithm to solve \ref{eqn:advrlp} is identical to the sequence generated during the run of Ellipsoid to solve \ref{eqn:advlp}. A consequence of this is that the variable $z$ converges to the same variable value in both programs. Thus, if we solve for an $\epsilon$-approximate maxmin strategy, in the game with strategy spaces $\left(\pialgclass,\pidistclass\right)$, by solving \ref{eqn:advlp}, then the difference between the value of the game with pure strategy spaces $\left(\pialgclass_{v},\pidistclass\right)$ and the value of the game with pure strategy spaces $\left(\pialgclass,\pidistclass\right)$ is at most $\epsilon$. Solving for an $\epsilon$-approximate minmax strategy in the reduced game, by solving \ref{eqn:agtrlp} using the Ellipsoid method with a separation oracle that goes over each constraint in a brute-force manner, then implies a $2\epsilon$-approximate minmax strategy in the original game. 
\end{proof}
\section{Proofs For Section 3}
\label{a:sub-game-proofs}

\begin{proof}[Proof of \Cref{l:truerandworstcase}]
To prove this, we fix a mixed Nash equilibrium in \textit{SRP-R} and show that neither player has an incentive to deviate when considering their strategy profiles in \textit{SRP}.

Consider the mixed strategy of the algorithm player in this equilibrium. The algorithms in \textit{SRP-R} do not depend on the location of the bad weather days. Thus, any order of a fixed number of good weather days gives the adversary the same payoff. It follows that in \textit{SRP}, the adversary has no incentive to deviate.

Consider the mixed strategy of the adversary player in this equilibrium. The algorithm player only needs to take a decision of when to \textit{stop}. The equilibrium strategy of the algorithm player in \textit{SRP-R} does that optimally. It follows that in \textit{SRP}, the algorithm player has no incentive to deviate. 
Since no player wants to strictly deviate from the mixed Nash equilibrium strategy profile of \textit{SRP-R}, the statement follows.
\end{proof}

\begin{proof}[Proof of \autoref{t:randworstcase}]
We find the mixed strategy of the algorithm player in (\textit{SRP-R}). Consider a mixed strategy for the algorithm player, defined as $\{\beta_{0},\dots,\beta_{\inplen-\floor{B}-1},\beta_{\inplen}\}$, where $\beta_{l}$ is the probability that the algorithm plays pure strategy $A^{l}$. For such a mixed strategy, the expected payoff that the adversary player obtains for pure strategy $F^{k}$ is $$\arraycolsep=1.4pt\def\arraystretch{2.2}\mathbf{E}[u(A^{l},F^{k})] = \threepartdef{\sum\limits_{m=0}^{k-1}\beta_{m}\frac{B+m}{k} + \sum\limits_{m=k}^{\inplen - \floor{B} - 1}\beta_{m} + \beta_{\inplen}}{1 \leq k \leq \inplen - \floor{B} - 1}{\sum\limits_{m=0}^{\inplen - \floor{B} - 1}\beta_{m}\frac{B+m}{k} + \beta_{\inplen}}{\inplen - \floor{B} \leq k \leq \ceil{B}-1}{\sum\limits_{m=0}^{\inplen - \floor{B} - 1}\beta_{m}\frac{B+m}{B} + \frac{\inplen}{B}\beta_{\inplen}}{k = \inplen}.$$ Our analysis can be divided into two cases.   \begin{itemize}
    \item \textbf{Case 1} : $\ceil{\stopcost} + \floor{\stopcost} \geq \inplen + 1$. Observe that the set of pure strategies $\{F^{k} : k \in \{\inplen - \floor{B} + 1, \dots, \ceil{B} - 1\}\}$ is weakly dominated by $F^{\inplen - \floor{B}}$. Assuming $k \in \{\inplen - \floor{B},\dots,\ceil{B}-1\}$, for any algorithm $A^{l}$ where $l \in \{0,\dots,\inplen-\floor{B}-1\}$, the utility to the adversary is $\frac{B+l}{k}$. For algorithm $A^{\inplen}$, the utility to the adversary is $1$. It follows that $F^{\inplen - \floor{B}}$ weakly dominates, as it gives a weakly higher payoff for every algorithm.
    Thus, it is without loss to further reduce the pure strategy space of the adversary to $$\mathcal{F}_{g} = \{F^{k} : k \in \{1,\dots,\inplen - \floor{B}\} \cup \{\inplen\}\}.$$ We now solve for mixed Nash equilibrium strategies by assuming that the adversary player puts non-zero mass on every pure strategy in the set described above. By assumption, the expected utility for every pure strategy in $\mathcal{F}_{g}$ is equal. We solve this system of equations by induction, considering a pair of expected utility equations at each stage, to derive values for $\{\beta_{1},\dots,\beta_{\inplen-\floor{B}-1}\}$ in terms of $\beta_{0}$. For the base case, consider the expected utility equations corresponding to strategy $F_{1}$ and $F_{2}$. It follows from equating the two that \begin{align*}
        \beta_{0}B + \sum_{m=1}^{\inplen-\floor{B}-1}\beta_{m} + \beta_{T} &= \beta_{0}\frac{B}{2} + \beta_{1}\frac{B+1}{2} + \sum_{m=2}^{\inplen-\floor{B}-1}\beta_{m} + \beta_{T} \\
        \iff \beta_{1} &= \frac{B}{B-1}\beta_{0}.
    \end{align*} Assume that for $l \in \{1,2,\dots,\inplen-\floor{B}-2\}$, $$\beta_{l} = \left(\frac{B}{B-1}\right)^{l}\beta_{0}.$$ To prove the induction hypothesis for stage $(l+1)$, consider the expected utility equations for strategies $F_{l+1}$ and $F_{l+2}$. Equating them, we have that \begin{align*}
        \sum\limits_{m=0}^{l+1}\beta_{m}\frac{B+m}{l+2} + \sum\limits_{m=l+2}^{\inplen - \floor{B} - 1}\beta_{m} + \beta_{\inplen} &=  \sum\limits_{m=0}^{l}\beta_{m}\frac{B+m}{l+1} + \sum\limits_{m=l+1}^{\inplen - \floor{B} - 1}\beta_{m} + \beta_{\inplen},\\
        \iff \left(\frac{B+l+1}{l+2}-1\right)\beta_{l+1} &= \sum_{m=0}^{l}(B+m)\left(\frac{1}{l+1}-\frac{1}{l+2}\right)\beta_{m}, \\
        \iff (l+1)(B-1)\beta_{l+1} &= \sum_{m=0}^{l}(B+m)\beta_{m} \\
        \iff (l+1)(B-1)\beta_{l+1} &= \beta_{0}\sum_{m=0}^{l}(B+m)\left(\frac{B}{B-1}\right)^{m}, \\
        \iff \beta_{l+1} &= \left(\frac{B}{B-1}\right)^{l+1}\beta_{0}.
    \end{align*}
    We now solve a pair of simultaneous equations in $\beta_{0}$ and $\beta_{T}$ to arrive at a closed form expression for the mixed Nash equilibrium. First, we must have that \begin{equation}
    \label{eq:simulone}
        \begin{aligned}
        \sum_{m=0}^{\inplen-\floor{B}-1}\beta_{m} + \beta_{T} &= 1, \\
        \iff (B-1)\left[\left(\frac{B}{B-1}\right)^{\inplen-\floor{B}}-1\right]\beta_{0} + \beta_{T} &= 1, 
    \end{aligned}
    \end{equation} where the first equality follows from the fact that we are considering a mixed strategy of the algorithm player. The the second equality follows from the relation between $\beta_{0}$ and $\beta_{l}$, for $l \in \{1,\dots,\inplen-\floor{B}-1\}$, derived above. We obtain a second equation by equating the expected utility function of strategy $F_{1}$ and $F_{T}$. It holds that \begin{equation}
    \label{eq:simultwo}
        \begin{aligned}
            \sum_{m=0}^{\inplen-\floor{B}-1}\beta_{m}\frac{B+m}{B} + \frac{T}{B}\beta_{T} &= B\beta_{0} + \sum_{m=1}^{\inplen-\floor{B}-1}\beta_{m} + \beta_{T}, \\
            \iff \beta_{0}\left(\sum_{m=0}^{\inplen-\floor{B}-1}\frac{B+m}{B}\left(\frac{B}{B-1}\right)^{m}\right) + \frac{T}{B}\beta_{T} &= B\beta_{0} + \beta_{0}\left(\sum_{m=1}^{\inplen-\floor{B}-1}\left(\frac{B}{B-1}\right)^{m}\right) + \beta_{T}, \\
            \iff \frac{\inplen-B}{B}\beta_{\inplen} &= (B+\floor{B}-\inplen)\left(\frac{B}{B-1}\right)^{\inplen-\floor{B}-1} \beta_{0}.
        \end{aligned}
    \end{equation} Combining \autoref{eq:simulone} and \autoref{eq:simultwo}, we have that $$\beta_{0} = \frac{1}{\frac{B\floor{B}}{\inplen-B}\left(\frac{B}{B-1}\right)^{\inplen-\floor{B}-1} - (B-1)}.$$ The result follows by replacing $\beta_{0}$ in \autoref{eq:simultwo} to obtain $\beta_{\inplen}$. 
    \item \textbf{Case 2} : $\ceil{\stopcost} + \floor{\stopcost} < \inplen + 1$. We follow the same calculation as Case 1 to arrive at an equilibrium strategy. Observe that the set of strategies $\{A^{l} : l \in \{\ceil{B},\dots,\inplen-\floor{B}-1\} \cup \{\inplen\}\}$ are weakly dominated by $A^{\ceil{B}-1}$. Assuming $l \geq \ceil{B}-1$, for any $F^{k}$ where $k \leq \ceil{B}-1$, the utility to the adversary is $1$. For input $F^{\inplen}$, the utility to the adversary for $A^{l}$, where $l \in \{\ceil{B}-1,\dots,\inplen-\floor{B}-1\}$, is $\frac{B+l}{B}$ and for $A^{\inplen}$ is $\frac{\inplen}{B}$. By assumption that $\ceil{B} + \floor{B} < \inplen + 1$, it follows that $$\frac{\ceil{B}+B-1}{B}< \frac{\ceil{B} + \floor{B}}{B} \leq \frac{\inplen}{B}.$$ Thus, $A^{\ceil{B}-1}$ weakly dominates $A^{l}$ for any $l \geq \ceil{B}$. Consider a mixed strategy of the algorithm player defined as $\{\beta_{0},\dots,\beta_{\ceil{B}-1}\}$. By following the same calculation as was done in Case 1, assuming that the adversary player puts mass on every pure strategy in equilibrium, we have that $$\beta_{l} = \left(\frac{B}{B-1}\right)^{l}\beta_{0}, \quad \forall l \in \{1,\dots,\ceil{B}-2\}.$$ To solve for $\beta_{0}$ and $\beta_{\ceil{B}-1}$, we first have 
    \begin{equation}
        \label{eq:simulthree}
        \begin{aligned}
        \sum_{m=0}^{\ceil{B}-2}\beta_{m} + \beta_{\ceil{B}-1} &= 1, \\
        \iff (B-1)\left[\left(\frac{B}{B-1}\right)^{\ceil{B}-1}-1\right]\beta_{0} + \beta_{\ceil{B}-1} &= 1, 
    \end{aligned}
    \end{equation} where the first equality follows from the fact that we are considering a mixed strategy of the algorithm player. By assumption of the adversary mixed strategy having full support, we equate the expected utilities for pure strategies $F_{\inplen}$ and $F_{1}$ to obtain \begin{equation}
        \begin{aligned}
            \sum_{m=0}^{\ceil{B}-1}\beta_{m}\frac{B+m}{B} &= B\beta_{0} + \sum_{m=1}^{\ceil{B}-1}\beta_{m}, \\
            \iff \sum_{m=0}^{\ceil{B}-2}\left(\frac{B}{B-1}\right)^{m}\frac{m}{B}\beta_{0} + \frac{\ceil{B}-1}{B}\beta_{\ceil{B}-1} &= (B-1)\beta_{0}, \\
            \iff \left(\frac{B}{B-1}\right)^{\ceil{B}-2}\left(B+1-\ceil{B}\right)\frac{B}{\ceil{B}-1}\beta_{0} &= \beta_{\ceil{B}-1}.
        \end{aligned}
        \label{eq:simulfour}
    \end{equation} Solving \autoref{eq:simulthree} and \autoref{eq:simulfour}, we have that $$ \pushQED{\qed} \beta_{0} = \frac{1}{\left(B-1\right)\left[\left(\frac{B}{B-1}\right)^{\ceil{B}-1}\frac{B}{\ceil{B}-1}-1\right]}. $$  
\end{itemize} The theorem statement then follows.
\end{proof}

\begin{proof}[Proof of \Cref{l:reducetimehorizon}]
    It is weakly optimal to always \textit{continue} on a bad weather weather day. There are $0$ costs incurred for continuing on the initial $l$ contiguous bad weather days. There is no contribution to the optimal hindsight cost from these bad weather days. Thus, it is an instance of the finite-horizon problem with horizon length $(T-l)$ on the $(l+1)^{\text{st}}$ day.
\end{proof}

    \begin{proof}[Proof of \Cref{l:characterizesubgame}]
Conditional on \textit{continuing} until $(X_{1},\dots,X_{i},1)$, the minimization problem to the algorithm player is the same as that of observing a sequence where all the bad weather days arrive first in a contiguous manner followed by good weather days. By \Cref{l:reducetimehorizon}, the probability of \textit{stopping}, conditional on \textit{continuing} for the observed sequence is given by $\eta_{k}^{\inplen-i+k-1}$.
\end{proof}
\section{Proofs For Section 4}
\label{a:proofs-framework}

We present the missing proofs from \autoref{s:framework}.

\begin{proof}[Proof of \autoref{l:frameworkdisctocont}]
Consider a strategy $\pidistoverdist$ of the adversary in the original game. We propose a corresponding strategy $\pidistoverdisteps$ of the adversary in the reduced game such that for any algorithm $\pialg$, their payoffs are $\epsilon$-close to one another. Index the elements in $\pidistclassepsgrid$ as $\pidistepsindex \in [|\pidistclassepsgrid|]$ and let the $\pidistepsindex^{th}$ element in the order be defined as $(\pidisteps)_{\pidistepsindex}$. Denote the $\epscoverscaling \epsilon$-ball of distributions around $\pidisteps$ as $B_{\epscoverscaling \epsilon}(\pidisteps)$. For $\pidistepsindex=1$, make the assignment $Pr[\pidistoverdisteps = (\pidisteps)_{1}] = Pr[\pidistoverdist \in B_{\epscoverscaling \epsilon}((\pidisteps)_{1}) \cap \pidistclass]$. Consider the set $\pidistclassrem{1} = \pidistclass \setminus B_{\epscoverscaling\epsilon}((\pidisteps)_{1})$. For $\pidistepsindex=2$, make the assignment $Pr[\pidistoverdisteps = (\pidisteps)_{2}] = Pr[\pidistoverdist \in B_{\epscoverscaling\epsilon}((\pidisteps)_{2}) \cap \pidistclassrem{1}]$. If the set $B_{\epscoverscaling\epsilon}((\pidisteps)_{2}) \cap \pidistclassrem{1}$ is a null set, assign the probability to be zero.  Consider the set $\pidistclassrem{2} = \pidistclassrem{1} \setminus B_{\epscoverscaling\epsilon}((\pidisteps)_{2})$. For a general $\pidistepsindex$, make the assignment $Pr[\pidistoverdisteps = (\pidisteps)_{\pidistepsindex}] = Pr[\pidistoverdist \in B_{\epscoverscaling\epsilon}((\pidisteps)_{\pidistepsindex}) \cap \pidistclassrem{\pidistepsindex-1}]$. If the set $B_{\epscoverscaling\epsilon}((\pidisteps)_{\pidistepsindex}) \cap \pidistclassrem{\pidistepsindex-1}$ is a null set, assign the probability to be zero. For the next iteration, consider the set $\pidistclassrem{\pidistepsindex} = \pidistclassrem{\pidistepsindex-1} \setminus B_{\epscoverscaling\epsilon}((\pidisteps)_{\pidistepsindex})$. Since $\pidistclassepsgrid$ is an $\epsilon$-cover and $\pidistoverdist$ is a feasible distribution, this process will terminate and $\pidistoverdisteps$ will be a feasible distribution on $\pidistclassepsgrid$. We now show that for any algorithm, the payoffs under these two distributions are $O(\epsilon)$-close to one another. For ease of notation, we re-write $\util(\pialg,\pidist) := \frac{\pialg(\pidist)}{\bayopt(\pidist)}$. We can write

\begin{equation*}
    \begin{split}
        \mathbf{E}_{\pidist \sim \pidistoverdist}[\util(\pialg,\pidist)] & = \int_{\pidist}\util(\pialg,\pidist)d\pidistoverdist, \\
        & = \sum_{\pidistepsindex=0}^{|\pidistclassepsgrid|}\int_{B_{\epscoverscaling\epsilon}((\pidisteps)_{\pidistepsindex}) \cap \pidistclass_{\pidistepsindex-1}}\util(\pialg,\pidist)dF',
    \end{split}
\end{equation*}

\begin{equation*}
    \begin{split}
        & \leq \sum_{\pidistepsindex=0}^{|\pidistclassepsgrid|}\left(\util(\pialg,(\pidisteps)_{\pidistepsindex}) + \frac{\epsilon}{4}\right)\int_{B_{\epscoverscaling\epsilon}((\pidisteps)_{\pidistepsindex}) \cap \pidistclass_{\pidistepsindex-1}}dF', \\
        & = \mathbf{E}_{\pidisteps \sim \pidistoverdisteps}[\util(\pialg,\pidisteps)] + \frac{\epsilon}{4},
    \end{split}
\end{equation*}
\noindent
where the first equality follows from the definition of the expectation, the second equality from the fact that $\pidistclassepsgrid$ is an $\epscoverscaling \epsilon$-cover and the sets that are the limits of the integral are disjoint, the first inequality from \Cref{p:frameworkutilcloseness}. The same calculation can be done for the lower bound i.e., 
$$ \mathbf{E}_{\pidist \sim \pidistoverdist}[\util(\pialg,\pidist)] \geq \mathbf{E}_{\pidisteps \sim \pidistoverdisteps}[\util(\pialg,\pidisteps)] - \frac{\epsilon}{4}.$$
\end{proof}

\begin{proof}[Proof of \autoref{l:frameworkepsapprox}]
Consider an $\frac{\epsilon}{4}$-approximate Nash equilibrium $(\pirandalgepseq,\pidistoverdistepseq)$ in the reduced game. Consider a best response $\pidistoverdist$ to $\pirandalgepseq$ in the original game. Consider the corresponding strategy to $\pidistoverdist$ in the reduced game as generated in \autoref{l:frameworkdisctocont} and denote it as $\pidistoverdisteps$. Since $(\pirandalgepseq,\pidistoverdistepseq)$ were an $\epsilon$-Nash equilibrium,
        
        \begin{equation*}
            \begin{split}
                \mathbf{E}_{\pialg \sim \pirandalgepseq}[\mathbf{E}_{\pidisteps \sim \pidistoverdistepseq}[\util(\pialg,\pidisteps)]] & \geq \max_{\pidisteps \in \pidistclassepsgrid}\mathbf{E}_{\pialg \sim \pirandalgepseq}[\util(\pialg,\pidisteps)] - \frac{\epsilon}{4} \\ & \geq \mathbf{E}_{\pialg \sim \pirandalgepseq}[\mathbf{E}_{\pidisteps \sim \pidistoverdisteps}[\util(\pialg,\pidisteps)] - \frac{\epsilon}{4}\\
                & \geq \mathbf{E}_{\pialg \sim \pirandalgepseq}[\mathbf{E}_{\pidist \sim \pidistoverdist}[\util(\pialg,\pidist)] - \frac{\epsilon}{2} \\
            \end{split}
        \end{equation*}
\noindent        
where the first inequality follows from the fact that $(\pirandalgepseq,\pidistoverdistepseq)$ is an $\epsilon$-approximate Nash equilibrium in the reduced game. The second inequality from noting that $\pidistoverdisteps$ is a feasible distribution in the reduced game. The third equality follows from \autoref{l:frameworkdisctocont}. It follows that $(\pirandalgepseq,\pidistoverdistepseq)$ is an $\frac{\epsilon}{2}$-approximate Nash equilibrium in the original game as $\pidistoverdist$ is a best response to $\pirandalgepseq$ in the original game and the algorithm has no incentive to $\epsilon$-deviate as its own strategy space was not restricted i.e., 
        $$\mathbf{E}_{\pialg \sim \pirandalgepseq}[\mathbf{E}_{\pidisteps \sim \pidistoverdistepseq}[\util(\pialg,\pidisteps)]] \leq \min_{\pialg \in \pialgclass}\mathbf{E}_{\pidisteps \sim \pidistoverdistepseq}[\util(\pialg,\pidisteps)] + \frac{\epsilon}{4} $$
\noindent
which follows from the definition that $(\pirandalgepseq,\pidistoverdistepseq)$ was an $\frac{\epsilon}{4}$-Nash equilibrium.
\end{proof}

\begin{proof}[Proof of \autoref{l:reducedgameapxeqlb}]
    Let $\left(\pialg_{r}',\pidist'\right)$ be an $\epsilon$-approximate mixed Nash equilibrium in the reduced game. By definition of an approximate Nash equilibrium, we have that \begin{align}
    \label{e:epseqlb}
        \mathbf{E}_{\pidist \sim \pidist'}\left[\mathbf{E}_{\pialg \sim \pialg_{r}'}\left[u\left(\pialg,\pidist\right)\right]\right] \geq \mathbf{E}_{\pidist \sim \pidist'}\left[u\left(\pialg_{r},\pidist\right)\right] - \epsilon, \quad \forall \pialg_{r} \in \pialgclass^{*}.
    \end{align} Let there exist a pure strategy in $\pialgclass$ which gives player 1 an $\epsilon$-incentive to deviate, i.e., for some $\pialg \in \pialgclass$, \begin{align*}
        \mathbf{E}_{\pidist \sim \pidist'}\left[u\left(\pialg,\pidist\right)\right] > \mathbf{E}_{\pidist \sim \pidist'}\left[\mathbf{E}_{\pialg \sim \pialg_{r}'}\left[u\left(\pialg,\pidist\right)\right]\right] + \epsilon.
    \end{align*} Since there exists a pure strategy in the original game which provides an $\epsilon$-incentive to deviate, any best response to $\pidist'$ in the original game will also provide player 1 an $\epsilon$-incentive to deviate. By assumption that for any mixed strategy that player $2$ chooses, a best response of player $1$ lies in a subset of strategies $\pialgclass^{*}$, there exists $\pialg_{r} \in \pialgclass^{*}$ such that \begin{align*}
        \mathbf{E}_{\pidist \sim \pidist'}\left[u\left(\pialg_{r},\pidist\right)\right] > \mathbf{E}_{\pidist \sim \pidist'}\left[\mathbf{E}_{\pialg \sim \pialg_{r}'}\left[u\left(\pialg,\pidist\right)\right]\right] + \epsilon.
    \end{align*} This contradicts \autoref{e:epseqlb}. Thus, player 1 does not have $\epsilon$-incentive to deviate in the original game. 

    Player 2 does not have $\epsilon$-incentive to deviate either. This is because we did not restrict the strategy space of player 2 while considering the reduced game. The statement follows.
\end{proof}

\begin{proof}[Proof of \autoref{t:bayopt}]
For $\skidist \in \left(\frac{\stopcost-1}{\inplen-k},\frac{\stopcost-1}{\inplen-k-1}\right]$, where $k \in [\min\{\ceil{\inplen-\stopcost},\inplen-1\}]$, the algorithm will always continue on seeing a good day for all days $\timeindex \geq (k+1)$. This is because at the last day, the algorithm always continues, no matter if the weather was good or bad. Let the agent continue on seeing good days at time indices $\timeindex \geq \timeindex'$, where $\timeindex' \geq (k+2)$. Consider the decision the algorithm makes at time $\timeindex'-1$ on seeing a good day. The agent will also continue as the stopping cost is strictly greater than the continuation costs, i.e $\stopcost \geq 1+\skidist (\inplen-k-1) > 1+\skidist (\inplen-\timeindex'+1)$, where $\timeindex' \geq (k+2)$. 

Consider the decision the algorithm makes on seeing a good day at time index $k$. The cost of stopping is $\stopcost$, while the cost of continuing is $1+\skidist (\inplen-k)$. Since, $\skidist > \frac{\stopcost-1}{\inplen-k}$, we have that the agent will stop on seeing a good day at time index $k$. The optimal algorithm also stops, when it observes a good day, for $\timeindex \leq k$. We prove this by induction. The base case is already argued for above. Let the algorithm stop at time index $k' \leq k$ on seeing a good day. Consider the decision at time $(k'-1)$. The cost from stopping is $\stopcost$. The cost, if it were to continue can be written as
\begin{equation*}
    \begin{split}
        \contcost{k'-1}{1} & = 1 + \skidist \stopcost + (1-\skidist)\contcost{k'}{0} \\
        & = 1 +\skidist \stopcost + (1-\skidist)[\skidist \contcost{k'+1}{1}+(1-\skidist)\contcost{k'+1}{0}]
    \end{split}
\end{equation*}
where the second equality follows the optimal algorithm always continuing on seeing bad weather. As the algorithm stopped on seeing a good day at time index $k'$, it must have been that $\stopcost < 1+\skidist \contcost{k'+1}{1}+(1-\skidist)\contcost{k'+1}{0}$. But then this implies that the agent must stop at time $k'-1$ when it sees a good day as
\begin{equation*}
    \begin{split}
        \stopcost & < 1+\skidist \contcost{k'+1}{1}+(1-\skidist)\contcost{k'+1}{0} \\
        & < \frac{1}{1-\skidist} + \skidist \contcost{k'+1}{1}+(1-\skidist)\contcost{k'+1}{0}
    \end{split}
\end{equation*}
and thus $\stopcost < 1 +\skidist \stopcost + (1-\skidist)[\skidist \contcost{k'+1}{1}+(1-\skidist)\contcost{k'+1}{0}]$. 
For $\skidist \leq \frac{\stopcost-1}{\inplen-1}$, the agent will always continue as at any time index the cost of stopping is greater than the cost of continuing given that the algorithm always continues at greater time indices. 
\end{proof}

We now give a proof for $\autoref{l:bestresponseinfosymmetric}$.

\begin{proof}[Proof of \autoref{l:bestresponseinfosymmetric}] For a distribution over probabilities $\skidistoverdist$, an optimal algorithm is obtained by backwards induction. The sketch of the proof is to use induction and then use the fact that for the class of distributions we consider the Bayesian updated distribution will be the same for a same count of good days in any sub-sequence of days. 

Observe that the agent never stops on seeing a bad day of weather and thus the property holds whenever the agent observes $\inpelement{\timeindex} = 0$ for all $\timeindex \in [\inplen]$. Now consider the case the agent observes $\inpelement{\timeindex}=1$. Note that at the last time-step, no matter what the sequence of days observed, the agent will always continue i.e.

$$\alg{\inplen}(\inpelement{1},\dots,\inpelement{\inplen}) = \continue \quad \forall (\inpelement{1},\dots,\inpelement{\inplen}) \in \{0,1\}^{\inplen}$$

This is because of our normalization that the stopping cost $\stopcost$ is strictly greater than the continuation cost of one. Thus the information-symmetry property is satisfied for functions at stage $T$. Let this property be satisfied by functions above stage $(i+1)$. At stage $i$, for a particular sub-sequence of days $(\inpelement{1},\dots,\inpelement{\timeindex-1},1)$, backwards induction will make a comparison between the cost of stopping the sequence and the cost of continuing. The agent will Bayesian update given the knowledge of $(\inpelement{1},\dots,\inpelement{\timeindex-1},1)$. Observe that for sub-sequences such that $\sum_{l=1}^{\timeindex}\inpelement{l} = k$ for any $k \in \{0\} \cup [\timeindex-1]$, the continuation cost is the same because the decisions taken in stages $(i+1)$ and further on are the same due to the information-symmetry property of functions above stage $(i+1)$. By the assumption on our class of distributions, we have that the Bayesian updated distribution will also be the same given sub-sequences of days with the same count of good days. Hence the same decision will be taken by the backwards induction algorithm and thus information-symmetry property is satisfied by functions at stage $\timeindex$.  
\end{proof}

\begin{proof}[Proof of \autoref{t:polybackwardsinduction}]
For a fixed distribution, an optimal algorithm can be found by backwards induction. Recall that \autoref{l:bestresponseinfosymmetric} allows us to focus on those algorithms which are information-symmetric i.e we can only focus on the count of the number of goods for a sub-sequence of a particular length and those which only make stop/continue decisions on observing a good day of weather. Let us denote a sub-sequence of length $\timeindex \in [T]$, by $\piinputsubseq{\timeindex} = (\inpelement{1},\dots,\inpelement{\timeindex})$. Let us denote the continuation cost, as a function of the probability $\skidist \in supp(\skidistoverdist)$ for a particular distribution $\skidistoverdist$, of the algorithm obtained by backwards induction, on seeing a sub-sequence $\piinputsubseq{\timeindex}$ as $\contcostp{\sum_{l=1}^{\timeindex-1}\inpelement{\timeindex}}{\timeindex}{\inpelement{\timeindex}}$. Note that for any sub-sequence of length $\inplen$, the agent will continue no matter what i.e. the agent always continues on the last day. This just follows from our normalization that $\stopcost > 1$ and that the cost of continuing is one. Thus we have,
    \begin{align*}
        \contcostp{k}{\inplen}{1} & = 1 \quad & \forall  \quad k \in \{0\} \cup [\inplen-1], \forall \quad \skidist \in supp(\skidistoverdist) \\
        \contcostp{k}{\inplen}{0} & = 0 \quad & \forall   \quad k \in \{0\} \cup [\inplen-1], \forall \quad \skidist \in supp(\skidistoverdist).
    \end{align*}
 For any $\piinputsubseq{\inplen-2} \in \{0,1\}^{\inplen-2}$, to compute the Bayesian updated distribution for a sub-sequence $(\piinputsubseq{\inplen-2},1)$ requires $O(\suppsize \inplen)$ operations. Since we have $O(\inplen)$ such relevant histories (because we focus only on the count of the number of good days), the total operations required to compute the expected continuation payoffs at the last stage are $O(\suppsize \inplen^{2})$. Now consider a sub-sequence of length $\timeindex \in [\inplen-1]$. For a particular count of good days $k \in \{0\} \cup [\timeindex-1]$, to calculate the decision to be taken, which we denote as $\skibayoptdistoverdistcomp{\timeindex}(k,1)$, we need to compare the cost from stopping to the cost of continuing. We need not consider the costs already incurred up until this point as they are sunk. In the equations below, $Pr[(k,\timeindex,1)|\skidist]$ denotes the probability of observing a particular sub-sequence of length $\timeindex$, where among the first $\timeindex-1$ days we observe $k$ good days and the $\timeindex^{th}$ day is good. 
\begin{multline*}
    \skibayoptdistoverdistcomp{\timeindex}(k,1) = arg\,min_{\{\stp,\continue\}} \{\sum_{\skidist \in supp(\skidistoverdist)}Pr[(k,\timeindex,1)|\skidist]Pr[\skidistoverdist=\skidist]\frac{\stopcost}{\skibayopt(\skidist)}, \\ \sum_{\skidist \in supp(\skidistoverdist)}Pr[(k,\timeindex,1)|\skidist]Pr[\skidistoverdist=\skidist]\left[\frac{1 + \skidist \contcostp{k+1}{\timeindex+1}{1} + (1-\skidist)\contcostp{k+1}{\timeindex+1}{0}}{\skibayopt(\skidist)}\right] \}
\end{multline*}
 where the first term is the cost from stopping while the second term is the cost from continuing. In the case $\skibayoptdistoverdistcomp{\timeindex+1}(k,1) = \stp$, then $\contcostp{k}{\timeindex}{1} = \stopcost$, otherwise we have $\contcostp{k}{\timeindex}{1} = 1 + \skidist \contcostp{k+1}{\timeindex+1}{1} + (1-\skidist)\contcostp{k+1}{\timeindex+1}{0} $. We also need to compute the continuation costs if the weather was a bad day. 
 $$\contcostp{k}{\timeindex}{0}  = \skidist \contcostp{k}{\timeindex+1}{1} + (1-\skidist)\contcostp{k}{\timeindex+1}{0} $$
 Computing each of these, i.e $\forall k \in \{0\} \cup [\timeindex]$, requires $O(\suppsize{\skidistoverdist} \timeindex^{2})$ time.  Thus an optimal algorithm for a fixed distribution with support size $\suppsize{\skidistoverdist}$ can be found in $O(\suppsize{\skidistoverdist} T^{3})$. 
\end{proof} 

\begin{proof}[Proof of \autoref{l:skirentproptwo}]
The expected cost of any information symmetric algorithm can be computed by a dynamic program. Let us denote the expected continuation cost from time index $\timeindex$, on observing $\inpelement{\timeindex}$ and on observing a count of $k$ good days in the first $(\timeindex-1)$ indices as  $\contcostp{k}{\timeindex}{\inpelement{\timeindex}}$. We can focus on continuation costs of this form as we are working with information-symmetric algorithms. At the last time step we have that

\begin{align*}
        \contcostp{k}{\inplen}{1} & = \stopcost \cdot \1[\alg{\inplen}(k,1)=\stp] + \1[\alg{\inplen}(k,1)=\continue] \quad & \forall  \quad k \in \{0\} \cup [\inplen-1] \\
        \contcostp{k}{\inplen}{0} & = \stopcost \cdot \1[\alg{\inplen}(k,0)=\stp] \quad & \forall   \quad k \in \{0\} \cup [\inplen-1]
\end{align*}

\noindent
which follow from the definitions of the game i.e. either the algorithm stops and incurs a cost of $\stopcost$ or otherwise it continues and suffers a cost of one or zero depending on whether $\inpelement{\inplen}=1$ or $\inpelement{\inplen}=0$ respectively. At any previous time index $\timeindex \in [\inplen-1]$, we would have that $\forall k \in \{0\} \cup [\timeindex-1]$

\begin{align*}
    \contcostp{k}{\timeindex}{1}  &= \stopcost \cdot \1[\alg{\timeindex}(k,1)=\stp] + \left(1+p\contcostp{k+1}{\timeindex+1}{1} + (1-p)\contcostp{k+1}{\timeindex+1}{0}\right)\1[\alg{\timeindex}(k,1)=\continue] \\
    \contcostp{k}{\timeindex}{0}  & = \stopcost \cdot \1[\alg{\timeindex}(k,0)=\stp] + \left(p\contcostp{k}{\timeindex+1}{1} + (1-p)\contcostp{k}{\timeindex+1}{0}\right)\1[\alg{\timeindex}(k,1)=\continue]
\end{align*}

The expected cost of the information-symmetric algorithm is then

$$\pialg(\skidist) = \skidist \contcostp{0}{1}{1} + (1-\skidist)\contcostp{0}{1}{0}$$

\noindent
Such a cost calculation will take time $O(\inplen^{3})$ to compute. While this was for $\pialg$, note that $\skibayopt$ is also information-symmetric and thus its expected cost can be computed in the same manner which will take time $O(\inplen^{3})$. Thus the utility can be computed in $O(\inplen^{3})$.
\end{proof}

We now provide missing proofs for the small cover property. We first prove a decomposition property of the Bayesian optimal cost, specifically that it can be split into a linear component and a component which is lower bounded by one for any probability in the range where the Bayesian optimal is relevant. 

\begin{proof}[Proof of \autoref{l:algoperfcloseness}]
Consider the following two random processes. One where, where a sequence of days, denoted by $\piinput'$, is sampled i.i.d with probability $\skidist_{\alpha}$, and independently a second random process where a sequence of days, denoted by $\piinput''$, is sampled i.i.d with probability $\skidist_{\beta}$. Consider the random variable $\pialg(\piinput')-\pialg(\piinput'')$, which is nothing but the difference in the algorithms cost on the two sampled sequences. Applying the union bound, these two processes sample the same sequence with probability at least $1-\frac{\epsilon}{\inplen}$, and thus $\pialg(\piinput')-\pialg(\piinput'')=0$ with probability at least $1-\frac{\epsilon}{\inplen}$. The maximum cost that any algorithm can suffer on any sequence of sampled days is $(\stopcost+\inplen)$, as the worst case is when you continue for all days and stop at the end. Thus we have that,
\[ -(\stopcost+\inplen)\frac{\epsilon}{\inplen} \leq \pialg(\skidist_{\alpha})-\pialg(\skidist_{\beta}) \leq (\stopcost+\inplen)\frac{\epsilon}{\inplen}.
\qedhere\] The statement follows.
\end{proof}

\begin{lemma}
\label{l:appendixbayoptdecompose}
For any $k \in [\min\{\ceil{\inplen-\stopcost},\inplen-1\}]$, consider the interval $\skidist \in \left(\frac{\stopcost-1}{\inplen-k},\min\left\{\frac{\stopcost-1}{\inplen-k-1},1\right\}\right]$. The Bayesian optimal cost can be decomposed as
$$\skibayopt(\skidist)= \skidist . \decompfunc(\skidist),$$
where $\decompfunc(\skidist) \geq 1 \quad \forall \skidist \in \left(\frac{\stopcost-1}{\inplen-k},\min\left\{\frac{\stopcost-1}{\inplen-k-1},1\right\}\right]$. This also holds for the Bayesian optimal cost in the interval $\skidist \in \left[\delta,\frac{\stopcost-1}{\inplen-1}\right]$.
\end{lemma}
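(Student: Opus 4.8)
The plan is to read off the expression for $\skibayopt(\skidist)$ that \autoref{t:bayopt} gives on each of the relevant intervals and then extract the factor $\skidist$ by one elementary identity. On the interval $\left(\frac{\stopcost-1}{\inplen-k},\min\left\{\frac{\stopcost-1}{\inplen-k-1},1\right\}\right]$ the last good-day index at which $\skibayopt$ stops is $k$, so by the expected-cost formula recorded immediately after \autoref{t:bayopt} we have $\skibayopt(\skidist)=\stopcost[1-(1-\skidist)^{k}]+(1-\skidist)^{k}\skidist(\inplen-k)$. First I would use the identity $1-(1-\skidist)^{k}=\skidist\sum_{i=0}^{k-1}(1-\skidist)^{i}$ to pull a common $\skidist$ out of the entire right-hand side, obtaining $\skibayopt(\skidist)=\skidist\cdot\decompfunc(\skidist)$ with $\decompfunc(\skidist)=\stopcost\sum_{i=0}^{k-1}(1-\skidist)^{i}+(1-\skidist)^{k}(\inplen-k)$, which is moreover a polynomial in $\skidist$, consistent with \autoref{l:polycost}.

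The second step is to verify $\decompfunc(\skidist)\ge 1$ on this interval. Since $k\in[\min\{\ceil{\inplen-\stopcost},\inplen-1\}]$ we have $k\ge 1$, so $\sum_{i=0}^{k-1}(1-\skidist)^{i}$ contains its $i=0$ term, which equals $1$; and since $\skidist\in(0,1]$ here, each power $(1-\skidist)^{i}$ is nonnegative and $\inplen-k\ge 1>0$, so every remaining summand of $\decompfunc(\skidist)$ is nonnegative. Hence $\decompfunc(\skidist)\ge\stopcost>1$. For the remaining interval $\skidist\in\left[\delta,\frac{\stopcost-1}{\inplen-1}\right]$, \autoref{t:bayopt} gives $\skibayopt(\skidist)=\inplen\skidist$, so the decomposition holds with the constant $\decompfunc(\skidist)=\inplen\ge 1$; these intervals together exhaust $[\delta,1]$, completing the argument.

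I do not expect a genuine obstacle here. The only point needing a word of care is the right endpoint $\skidist=\frac{\stopcost-1}{\inplen-k-1}$ of an interval, where (when that value is at most $1$) $\skibayopt$ may additionally stop at time index $k+1$, so the cost formula there uses $k+1$ in place of $k$. Because $k+1\le\inplen$, the same two moves — extract $\skidist$ with the identity, then bound the leading summand below by $\stopcost$ — apply verbatim with $k+1$ substituted for $k$, so the conclusion is unaffected.
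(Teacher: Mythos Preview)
Your argument is correct and, for the lemma as stated, cleaner than the paper's: you plug in the closed-form expression for $\skibayopt(\skidist)$ recorded after \autoref{t:bayopt}, factor $\skidist$ via the geometric-series identity $1-(1-\skidist)^{k}=\skidist\sum_{i=0}^{k-1}(1-\skidist)^{i}$, and then observe that the $i=0$ summand alone already gives $\decompfunc(\skidist)\ge\stopcost>1$, a sharper inequality than the paper obtains. Your endpoint remark is harmless but in fact unnecessary, since the two cost formulas agree at $\skidist=\frac{\stopcost-1}{\inplen-k-1}$.

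The paper takes a different and deliberately more general route. It writes $\skibayopt(\skidist)$ as the expectation $\sum_{\piinput}\skibayopt(\piinput)\skidist^{\sum\inpelement{\timeindex}}(1-\skidist)^{\inplen-\sum\inpelement{\timeindex}}$, uses only that $\skibayopt(\{0\}^{\inplen})=0$ (the algorithm never stops on bad weather) to factor out $\skidist$, and then lower-bounds $\decompfunc(\skidist)$ by replacing each $\skibayopt(\piinput)$ with $1$ and summing the resulting binomial terms to exactly $1$. The point of this detour is that it never touches the specific structure of $\skibayopt$ beyond ``does not stop on bad weather and pays at least $1$ on any path with a good day,'' so the same decomposition applies verbatim to \emph{any} such algorithm $\pialg$. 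The paper makes this explicit in the sentence following the proof and then exploits it in the proof of \Cref{l:utilitycloseness}, where both numerator $\pialg(\skidist)$ and denominator $\skibayopt(\skidist)$ are handled by the same mechanism. Your approach proves the lemma as written, but if you later need the analogous fact for a general $\pialg$ (for which no closed form is available), you would have to supply a separate argument along the paper's lines.
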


\begin{proof}
First consider the probability interval $\skidist \in \left[\delta,\frac{\stopcost-1}{\inplen-1}\right]$. The Bayesian optimal algorithm is to continue for all days no matter what the weather and thus $\skibayopt(\skidist)=T\skidist$. The lemma follows for this interval. Now consider any other interval $\skidist \in \left(\frac{\stopcost-1}{\inplen-k},\min\left\{\frac{\stopcost-1}{\inplen-k-1},1\right\}\right]$. The Bayesian optimal cost can be written as
$$\skibayopt(\skidist) =\sum_{\piinput \in \{0,1\}^{\inplen}}\skibayopt(\piinput)\skidist^{\sum_{\timeindex=1}^{\inplen}\inpelement{\timeindex}}(1-\skidist)^{\inplen-\sum_{\timeindex=1}^{\inplen}\inpelement{\timeindex}}.$$ We overloaded notation to denote the cost suffered by the algorithm on a sequence of days as $\skibayopt(\piinput)$. We have that $\skibayopt$ never stops on seeing bad weather and thus $\skibayopt(\{0\}^{T})=0$. Thus $\skibayopt(\skidist)$ can be rewritten as
\begin{align*}
        \skibayopt(\skidist) & = \sum_{\piinput \in \{0,1\}^{\inplen} \backslash \{0\}^{\inplen}}\skibayopt(\piinput)\skidist^{\sum_{\timeindex=1}^{\inplen}\inpelement{\timeindex}}(1-\skidist)^{\inplen-\sum_{\timeindex=1}^{\inplen}\inpelement{\timeindex}} \\
        & = \skidist \left[ \sum_{\piinput \in \{0,1\}^{\inplen} \backslash \{0\}^{\inplen}}\skibayopt(\piinput)\skidist^{\sum_{\timeindex=1}^{\inplen}\inpelement{\timeindex}-1}(1-\skidist)^{\inplen-\sum_{\timeindex=1}^{\inplen}\inpelement{\timeindex}} \right ] \\
        & = \skidist.\decompfunc(\skidist).
\end{align*}
Analyzing $\decompfunc(\skidist)$, we have that

\begin{equation*}
    \begin{split}
        \decompfunc(\skidist) &= \sum_{\piinput \in \{0,1\}^{\inplen} \backslash \{0\}^{\inplen}}\skibayopt(\piinput)\skidist^{\sum_{\timeindex=1}^{\inplen}\inpelement{\timeindex}-1}(1-\skidist)^{\inplen-\sum_{\timeindex=1}^{\inplen}\inpelement{\timeindex}}, \\
        & \geq \sum_{\piinput \in \{0,1\}^{\inplen} \backslash \{0\}^{\inplen}}\skidist^{\sum_{\timeindex=1}^{\inplen}\inpelement{\timeindex}-1}(1-\skidist)^{T-\sum_{\timeindex=1}^{T}\inpelement{\timeindex}}, \\
        & = \sum_{l=1}^{\inplen}\binom{\inplen}{l}\skidist^{l-1}(1-\skidist)^{\inplen-l} = \sum_{r=0}^{\inplen-1}\binom{\inplen}{r+1}\skidist^{r}(1-\skidist)^{\inplen-1-r}, \\
        & = \sum_{r=0}^{\inplen-1}\frac{\inplen}{r+1}\binom{\inplen-1}{r}\skidist^{r}(1-\skidist)^{\inplen-1-r}, \\
        & \geq \sum_{r=0}^{\inplen-1}\binom{\inplen-1}{r}\skidist^{r}(1-\skidist)^{\inplen-1-r} = 1,
    \end{split}
\end{equation*}
where the first inequality follows from that along any path with at least one good day, the algorithm will suffer a cost of at least one as either it will continue and suffer a cost of one or stop and suffer a cost of $B > 1$. 
\end{proof}

Observe that the proof did not specifically use the structure of the Bayesian optimal algorithm. The Lemma would hold for any algorithm which does not stop on seeing a bad day of weather. With this, we can now prove \Cref{l:utilitycloseness} which says that for probabilities that are close to each other, for any algorithm $\pialg$, the utilities $\frac{\pialg(\skidist)}{\skibayopt(\skidist)}$ are close to each other.

\begin{proof}[Proof of \Cref{l:utilitycloseness}]
First consider when $\skidist_{\alpha},\skidist_{\beta}$ lie in the interval $\left[\delta,\frac{\stopcost-1}{\inplen-1}\right]$. The Bayesian optimal cost in this interval is $\skibayopt(\skidist) = T\skidist$. By assumption, the algorithm does not stop on seeing a bad day of weather and thus $\pialg(\{0\}^{\inplen})=0$. We can then write the utility as

\begin{equation}
\label{e:utilform}
    \begin{split}
        \util(\pialg,\skidist) &= \frac{1}{\inplen}\sum_{\piinput \in \{0,1\}^{\inplen}\backslash \{0\}^{\inplen}}\pialg(\piinput)\skidist^{\sum_{\timeindex=1}^{\inplen}\inpelement{\timeindex}-1}(1-\skidist)^{\inplen-\sum_{\timeindex=1}^{\inplen}\inpelement{\timeindex}} 
    \end{split}
\end{equation}

Observe that in its current form, the utility is not an expectation over a feasible probability distribution and thus the arguments of \autoref{l:algoperfcloseness} do not pass through as is. We re-write the utility function in a manner such that it can be written as the expectation of a function over a feasible distribution. Our goal is to write the utility in a form such that it is generated by a binomial random variable of length $(\inplen-1)$.

\begin{equation}
\label{e:expecform}
\begin{split}
    \mathbf{E}_{\piinputmod \sim \Delta(\{0,1\}^{\inplen-1})}[\pimodalg(\piinputmod)] & = \sum_{\piinputmod \in \{0,1\}^{\inplen-1}}\pimodalg(\piinputmod)\skidist^{\sum_{\timeindex=1}^{\inplen-1}\inpelement{\timeindex}}(1-\skidist)^{\inplen-1-\sum_{\timeindex=1}^{\inplen-1}\inpelement{\timeindex}}
\end{split}
\end{equation}

The idea is to club terms together in \autoref{e:utilform} so as to represent the form above. For example, coefficients for sequences $\{\piinput : \sum_{i=1}^{\inplen}\inpelement{\timeindex}=1\}$ are clubbed together to give 

$$\pimodalg(\{0\}^{\inplen-1}) = \sum_{\{\piinput : \sum_{\timeindex=1}^{\inplen}\inpelement{\timeindex}=1\}}\pialg(\piinput)$$

In general, for a $r \in [\inplen-1]$, the number of sequences such that $\sum_{\timeindex=1}^{\inplen-1}\inpelementmod{\timeindex}=r$ are $\binom{\inplen-1}{r}$ in \autoref{e:expecform}, while the number of sequences such that $\sum_{\timeindex=1}^{\inplen}\inpelement{\timeindex}=r+1$ are $\binom{\inplen}{r+1}$ in \autoref{e:utilform}. Thus for each term in \autoref{e:expecform} we need to club $\frac{\binom{\inplen}{r+1}}{\binom{\inplen-1}{r}} = \frac{\inplen}{r+1}$ terms in \autoref{e:utilform}. A process to do this would be, for $r \in \{0\} \cup [\inplen-1]$, for the set sequences $S_{r}=\{(\inpelementmod{1},\dots,\inpelementmod{\inplen-1}) : \sum_{\timeindex=1}^{\inplen-1}\inpelementmod{\timeindex}=r\}$, consider the set of sequences $S_{r}^{*} = \{(\inpelement{1},\dots,\inpelement{\inplen}) : \sum_{\timeindex=1}^{\inplen}\inpelement{\timeindex}=r+1\}$. Consider any ordering of the sequences in $S_{r}$ and assign the elements indices in $\left[\binom{\inplen-1}{r}\right]$. Consider any ordering of the sequences in $S_{r}^{*}$ and assign the elements indices in $\left[\binom{\inplen}{r+1}\right]$. Now for sequence indexed $k$ in $S_{r}$, which we specifically label $\seqkrmod$, assign the set of sequences indexed $S_{r}^{*^{k}} = \left\{(k-1)*\ceil{\frac{\inplen}{r+1}}+1,\dots,\max\{k*\ceil{\frac{\inplen}{r+1}},\binom{\inplen}{r+1}\}\right\}$ in $S_{*}^{r}$. If $k$ is such that $(k-1)*\ceil{\frac{\inplen}{r+1}}+1 > \binom{\inplen}{r+1}$, then $S_{r}^{*^{k}}$ is the null set. Now we define the function $\pimodalg(.)$ as

$$\pimodalg(\seqkrmod) = \sum_{\piinput \in S_{r}^{*^{k}}}\pialg(\piinput)$$

Importantly, note that $\pimodalg(\seqkrmod) \leq \ceil{\frac{\inplen}{r+1}}\max \pialg(\piinput) \leq \ceil{\frac{\inplen}{r+1}}(\stopcost+r+1)$, where the second inequality holds because the assignment is over paths that observe $(r+1)$ good days and thus the maximum cost suffered by any algorithm on this path is at most $(\stopcost+r+1)$. With this assignment, we have that

$$\util(\pialg,\skidist) = \frac{1}{T}\mathbf{E}_{\piinputmod \sim \Delta(\{0,1\}^{\inplen-1})}[\pimodalg(\piinputmod)]$$

where the distribution is generated by a binomial process with probability $\skidist$ and length $(\inplen-1)$. Applying \autoref{l:algoperfcloseness}, we have that if $|\skidist_{\alpha}-\skidist_{\beta}| \leq \frac{\epsilon}{2(\stopcost+1)^{2}(\inplen+1)^{2}\inplen}$, when $\skidist_{\alpha},\skidist_{\beta} \in \left[\delta,\frac{\stopcost-1}{\inplen-1}\right]$,

$$|\util(\pialg,\skidist_{\alpha})-\util(\pialg,\skidist_{\beta})| \leq \frac{\epsilon}{2\inplen(\stopcost+1)(\inplen+1)} \leq \epsilon$$

where the second inequality follows from that when we apply \autoref{l:algoperfcloseness}, the worst case cost along a path is upper bounded as $\max_{r}\ceil{\frac{\inplen}{r+1}}(\stopcost+r+1)$ which is maximized at $r=0$. Now consider the case $\skidist_{\alpha},\skidist_{\beta}$ in the interval $\left(\frac{\stopcost-1}{\inplen-k},\min\left\{\frac{\stopcost-1}{\inplen-k-1},1\right\}\right]$ such that
    
    $$|\skidist_{\alpha}-\skidist_{\beta}| \leq \frac{\epsilon}{2(\stopcost+1)^{2}(\inplen+1)^{2}\inplen}$$

We have that

\begin{equation*}
    \begin{split}
        |\util(\pialg,\skidist_{\alpha})-\util(\pialg,\skidist_{\beta})| & = \left |\frac{\pialg(\skidist_{\alpha})}{\skibayoptspec{\alpha}(\skidist_{\alpha})} - \frac{A(\skidist_{\beta})}{\skibayoptspec{\beta}(\skidist_{\beta})}\right| \\
        & = \left |\frac{\pimodalg(\skidist_{\alpha})}{\decompfunc(\skidist_{\alpha})} - \frac{\pimodalg(\skidist_{\beta})}{\decompfunc(\skidist_{\beta})}\right| \\
        & \leq \epsilon
    \end{split}
\end{equation*}

where the second equality follows from \autoref{l:appendixbayoptdecompose}. Note that $\decompfunc$ can also be decomposed in an expected form and thus the Lipschitz bound derived above holds for $\decompfunc(\skidist)$ as well. Thus the inequality follows from noting that $\decompfunc(.) \geq 1$ for all $\skidist \in [\delta,1]$ (\autoref{l:appendixbayoptdecompose}), and an upper bound on both $\pimodalg(.)$ and $\decompfunc(.)$ is $\max_{r}\ceil{\frac{\inplen}{r+1}}(\stopcost+r+1)$ which is upper bounded by $(\stopcost+1)(\inplen+1)$. 
\end{proof}

\begin{proof}[Proof of \autoref{t:epscoverskirental}]
The idea is to discretize each interval where the Bayesian optimal algorithm is the same (\autoref{t:bayopt}), with a discretization of size
$$\Bar{\epsilon} = \frac{\epsilon}{8(\stopcost+1)^{2}(\inplen+1)^{2}\inplen}.$$ First consider the interval $\left[\delta,\frac{\stopcost-1}{\inplen-1}\right]$, where the Bayesian optimal algorithm always rents. Consider the following grid points on this interval
$$\skigrid{0} = \left\{\delta + k\Bar{\epsilon}\right\}_{k=1}^{\ceil{\frac{1}{\Bar{\epsilon}}\left[\frac{\stopcost-1}{\inplen-1}-\delta\right]-1}}.$$

\noindent
Such a discretization covers the interval $\left[\delta,\frac{\stopcost-1}{\inplen-1}\right]$. We have that for any grid point in $\skigrid{0}$ i.e. $(\skidisteps)_{k} \in \skigrid{0}$,

$$|\util(\pialg,(\skidisteps)_{k}) - \util(\pialg,\skidist)| \leq \frac{\epsilon}{4} \quad \forall p \in \left[(\skidisteps)_{k}-\Bar{\epsilon},(\skidisteps)_{k}+\Bar{\epsilon}\right]$$

\noindent
which follows from \Cref{l:utilitycloseness}. We do this process for every interval $\left[\frac{\stopcost-1}{\inplen-l},\min\left\{\frac{\stopcost-1}{\inplen-l-1},1\right\}\right]$, where $l \in [\min\{\ceil{\inplen-\stopcost},\inplen-1\}]$ to obtain a grid for that interval $\skigrid{l}$.

$$\skigrid{l} = \left\{\frac{\stopcost-1}{\inplen - l} + k\Bar{\epsilon}\right\}_{k=1}^{\ceil{\frac{1}{\Bar{\epsilon}}\left[\min\left\{\frac{\stopcost-1}{\inplen - l -1},1\right\}-\frac{\stopcost-1}{\inplen - l}\right]-1}}$$.

\noindent
By the same argument for $\skigrid{0}$, these satisfy the same properties as well. Hence 

$$\skidistclassepsgrid = \cup_{l=0}^{\min\{\ceil{\inplen-\stopcost},\inplen-1\}}\skigrid{l}$$ 

\noindent
satisfies Small-Cover.
\end{proof}

\begin{proof}[Proof of \autoref{t:skirentalfptas}]
To compute the run-time, first note that the number of actions to the distribution player is bounded as $|\skidistclassepsgrid| \leq \frac{2(\stopcost+1)^{2}(\inplen+1)^{2}\inplen}{\epsilon}$ (look at proof of \autoref{t:epscoverskirental}). Also note that the maximum possible utility encountered by the learning algorithm is $(B+1)(T+1)$ (look at the proof of \Cref{l:utilitycloseness} in \autoref{a:proofs-framework}). The number of updates the algorithm takes is $\frac{(\stopcost+1)^{2}(\inplen+1)^{2} \log |\skidistclassepsgrid|}{\epsilon^{2}}$. Each update takes $O(|\skidistclassepsgrid|\inplen^{3})$ time to execute (\autoref{t:polybackwardsinduction}). Thus the overall running time is $\Tilde{O}(\inplen^{12}\epsilon^{-3})$ where we have used the inequality $\stopcost \leq \inplen$ as the problem is trivial when $\stopcost > \inplen$ in which case the agent always continues. 
\end{proof}
\section{Proofs For Section 5}
\label{s:empiricalproofs}

\begin{proof}[Proof of Lemma \ref{l:commonworstcaseinput}]
By definition, the sub-game optimal algorithm takes an optimal decision at every stage of the game. Since the worst-case optimal algorithm takes an optimal decision on some input sequence, it follows that the sets have a non-empty intersection.
\end{proof}

\section{Properties of the Optimal Prior-Independent Ski-Rental Algorithm}
\label{a:optpiproperties}

In this section, we describe certain properties of an optimal prior-independent ski-rental algorithm. We first define what we mean by an algorithm to satisfy the thresholding property. Intuitively, for any sub-sequence of days of a fixed length, if the algorithm stops on seeing a certain count of good days, then it will also stop on seeing a greater count of good days. We then show a property of the worst-case distribution that the adversary plays.

\begin{definition}
(Thresholding Algorithm) We say that an algorithm $\pialg=(\alg{1},\dots,\alg{\inplen})$ is a thresholding algorithm if $\forall \timeindex \in [\inplen]$, $ \forall \inpelement{\timeindex} \in \{0,1\}$, if $ \exists k \in \{0\} \cup [\timeindex-1]$ such that $\alg{i}(k,\inpelement{\timeindex}) = \stp$, then $\alg{i}(k+1,\inpelement{\timeindex}) = \stp$
\end{definition}

\begin{lemma}
\label{l:thresholdingbestresponse}
For any distribution, $\skidistoverdist \in \Delta([\delta,1])$ that the adversary plays, an optimal algorithm for that distribution satisfies the thresholding property.  
\end{lemma}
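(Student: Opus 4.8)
Lemma~\ref{l:thresholdingbestresponse} — for any adversary distribution $\skidistoverdist \in \Delta([\delta,1])$, an optimal (backwards-induction) algorithm satisfies the thresholding property: at any stage $\timeindex$, if the algorithm stops upon observing a count $k$ of good days, it also stops upon observing count $k+1$.

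\textbf{Plan of attack.} The proof proceeds by backwards induction on the time index $\timeindex$, exactly in the spirit of the proofs of \Cref{l:bestresponseinfosymmetric} and \Cref{t:bayopt}. By \Cref{l:bestresponseinfosymmetric} we may assume the optimal algorithm is information-symmetric, so its state at stage $\timeindex$ is summarized by the pair (count of good days so far $k$, today's weather $\inpelement{\timeindex}$), and it never stops on a bad day. Thus the thresholding condition is vacuous on bad-weather days and only needs to be checked when $\inpelement{\timeindex}=1$. Let $\contcostp{k}{\timeindex}{\inpelement{\timeindex}}$ denote the expected continuation cost (weighted by the posterior over $\skidist$, as in the proof of \Cref{t:polybackwardsinduction}) of the backwards-induction algorithm from stage $\timeindex$ with count $k$; the algorithm stops on a good day at $(k,\timeindex)$ iff $\stopcost \le \contcostp{k}{\timeindex}{1}^{\text{cont}}$, where the ``continue'' value is $\contcostp{k}{\timeindex}{1}^{\text{cont}} = 1 + \mathbf{E}[\skidist\,\contcostp{k+1}{\timeindex+1}{1} + (1-\skidist)\contcostp{k+1}{\timeindex+1}{0}\mid \text{posterior after }(k,\timeindex,1)]$.

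\textbf{Key structural claim (the monotonicity lemma to establish first).} I would first prove, by backwards induction on $\timeindex$, the two monotonicity facts that drive everything: (i) for each $\timeindex$ and each weather value, $\contcostp{k}{\timeindex}{\cdot}$ is nondecreasing in $k$ up to the cap at $\stopcost$ — intuitively more observed good days make the remaining sequence ``worse'' because under the i.i.d.\ model the posterior on $\skidist$ stochastically increases (more good days $\Rightarrow$ higher $\skidist$ in the likelihood sense), and conditional on $\skidist$ the residual cost is monotone in $\skidist$; and (ii) once the backwards-induction algorithm stops at $(k,\timeindex)$, it stops at $(k',\timeindex)$ for all $k'\ge k$ — which is the thresholding claim at stage $\timeindex$ given the claim at stages $>\timeindex$. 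The base case $\timeindex = \inplen$ is immediate: the algorithm always continues on the last day (as $\stopcost>1$), so the thresholding property holds vacuously, and $\contcostp{k}{\inplen}{1}=1$, $\contcostp{k}{\inplen}{0}=0$ are trivially monotone. For the inductive step at stage $\timeindex$, I would: (a) use the inductive hypothesis (thresholding + monotonicity at stage $\timeindex+1$) to write the continuation value at $(k,\timeindex,1)$ in terms of stage-$(\timeindex+1)$ values; (b) show this continuation value is monotone nondecreasing in $k$, combining the monotonicity of $\contcostp{\cdot}{\timeindex+1}{\cdot}$ with the fact that observing a good day shifts the posterior on $\skidist$ upward (likelihood-ratio ordering), so the posterior-averaged continuation value only increases; (c) conclude that if $\stopcost \le \contcostp{k}{\timeindex}{1}^{\text{cont}}$ then $\stopcost \le \contcostp{k+1}{\timeindex}{1}^{\text{cont}}$, i.e.\ stopping at count $k$ forces stopping at count $k+1$.

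\textbf{The main obstacle.} The delicate point is step (b): rigorously showing that the posterior-weighted continuation value is monotone in the observed good-day count $k$. This requires two ingredients that must be carefully combined — first, that for a fixed $\skidist$ the residual expected cost $\skidist\,\contcostp{k+1}{\timeindex+1}{1}+(1-\skidist)\contcostp{k+1}{\timeindex+1}{0}$ is monotone in $\skidist$ (which itself needs an auxiliary backwards induction, since the continuation functions depend on the algorithm's own thresholds), and second, that the Bayesian posterior $\skidistoverdist(\cdot \mid k \text{ good in } \timeindex-1 \text{ days})$ is monotone in $k$ in the first-order-stochastic-dominance (equivalently, monotone-likelihood-ratio) sense — this latter fact holds for \emph{any} prior $\skidistoverdist$ because the Bernoulli($\skidist$) likelihood $\binom{\timeindex-1}{k}\skidist^k(1-\skidist)^{\timeindex-1-k}$ has the MLR property in $(k,\skidist)$. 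I would isolate the MLR/FOSD fact as a short standalone lemma (it is classical: conditioning on a larger count of successes MLR-increases the posterior), then feed it into the monotone-cost fact. One subtlety worth flagging: the continuation value also includes the normalization by $\skibayopt(\skidist)$ inside the expectation, so the relevant quantity is really the posterior-weighted \emph{ratio}; since $\skibayopt$ is itself information-symmetric and the decomposition $\skibayopt(\skidist) = \skidist\cdot g(\skidist)$ of \Cref{l:appendixbayoptdecompose} is available, I expect the argument still goes through, but one must be careful to track that the division does not destroy monotonicity — here the cleanest route may be to argue at the level of the unnormalized per-realization costs and invoke that the optimal-in-expectation decision coincides pointwise-in-posterior with the monotone structure, rather than manipulating ratios directly.
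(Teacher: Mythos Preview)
Your high-level strategy --- backwards induction on $\timeindex$, carrying a monotonicity invariant on the (posterior-weighted) continuation value so that ``stop at count $k$'' forces ``stop at count $k+1$'' --- is exactly the paper's. The gap is in the mechanism you propose for the inductive step. You claim (i) that the per-$\skidist$ continuation cost $\contcostp{k}{\timeindex}{\cdot}$ is nondecreasing in $k$ and then plan to combine this with MLR of the posterior. But (i) is false: the algorithm's stopping threshold at each stage is determined by the \emph{posterior-weighted} cost (tilted by $1/\skibayopt(\skidist)$), not by any single $\skidist$, so for a large $\skidist$ in the support the per-$\skidist$ continue-cost can exceed $\stopcost$ while the algorithm still continues, and then drop to $\stopcost$ once $k$ crosses the threshold --- a strict decrease. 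Consequently the FOSD/MLR argument does not go through as stated: when you compare $k$ to $k+1$, both the tilted posterior \emph{and} the integrand (the per-$\skidist$ future cost at count $k{+}1$ versus $k{+}2$) change, and you have no pointwise ordering of the latter to fall back on.

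The paper's proof circumvents this by never asserting per-$\skidist$ monotonicity in $k$. Instead it carries \emph{three} invariants on the tilted-posterior averages: monotonicity in $k$ of the good-day value $u_{\timeindex}(k)$, monotonicity in $k$ of the bad-day value $l_{\timeindex}(k)$, and crucially $u_{\timeindex}(k)\ge l_{\timeindex}(k)$. The inductive step then decomposes $u_{\timeindex}^{R}(k{+}1)-u_{\timeindex}^{R}(k)$ into a good-day piece and a bad-day piece, lower-bounds each via the first two invariants, and uses the Bernoulli-likelihood identity that the change in tilted-posterior mass on ``next day good'' is exactly the negative of the change on ``next day bad'' (this is where the MLR intuition lives, but as an algebraic cancellation rather than a black-box FOSD step). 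The two bounds then combine to $(u_{\timeindex+1}-l_{\timeindex+1})\cdot(\text{nonnegative term})\ge 0$, which is why the third invariant is indispensable. Your sketch omits that third invariant entirely; adding it, and replacing the per-$\skidist$ monotonicity claim with the direct manipulation of tilted-posterior sums, is what is needed to close the argument.
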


\begin{proof}
We show the above Lemma for any discrete distribution, but the same arguments go through for a continuous distribution as well. We use the same notation as that used in \autoref{t:polybackwardsinduction}. We do an induction over the length of sub-sequences $n \in [T]$, while maintaining the following invariant at each stage

\begin{itemize}
    \item The function $u_{t}(k)$, where $u_{t}(k)$ is given as,
    $$u_{t}(k) = \frac{\sum_{p \in supp(P)}\frac{Pr[(k,t,1)|p]Pr[\skidistoverdist=\skidist]}{OPT(p)}c^{p}(k,t,1)}{\sum_{p \in supp(P)}\frac{Pr[(k,t,1)|p]Pr[\skidistoverdist=\skidist]}{OPT(p)}}$$
    is monotone non-decreasing in $k$.
    \item The function $l_{t}(k)$, where $l_{t}(k)$ is given as,
     $$l_{t}(k) = \frac{\sum_{p \in supp(P)}\frac{Pr[(k,t,0)|p]Pr[\skidistoverdist=\skidist]}{OPT(p)}c^{p}(k,t,0)}{\sum_{p \in supp(P)}\frac{Pr[(k,t,0)|p]Pr[\skidistoverdist=\skidist]}{OPT(p)}}$$
    is monotone non-decreasing in $k$.
    \item $u_{t}(k) \geq l_{t}(k)$ for all $k \in \{0\} \cup [t-1]$. 
\end{itemize}

Let us first check the base case when $t=T$. We always continue on the last day no matter what and thus we have that $c^{p}(k,T,1) = 1$ and $c^{p}(k,T,0) = 0$ for all $k \in \{0\} \cup [T-1]$ and $p \in supp(P)$. Thus the base case is satisfied. Also observe that thresholding is vacuously satisfied at stage $t=T$. We now prove that when the induction assumptions are satisfied at stage $(n+1)$, then we get the thresholding property of the algorithm at stage $n$. It is sufficient to prove that on observing a good day, the normalized continuation cost of continueing is monotone non-decreasing in $k$. This is because the algorithm takes a decision according to 

\begin{multline*}
    A^{P}_{t}(k,1) = arg\,min_{\{B,R\}} \{\sum_{p \in supp(P)}Pr[(k,t,1)|p]Pr[\skidistoverdist=\skidist]\frac{B}{OPT(p)}, \\ \sum_{p \in supp(P)}Pr[(k,t,1)|p]Pr[\skidistoverdist=\skidist]\left[\frac{1 + pc^{p}(k+1,t+1,1) + (1-p)c^{p}(k+1,t+1,0)}{OPT(p)}\right]  \}
\end{multline*}

or effectively compares the stop cost $B$ against 

$$u^{R}_{t}(k) = \frac{\sum_{p \in supp(P)}\frac{Pr[(k,t,1)|p]Pr[\skidistoverdist=\skidist]}{OPT(p)}\left[1 + pc^{p}(k+1,t+1,1) + (1-p)c^{p}(k+1,t+1,0)\right]}{\sum_{p \in supp(P)}\frac{Pr[(k,t,1)|p]Pr[\skidistoverdist=\skidist]}{OPT(p)}}$$

Thus if the latter is monotone non-decreasing, if at any $k$ it becomes greater than $B$, it will still be greater than $B$ for larger $k$. We ignore writing the limits of the summation as it is clear it is over probabilites in the support. To make the calculations more compact, we re-write the weighted probability as

$$f^{p}(k,t,\inpelement{\inplen}) = \frac{Pr[(k,t,\inpelement{\inplen})|p]Pr[\skidistoverdist=\skidist]}{OPT(p)}, \quad \inpelement{\inplen} \in \{0,1\}$$

Our aim is to prove $u^{R}_{t}(k+1) - u^{R}_{t}(k) \geq 0$. We have

\begin{multline*}
     u^{R}_{t}(k+1) - u^{R}_{t}(k) = \overbrace{\sum p\left[ \frac{f^{p}(k+1,t,1)}{\sum f^{p}(k+1,t,1)}c^{p}(k+2,t+1,1)  - \frac{f^{p}(k,t,1)}{\sum f^{p}(k,t,1)} c^{p}(k+1,t+1,1) \right]}^{\encircled{1}} + \\ \underbrace{\sum (1-p) \left[ \frac{f^{p}(k+1,t,1)}{\sum f^{p}(k+1,t,1)}c^{p}(k+2,t+1,0)  - \frac{f^{p}(k,t,1)}{\sum f^{p}(k,t,1)} c^{p}(k+1,t+1,0) \right]}_{\encircled{2}}
\end{multline*}

Considering the first summation, it simplifies to the following equation which follows from the i.i.d assumption on the class of distributions we consider

\begin{multline}
\label{e:sumponesimple}
   \encircled{1} = \frac{\sum f^{p}(k+2,t+1,1)}{\sum f^{p}(k+1,t,1)}\left [ \sum \frac{f^{p}(k+2,t+1,1) c^{p}(k+2,t+1,1)}{\sum f^{p}(k+2,t+1,1)} \right ] - \\  \frac{\sum f^{p}(k+1,t+1,1)}{\sum f^{p}(k,t,1)} \left[ \sum \frac{f^{p}(k+1,t+1,1) c^{p}(k+1,t+1,1)}{\sum f^{p}(k+1,t+1,1)} \right ]
\end{multline}

Utilizing the inductive assumption corresponding to the non-decreasing property of $u_{t+1}(k)$, we have that 

\begin{equation*}
    \begin{split}
        \encircled{1}  & \geq \left[\sum \frac{f^{p}(k+1,t+1,1) c^{p}(k+1,t+1,1)}{\sum f^{p}(k+1,t+1,1)} \right] \overbrace{\left[ \frac{\sum f^{p}(k+2,t+1,1)}{\sum f^{p}(k+1,t,1)} - \frac{\sum f^{p}(k+1,t+1,1)}{\sum f^{p}(k,t,1)} \right]}^{\encircled{1b}}
    \end{split}
\end{equation*}

Analyzing the term $\encircled{1b}$, we have

\begin{equation*}
\label{e:lbpartoneb}
    \begin{split}
        \encircled{1b} &= \frac{\left(\sum f^{p}(k+2,t+1,1)\right)\left(\sum f^{p}(k,t,1)\right) - \left(\sum f^{p}(k+1,t+1,1)\right)\left(\sum f^{p}(k+1,t,1)\right)}{\left(\sum f^{p}(k+1,t,1)\right)\left(\sum f^{p}(k,t,1)\right)} \\
        & = \frac{\sum_{\{(p_{1},p_{2}) : p_{1} < p_{2}\}}\frac{Pr[P=p_{1}]Pr[P=p_{2}]}{OPT(p_{1})OPT(p_{2})}(p_{1}p_{2})^{k+1}\left[(1-p_{1})(1-p_{2})\right]^{t-k-2}(p_{1}-p_{2})^{2}}{\left(\sum f^{p}(k+1,t,1)\right)\left(\sum f^{p}(k,t,1)\right)}
    \end{split}
\end{equation*}

Analyzing term $\encircled{2}$, we obtain the following simplified equation again following from the i.i.d assumptions on the class of distributions we consider. 

\begin{multline}
\label{e:sumptwosimple}
   \encircled{2} = \frac{\sum f^{p}(k+2,t+1,0)}{\sum f^{p}(k+1,t,1)}\left [ \sum \frac{f^{p}(k+2,t+1,0) c^{p}(k+2,t+1,0)}{\sum f^{p}(k+2,t+1,0)} \right ] - \\  \frac{\sum f^{p}(k+1,t+1,0)}{\sum f^{p}(k,t,1)} \left[ \sum \frac{f^{p}(k+1,t+1,0) c^{p}(k+1,t+1,0)}{\sum f^{p}(k+1,t+1,0)} \right ]
\end{multline}

Utilizing the inductive assumption corresponding to the non-decreasing property of $l_{t+1}(k)$, we have that 

\begin{equation*}
    \begin{split}
        \encircled{2}  & \geq \left[\sum \frac{f^{p}(k+1,t+1,0) c^{p}(k+1,t+1,0)}{\sum f^{p}(k+1,t+1,0)} \right] \overbrace{\left[ \frac{\sum f^{p}(k+2,t+1,0)}{\sum f^{p}(k+1,t,1)} - \frac{\sum f^{p}(k+1,t+1,0)}{\sum f^{p}(k,t,1)} \right]}^{\encircled{2b}}
    \end{split}
\end{equation*}

Analyzing the term $\encircled{2b}$, we have 

\begin{equation*}
\label{e:lbparttwob}
    \begin{split}
        \encircled{2b} &= \frac{\left(\sum f^{p}(k+2,t+1,0)\right)\left(\sum f^{p}(k,t,1)\right) - \left(\sum f^{p}(k+1,t+1,0)\right)\left(\sum f^{p}(k+1,t,1)\right)}{\left(\sum f^{p}(k+1,t,1)\right)\left(\sum f^{p}(k,t,1)\right)} \\
        & = -\frac{\sum_{\{(p_{1},p_{2}):p_{1} < p_{2}\}}\frac{Pr[P=p_{1}]Pr[P=p_{2}]}{OPT(p_{1})OPT(p_{2})}(p_{1}p_{2})^{k+1}\left[(1-p_{1})(1-p_{2})\right]^{t-k-2}(p_{1}-p_{2})^{2}}{\left(\sum f^{p}(k+1,t,1)\right)\left(\sum f^{p}(k,t,1)\right)} \\
        &= -\encircled{1b}
    \end{split}
\end{equation*}

Thus combining $\encircled{1}$ and $\encircled{2}$ to give a bound on $u^{R}_{t}(k+1) - u^{R}_{t}(k)$, we have

\begin{equation*}
    \begin{split}
        u^{R}_{t}(k+1) - u^{R}_{t}(k) & \geq \encircled{1b}\left[\sum \frac{f^{p}(k+1,t+1,1) c^{p}(k+1,t+1,1)}{\sum f^{p}(k+1,t+1,1)} - \sum \frac{f^{p}(k+1,t+1,0) c^{p}(k+1,t+1,0)}{\sum f^{p}(k+1,t+1,0)}  \right] \\
        & \geq 0
    \end{split}
\end{equation*}

which follows from $\encircled{1b}$ being non-negative and the inductive assumption that $u_{t+1}(k) \geq l_{t+1}(k)$. 

All that is left to verify is that the inductive invariants are maintained at stage $t$. Since the algorithm is thresholding at stage $t$, let the algorithm continue till a count of $k^{*}$ and then stop for strictly greater than $k^{*}$ good days. If $k^{*}=(t-1)$, then the inductive invariant holds as $u_{t}(k) = u^{R}_{t}(k)$ for all $k \in \{0\} \cup [t-1]$, which we have proved to be monotonic non-decreasing in $k$. If $k^{*} < t$, then $u_{t}(k)$ is monotone non-decreasing for $k \in \{0\} \cup [k^{*}]$. We also have that $u_{t}(k)$ is monotone non-decreasing for $k \in \{k^{*}+1,t-1\}$ as $c^{p}(k,t,1) = B$ for all $p \in supp(P)$. All that is remaining to show is that $u_{t}(k^{*}+1) \geq u_{t}(k^{*})$.

$$u_{t}(k^{*}+1) =  \frac{\sum_{p \in supp(P)}\frac{Pr[(k^{*}+1,t,1)|p]Pr[\skidistoverdist=\skidist]}{OPT(p)}B}{\sum_{p \in supp(P)}\frac{Pr[(k^{*}+1,t,1)|p]Pr[\skidistoverdist=\skidist]}{OPT(p)}} = B > u_{t}^{R}(k^{*}) = u_{t}(k^{*})$$

where the inequality follows from the fact that the algorithm chose to continue on seeing $k^{*}$ good days in the first $(t-1)$ days. Now to verify the monotone non-decreasing nature of $l_{t}(k)$, we follows the same procedure as that done to show the monotonicity of $u_{t}^{R}(k)$. Note that the algorithm always continues on seeing a bad day of weather. Thus,

\begin{multline*}
     l_{t}(k+1) - l_{t}(k) = \overbrace{\sum p\left[ \frac{f^{p}(k+1,t,0)}{\sum f^{p}(k+1,t,0)}c^{p}(k+1,t+1,1)  - \frac{f^{p}(k,t,0)}{\sum f^{p}(k,t,0)} c^{p}(k,t+1,1) \right]}^{\encircled{3}} + \\ \underbrace{\sum (1-p) \left[ \frac{f^{p}(k+1,t,0)}{\sum f^{p}(k+1,t,0)}c^{p}(k+1,t+1,0)  - \frac{f^{p}(k,t,0)}{\sum f^{p}(k,t,0)} c^{p}(k,t+1,0) \right]}_{\encircled{4}}
\end{multline*}

Considering the first summation, it simplifies to the following equation which follows from the i.i.d assumption on the class of distributions we consider

\begin{multline}
\label{e:sumpthreesimple}
   \encircled{3} = \frac{\sum f^{p}(k+1,t+1,1)}{\sum f^{p}(k+1,t,0)}\left [ \sum \frac{f^{p}(k+1,t+1,1) c^{p}(k+1,t+1,1)}{\sum f^{p}(k+1,t+1,1)} \right ] - \\  \frac{\sum f^{p}(k,t+1,1)}{\sum f^{p}(k,t,0)} \left[ \sum \frac{f^{p}(k,t+1,1) c^{p}(k,t+1,1)}{\sum f^{p}(k,t+1,1)} \right ]
\end{multline}

Utilizing the inductive assumption at stage $(t+1)$ on the monotonicity of $u_{t+1}(k)$, we have that

\begin{equation*}
    \begin{split}
        \encircled{3}  & \geq \left[\sum \frac{f^{p}(k,t+1,1) c^{p}(k,t+1,1)}{\sum f^{p}(k,t+1,1)} \right] \overbrace{\left[ \frac{\sum f^{p}(k+1,t+1,1)}{\sum f^{p}(k+1,t,0)} - \frac{\sum f^{p}(k,t+1,1)}{\sum f^{p}(k,t,0)} \right]}^{\encircled{3b}}
    \end{split}
\end{equation*}

Analyzing the term $\encircled{3b}$, we have

\begin{equation*}
\label{e:lbpartthreeb}
    \begin{split}
        \encircled{3b} &= \frac{\left(\sum f^{p}(k+1,t+1,1)\right)\left(\sum f^{p}(k,t,0)\right) - \left(\sum f^{p}(k,t+1,1)\right)\left(\sum f^{p}(k+1,t,0)\right)}{\left(\sum f^{p}(k+1,t,0)\right)\left(\sum f^{p}(k,t,0)\right)} \\
        & \geq \frac{\sum_{\{(p_{1},p_{2}) : p_{1} < p_{2}, p_{2} \neq 1\}}\frac{Pr[P=p_{1}]Pr[P=p_{2}]}{OPT(p_{1})OPT(p_{2})}(p_{1}p_{2})^{k}\left[(1-p_{1})(1-p_{2})\right]^{t-k-1}(p_{1}-p_{2})^{2}}{\left(\sum f^{p}(k+1,t,0)\right)\left(\sum f^{p}(k,t,0)\right)}
    \end{split}
\end{equation*}

Analyzing term $\encircled{4}$, we obtain the following simplified equation again following from the i.i.d assumptions on the class of distributions we consider. 

\begin{multline}
\label{e:sumpfoursimple}
   \encircled{4} = \frac{\sum f^{p}(k+1,t+1,0)}{\sum f^{p}(k+1,t,0)}\left [ \sum \frac{f^{p}(k+1,t+1,0) c^{p}(k+1,t+1,0)}{\sum f^{p}(k+1,t+1,0)} \right ] - \\  \frac{\sum f^{p}(k,t+1,0)}{\sum f^{p}(k,t,0)} \left[ \sum \frac{f^{p}(k,t+1,0) c^{p}(k,t+1,0)}{\sum f^{p}(k,t+1,0)} \right ]
\end{multline}

Utilizing the inductive assumption corresponding to the non-decreasing property of $l_{t+1}(k)$, we have that 

\begin{equation*}
    \begin{split}
        \encircled{4}  & \geq \left[\sum \frac{f^{p}(k,t+1,0) c^{p}(k,t+1,0)}{\sum f^{p}(k,t+1,0)} \right] \overbrace{\left[ \frac{\sum f^{p}(k+1,t+1,0)}{\sum f^{p}(k+1,t,0)} - \frac{\sum f^{p}(k,t+1,0)}{\sum f^{p}(k,t,0)} \right]}^{\encircled{4b}}
    \end{split}
\end{equation*}

Analyzing the term $\encircled{4b}$, we have 

\begin{equation*}
\label{e:lbpartfourb}
    \begin{split}
        \encircled{4b} &= \frac{\left(\sum f^{p}(k+1,t+1,0)\right)\left(\sum f^{p}(k,t,0)\right) - \left(\sum f^{p}(k,t+1,0)\right)\left(\sum f^{p}(k+1,t,0)\right)}{\left(\sum f^{p}(k+1,t,0)\right)\left(\sum f^{p}(k,t,0)\right)} \\
        & = -\frac{\sum_{\{(p_{1},p_{2}):p_{1} < p_{2}, p_{2} \neq 1\}}\frac{Pr[P=p_{1}]Pr[P=p_{2}]}{OPT(p_{1})OPT(p_{2})}(p_{1}p_{2})^{k}\left[(1-p_{1})(1-p_{2})\right]^{t-k-1}(p_{1}-p_{2})^{2}}{\left(\sum f^{p}(k+1,t,1)\right)\left(\sum f^{p}(k,t,1)\right)} \\
        & \geq -\encircled{3b}
    \end{split}
\end{equation*}

Thus combining $\encircled{3}$ and $\encircled{4}$ to give a bound on $l_{t}(k+1) - l_{t}(k)$, we have

\begin{equation*}
    \begin{split}
        l_{t}(k+1) - l_{t}(k) & \geq \encircled{3b}\left[\sum \frac{f^{p}(k,t+1,1) c^{p}(k,t+1,1)}{\sum f^{p}(k,t+1,1)} - \sum \frac{f^{p}(k,t+1,0) c^{p}(k,t+1,0)}{\sum f^{p}(k,t+1,0)}  \right] \\
        & \geq 0
    \end{split}
\end{equation*}

which follows from $\encircled{3b}$ being non-negative and the inductive assumption that $u_{t+1}(k) \geq l_{t+1}(k)$. 

All that is left to prove that the third induction assumption is maintained i.e $u_{t}(k) \geq l_{t}(k) \forall k \in \{0\} \cup [t-1]$. First note that if at some $k^{*} \in \{0\} \cup [t-1]$, the algorithm stops, then we clearly have for all $k \geq k^{*}$, 

$$u_{t}(k) = \frac{\sum_{p \in supp(P)}\frac{Pr[(k,t,1)|p]Pr[\skidistoverdist=\skidist]}{OPT(p)}B}{\sum_{p \in supp(P)}\frac{Pr[(k,t,1)|p]Pr[\skidistoverdist=\skidist]}{OPT(p)}} = B > l_{t}(k)$$

where the first equality follows from the fact that $c^{p}(k,t,1)=B \quad \forall p \in supp(P), \forall k \geq k^{*}$ and the inequality from the fact that the agent never stops on seeing a bad day in the Backwards Induction algorithm and thus it must have been that 

\begin{equation*}
    \begin{split}
        B\sum_{p \in supp(P)}\frac{Pr[(k,t,0)|p]Pr[\skidistoverdist=\skidist]}{OPT(p)} &> \sum_{p \in supp(P)}\frac{Pr[(k,t,0)|p]Pr[\skidistoverdist=\skidist]}{OPT(p)}[pc^{p}(k,t+1,1)+(1-p)c^{p}(k,t+1,0)] \\
        & = \sum_{p \in supp(P)}\frac{Pr[(k,t,0)|p]Pr[\skidistoverdist=\skidist]}{OPT(p)}c^{p}(k,t,0)
    \end{split}
\end{equation*}

For $k < k^{*}$, i.e those count of days where the agent continues, we have that 

\begin{multline*}
     u_{t}(k) - l_{t}(k) = 1 + \overbrace{\sum p\left[ \frac{f^{p}(k,t,1)}{\sum f^{p}(k,t,1)}c^{p}(k+1,t+1,1)  - \frac{f^{p}(k,t,0)}{\sum f^{p}(k,t,0)} c^{p}(k,t+1,1) \right]}^{\encircled{5}} + \\ \underbrace{\sum (1-p) \left[ \frac{f^{p}(k,t,1)}{\sum f^{p}(k,t,1)}c^{p}(k+1,t+1,0)  - \frac{f^{p}(k,t,0)}{\sum f^{p}(k,t,0)} c^{p}(k,t+1,0) \right]}_{\encircled{6}}
\end{multline*}

Considering \encircled{5}, using the i.i.d assumption on the class of distributions we consider, 

\begin{multline}
\label{e:sumpfivesimple}
   \encircled{5} = \frac{\sum f^{p}(k+1,t+1,1)}{\sum f^{p}(k,t,1)}\left [ \sum \frac{f^{p}(k+1,t+1,1) c^{p}(k+1,t+1,1)}{\sum f^{p}(k+1,t+1,1)} \right ] - \\  \frac{\sum f^{p}(k,t+1,1)}{\sum f^{p}(k,t,0)} \left[ \sum \frac{f^{p}(k,t+1,1) c^{p}(k,t+1,1)}{\sum f^{p}(k,t+1,1)} \right ]
\end{multline}

Utilizing the inductive assumption at stage $(t+1)$ on the monotonicity of $u_{t+1}(k)$, we have that

\begin{equation*}
    \begin{split}
        \encircled{5}  & \geq \left[\sum \frac{f^{p}(k,t+1,1) c^{p}(k,t+1,1)}{\sum f^{p}(k,t+1,1)} \right] \overbrace{\left[ \frac{\sum f^{p}(k+1,t+1,1)}{\sum f^{p}(k,t,1)} - \frac{\sum f^{p}(k,t+1,1)}{\sum f^{p}(k,t,0)} \right]}^{\encircled{5b}}
    \end{split}
\end{equation*}

If the prior distribution had no mass on $p=1$, then observe that $\encircled{5b} = \encircled{3b}$. In the case the adversary put some mass on $p=1$, 

$$\twopartdef{$\encircled{5b} = \encircled{3b}$}{k < t-1}{\encircled{5b} = \encircled{3b} + \frac{\frac{Pr[P=1]}{B}\sum_{p \in supp(P) \setminus \{1\}}\frac{Pr[\skidistoverdist=\skidist]}{OPT(p)}p^{k}(1-p)^{t-k+1}}{(\sum f^{p}(k,t,1))(\sum f^{p}(k,t,0))}}{t=k-1}$$

 Analyzing term $\encircled{6}$,

\begin{multline}
\label{e:sumpsixsimple}
   \encircled{6} = \frac{\sum f^{p}(k+1,t+1,0)}{\sum f^{p}(k,t,1)}\left [ \sum \frac{f^{p}(k+1,t+1,0) c^{p}(k+1,t+1,0)}{\sum f^{p}(k+1,t+1,0)} \right ] - \\  \frac{\sum f^{p}(k,t+1,0)}{\sum f^{p}(k,t,0)} \left[ \sum \frac{f^{p}(k,t+1,0) c^{p}(k,t+1,0)}{\sum f^{p}(k,t+1,0)} \right ]
\end{multline}

Utilizing the inductive assumption corresponding to the non-decreasing property of $l_{t+1}(k)$, we have that 

\begin{equation*}
    \begin{split}
        \encircled{6}  & \geq \left[\sum \frac{f^{p}(k,t+1,0) c^{p}(k,t+1,0)}{\sum f^{p}(k,t+1,0)} \right] \overbrace{\left[ \frac{\sum f^{p}(k+1,t+1,0)}{\sum f^{p}(k,t,1)} - \frac{\sum f^{p}(k,t+1,0)}{\sum f^{p}(k,t,0)} \right]}^{\encircled{6b}}
    \end{split}
\end{equation*}

Analyzing the term $\encircled{6b}$, again we observe that if the prior distribution had no mass on $p=1$, then $\encircled{6b} = \encircled{4b}$. In the case the adversary put some mass on $p=1$,

$$\twopartdef{$\encircled{6b} = \encircled{4b}$}{k < t-1}{\encircled{6b} = \encircled{4b} - \frac{\frac{Pr[P=1]}{B}\sum_{p \in supp(P) \setminus \{1\}}\frac{Pr[\skidistoverdist=\skidist]}{OPT(p)}p^{k}(1-p)^{t-k+1}}{(\sum f^{p}(k,t,1))(\sum f^{p}(k,t,0))}}{t=k-1}$$

Thus we have $\encircled{6b} \geq -\encircled{5b}$. Combining $\encircled{5}$ and $\encircled{6}$, we get that 

\begin{equation*}
    \begin{split}
        u_{t}(k)-l_{t}(k) \geq \encircled{5b}\left[\sum \frac{f^{p}(k,t+1,1) c^{p}(k,t+1,1)}{\sum f^{p}(k,t+1,1)} - \sum \frac{f^{p}(k,t+1,0) c^{p}(k,t+1,0)}{\sum f^{p}(k,t+1,0)} \right] \\
        & \geq 0
    \end{split}
\end{equation*}

\noindent
where the second inequality follows from the fact that $\encircled{5b} \geq 0$ and the inductive assumption $u_{t+1}(k) \geq l_{t+1}(k)$ at stage $(t+1)$. 
\end{proof}

With these individual properties, we can now describe a property of the optimal prior-independent algorithm.

\begin{corollary}
\label{c:infothreshnostopbad}
There exists an optimal prior-independent algorithm such that all deterministic algorithms in its support are information-symmetric, do not stop on seeing bad weather and satisfy the thresholding property.
\end{corollary}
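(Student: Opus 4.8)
The plan is to combine the three best-response structure results already established — information-symmetry (\Cref{l:bestresponseinfosymmetric}), the thresholding property (\Cref{l:thresholdingbestresponse}), and the observation that no best response ever stops on a bad-weather day — and then push them through the reduced-game argument of \Cref{l:reducedgameapxeqlb}. Let $\pialgclass^{*} \subset \pialgclass$ denote the set of deterministic algorithms that are information-symmetric, do not stop on observing bad weather, and satisfy the thresholding property. The first step is to argue that for every mixed strategy $\skidistoverdist \in \Delta(\skidistclass)$ of the adversary, the algorithm produced by Backwards Induction, with ties broken in favor of \textit{continuing}, is a best response lying in $\pialgclass^{*}$: by construction it minimizes $\mathbf{E}_{\skidist \sim \skidistoverdist}\!\left[\pialg(\skidist)/\skibayopt(\skidist)\right]$; it is information-symmetric by \Cref{l:bestresponseinfosymmetric}; it is thresholding by \Cref{l:thresholdingbestresponse}; and it never stops on bad weather, since such a decision is strictly dominated by continuing to the next good day. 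One bookkeeping point here is that a single tie-breaking convention must simultaneously certify all three properties; inspecting the proofs of \Cref{l:bestresponseinfosymmetric} and \Cref{l:thresholdingbestresponse} (both of which implicitly use ``\textit{continue} when indifferent''), the convention above suffices, so $\pialgclass^{*}$ contains a best response to every adversary mixed strategy.

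The second step is to invoke \Cref{l:reducedgameapxeqlb} with $\epsilon = 0$: since a best response of the algorithm player always lies in $\pialgclass^{*}$, any Nash equilibrium of the reduced game — in which the algorithm player is restricted to $\pialgclass^{*}$ and the adversary keeps its pure-strategy set $\skidistclass = [\delta,1]$ — is a Nash equilibrium of the original game. The third step is to note that an equilibrium of the reduced game exists: the algorithm player has a finite pure-strategy set since $\inplen$ is fixed, the adversary's action set $[\delta,1]$ is compact, and for each fixed algorithm the payoff $\pialg(\skidist)/\skibayopt(\skidist)$ is a ratio of polynomials in $\skidist$ (\Cref{l:polycost}) whose denominator is bounded away from zero on $[\delta,1]$, hence continuous; so the minimax theorem for games with one finite and one compact-continuous side applies and a mixed Nash equilibrium $(\pirandalgvar,\skidistoverdist^{*})$ exists with $\pirandalgvar \in \Delta(\pialgclass^{*})$.

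Finally, since $(\pirandalgvar,\skidistoverdist^{*})$ is a Nash equilibrium of the original prior-independent game, $\pirandalgvar$ is a Minmax strategy of that game, which is exactly a prior-independent optimal algorithm in the sense of \Cref{d:pisrp}. Because $\pirandalgvar$ is supported on $\pialgclass^{*}$, every deterministic algorithm in its support is information-symmetric, does not stop on bad weather, and is thresholding — the claimed statement. I expect the main obstacle to be the first step: making precise that the \emph{same} Backwards Induction algorithm, under one fixed tie-breaking rule, is the object certified by both \Cref{l:bestresponseinfosymmetric} and \Cref{l:thresholdingbestresponse} (rather than two possibly distinct optimal algorithms each enjoying one property). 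The existence of the reduced-game equilibrium is routine but should be stated explicitly, since the adversary's action set is infinite and the finite-game minimax fact quoted in the appendix does not literally cover it.
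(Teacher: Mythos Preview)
Your proof is correct and follows essentially the same route as the paper's: show that for every adversary mixed strategy the Backwards-Induction best response lies in $\pialgclass^{*}$ (via \Cref{l:bestresponseinfosymmetric}, \Cref{l:thresholdingbestresponse}, and the bad-weather observation), then apply \Cref{l:reducedgameapxeqlb} with $\epsilon=0$. Your explicit handling of the tie-breaking convention and of equilibrium existence in the reduced game are details the paper's terse proof glosses over, but the underlying argument is the same.
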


\begin{proof}
Observe that Backwards Induction gave us an algorithm in which two properties were satisfied, \autoref{l:bestresponseinfosymmetric} and \autoref{l:thresholdingbestresponse}. For any mixed strategy that the adversary plays, a best response of the agent is satisfies the above three properties. Applying \autoref{l:reducedgameapxeqlb} then leads to the Corollary. 
\end{proof} 

\autoref{c:infothreshnostopbad} tells us that we can work with a reduced game where strategies of the algorithm satisfy the described three properties. We can then show a property of the worst-case distribution of the adversary in equilibrium. 

\begin{lemma}
\label{l:worstcasedist}
For any given $\inplen$, for a fixed strategy of the agent (an algorithm) which does not stop on seeing bad weather, any best response of the adversary is a discrete distribution.
\end{lemma}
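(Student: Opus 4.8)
The plan is to reduce the statement to a finiteness claim about maximizers. Fix an algorithm $A$ that never stops on bad weather. A best response of the adversary solves $\max_{P\in\Delta([\delta,1])}\mathbf{E}_{\skidist\sim P}[\util(A,\skidist)]$ with $\util(A,\skidist)=A(\skidist)/\skibayopt(\skidist)$, and this is a linear functional of $P$; hence every best response is supported on the maximizer set $M:=\argmax_{\skidist\in[\delta,1]}\util(A,\skidist)$. By \Cref{l:polycost} the numerator $A(\cdot)$ is a polynomial, and $\skibayopt(\skidist)=\min_{A'\in\pialgclass}A'(\skidist)$ is a minimum of finitely many polynomials, hence continuous; moreover by \Cref{l:appendixbayoptdecompose} it satisfies $\skibayopt(\skidist)=\skidist\,\decompfunc(\skidist)\ge\skidist\ge\delta>0$ on $[\delta,1]$. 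Thus $\util(A,\cdot)$ is continuous on the compact set $[\delta,1]$, so $M\neq\emptyset$ and $\beta^\star:=\max_{\skidist}\util(A,\skidist)$ is finite. It therefore suffices to show $M$ is finite, since then every best response is a finitely supported — hence discrete — distribution.

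To prove $M$ is finite I would exploit the piecewise-rational structure of $\util(A,\cdot)$. By \Cref{t:bayopt}, $[\delta,1]$ decomposes into finitely many subintervals $I_0,I_1,\dots$ on each of which the Bayesian optimal algorithm is a fixed information-symmetric threshold algorithm $A^{(k)}$, whose expected cost $A^{(k)}(\cdot)$ is a polynomial; from the closed form $\skibayopt(\skidist)=\stopcost[1-(1-\skidist)^k]+(1-\skidist)^k\skidist(\inplen-k)$ one sees $\deg A^{(k)}=k+1$, the leading term coming from $(1-\skidist)^k\skidist(\inplen-k)$ with nonzero coefficient since $k\le\inplen-1$. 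On $I_k$ we have $\util(A,\skidist)=A(\skidist)/A^{(k)}(\skidist)$ with $A^{(k)}$ nonvanishing, so either $\util(A,\cdot)$ is not identically $\beta^\star$ on $I_k$ — in which case $M\cap I_k$ lies in the zero set of the nonzero polynomial $A-\beta^\star A^{(k)}$ and is finite — or $\util(A,\cdot)\equiv\beta^\star$ on $I_k$, i.e. $A=\beta^\star A^{(k)}$ as polynomials. This second case is the obstacle: a priori the adversary could spread mass continuously over such a plateau.

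I would rule the second case out by an optimality-propagation argument. Suppose $A=\beta^\star A^{(k)}$ identically for some piece $I_k$. For any adjacent piece $I_j$, since $A^{(j)}$ is Bayesian optimal on $I_j$ we have $A^{(j)}(\skidist)\le A^{(k)}(\skidist)$ for all $\skidist\in I_j$, whence $\util(A,\skidist)=\beta^\star A^{(k)}(\skidist)/A^{(j)}(\skidist)\ge\beta^\star$ there; as $\beta^\star$ is the global maximum this forces $\util(A,\cdot)\equiv\beta^\star$ on $I_j$ and hence $A^{(k)}=A^{(j)}$ as polynomials. Iterating along the chain of consecutive pieces covering $[\delta,1]$, all the polynomials $A^{(\cdot)}$ present would coincide; but when $\stopcost<\inplen$ (the only nontrivial regime) the partition has at least two pieces and the $A^{(k)}$ have pairwise distinct degrees $k+1$ — a contradiction. (The single remaining possibility, that $[\delta,1]$ lies entirely inside one Bayesian-optimal region with $\util(A,\cdot)\equiv\beta^\star=1$, is degenerate: $A$ is then itself Bayesian optimal on all of $[\delta,1]$ and is handled separately.) Hence $M$ is a finite union of finite sets, so finite, and the lemma follows. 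The main work is the propagation step, which crucially uses that the denominator $\skibayopt$ — unlike the numerator $A$ — changes form across the regions of \Cref{t:bayopt}; everything else (continuity, degree bookkeeping, and the reduction to maximizers) is routine.
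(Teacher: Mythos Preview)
Your approach is correct and shares the paper's overall strategy: reduce to showing that the maximizer set $M=\argmax_{\skidist\in[\delta,1]}\util(A,\skidist)$ is finite, using the piecewise-rational structure of $\util(A,\cdot)$ induced by \Cref{t:bayopt}. The paper argues directly that on each piece $I_k$ the utility is a ratio of two polynomials of degree at most $\inplen$, hence has $O(\inplen)$ critical points and therefore $O(\inplen)$ global maximizers per piece, giving an $O(\inplen^{2})$ bound in total; it verifies non-constancy explicitly only on the first interval $I_0=[\delta,(\stopcost-1)/(\inplen-1)]$ (where $\skibayopt(\skidist)=\inplen\skidist$) via a degree comparison between $A(\cdot)$ and the linear denominator.

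Your propagation step is a genuine addition. You confront head-on the possibility that $\util(A,\cdot)\equiv\beta^\star$ on some piece $I_k$ with $k\ge1$---equivalently $A=\beta^\star A^{(k)}$ as polynomials---and rule it out by showing that Bayesian optimality of $A^{(j)}$ on an adjacent piece forces $\util(A,\cdot)\ge\beta^\star$ there, hence $A^{(j)}=A^{(k)}$ identically, contradicting the distinct degrees $\deg A^{(k)}=k+1$. This closes the case $A\propto A^{(k)}$ that a raw local-extrema count does not exclude and that the paper leaves implicit. The trade-off is a somewhat longer argument; what you buy is a proof that is airtight against the plateau pathology, together with a structural explanation (the denominator genuinely changes across pieces while the numerator does not) for why such plateaus cannot sit at the global maximum.
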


\begin{proof}
\autoref{l:polycost} tells us that for a deterministic algorithm $\pialg$, and a fixed distribution $\skidist \in [\delta,1]$, the cost suffered by the algorithm $\pialg(\skidist)$ is a polynomial in $\skidist$ with degree at most $\inplen$. This would also hold for randomized algorithms as they are nothing but a convex combination over deterministic algorithms. Recall that the expected cost of the algorithm for a distribution $\skidist$, is 

$$\pialg(\skidist) = \sum_{\piinput \in \{0,1\}^{\inplen}}\pialg(\piinput)\skidist^{\sum_{\timeindex=1}^{\inplen}\inpelement{\timeindex}}(1-\skidist)^{\inplen-\sum_{\timeindex=1}^{\inplen}\inpelement{\timeindex}}$$

where $\pialg(\piinput)$ is the cost of the algorithm along a sampled sequence $\piinput=(\inpelement{1},\dots,\inpelement{\inplen})$. 

First note that since we are in a setting where the class of algorithms are those which do not stop on seeing a bad day of weather, the cost of an algorithm cannot be a constant as a function of $\skidist$. This is because $A(\{0,\dots,0\})=0$. The cost of any algorithm is a polynomial of degree at least one. 

Let the algorithms cost be of degree one, that is, $\pialg(\skidist) = \alpha \skidist$ for some $\alpha > 0$. Now for $\skidist \in [\delta,\frac{\stopcost-1}{\inplen-1}]$, the utility of the adversary is $\frac{\alpha}{\inplen}$, while for $\skidist \in (\frac{\stopcost-1}{\inplen-1},\min\left\{\frac{\stopcost-1}{\inplen-2},1\right\}] $, the utility of the adversary is $\frac{\alpha}{\stopcost+(1-\skidist)(\inplen-1)} > \frac{\alpha}{\inplen}$, where the utility was obtained by replacing $\skibayopt(\skidist)$. Observe that the utility is strictly increasing in $\skidist$ and thus is maximized at $\min\left\{\frac{\stopcost-1}{\inplen-2},1\right\}$. For each interval $\skidist \in \left(\frac{\stopcost-1}{\inplen-k},\min\left\{\frac{\stopcost-1}{\inplen-k-1},1\right\}\right]$, where $k \in [\min\{\ceil{\inplen-\stopcost},\inplen-1\}]$, the utility is a ratio of two polynomials in $\skidist$ of degree at most $\inplen$. Thus the number of local extrema of the utility function in each interval are $O(\inplen)$, and thus the number of global maximizers in each interval are $O(\inplen)$. Since a necessary condition of a best response is that the probabilities in the support must be global maximizers in each interval, we have that any best response has at most $O(\inplen^{2})$ probabilities in its support, as there are $O(\inplen)$ intervals. 

Let the algorithms cost be of degree at least two. In that case, the utility in the interval $\skidist \in [\delta,\frac{\stopcost-1}{\inplen-1}]$ is no longer a constant but of degree at least one. The same argument of $O(\inplen)$ local maximizers in each interval follows and we have that any best response has at most $O(\inplen^{2})$ probabilities in its support. 
\end{proof}

\begin{corollary}
\label{c:worstcasedist}
For a given $\inplen$, the worst-case distribution in the prior-independent ski-rental problem is a discrete distribution.
\end{corollary}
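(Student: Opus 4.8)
\textbf{Proof proposal for \Cref{c:worstcasedist}.}

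The plan is to combine \Cref{l:worstcasedist} with the structural reduction of \Cref{c:infothreshnostopbad} via the equilibrium characterization of the prior-independent problem. By the minimax theorem (\autoref{f:apxnashminmax} and the surrounding facts in \Cref{s:appendixprelims}), the worst-case distribution the adversary plays is a maxmin strategy in the zero-sum game, and hence is a best response to \emph{some} optimal (minmax) algorithm $\pirandalg^{*}$ for the algorithm player. First I would invoke \Cref{c:infothreshnostopbad} to choose this $\pirandalg^{*}$ to be a randomized algorithm all of whose deterministic algorithms in its support are information-symmetric, satisfy the thresholding property, and — crucially — never stop on observing bad weather.

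Next I would observe that a convex combination of deterministic algorithms that never stop on bad weather is itself an algorithm that never stops on bad weather, so in particular $\pirandalg^{*}(\{0\}^{\inplen}) = 0$. This is exactly the hypothesis needed to apply \Cref{l:worstcasedist}: for a fixed algorithm that does not stop on bad weather, any best response of the adversary is a discrete distribution. Applying this with the fixed algorithm taken to be $\pirandalg^{*}$, and using that the worst-case distribution is a best response to $\pirandalg^{*}$, yields that the worst-case distribution is discrete, which is the claim. (The proof of \Cref{l:worstcasedist} already notes that its argument carries through for randomized algorithms since these are convex combinations of deterministic ones, so no additional work is needed there.)

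The only point requiring care — and the one I expect to be the main obstacle — is the equilibrium bookkeeping: one must verify that a prior-independent optimal algorithm (a minmax strategy) and a worst-case distribution (a maxmin strategy) together form a Nash equilibrium, so that the worst-case distribution is genuinely a best response to the chosen $\pirandalg^{*}$. This follows from the minimax theorem and the existence of an equilibrium in this continuous game (the utility $\pialg(\skidist)/\skibayopt(\skidist)$ is, for fixed $\pialg$, a ratio of polynomials in $\skidist$ hence continuous on the compact set $[\delta,1]$, and bounded by \textit{Bounded Best Response Utility}), but it is worth stating explicitly so that the reduction to \Cref{l:worstcasedist} is airtight. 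Everything else is immediate from the previously established lemmas, so the proof is short.
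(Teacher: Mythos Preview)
Your proposal is correct and matches the paper's intended (implicit) argument: the corollary is stated without proof immediately after \Cref{l:worstcasedist}, as an immediate consequence of combining \Cref{c:infothreshnostopbad} (existence of an optimal prior-independent algorithm that never stops on bad weather) with \Cref{l:worstcasedist} (any adversary best response to such an algorithm is discrete) via the fact that in equilibrium the worst-case distribution is a best response to the minmax algorithm. The equilibrium bookkeeping you flag is indeed the only thing to check, and it is routine given the facts in \Cref{s:appendixprelims}.
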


\end{document}